\newif\ifTR
\newcommand{\ol}[1]{\textcolor{blue}{\ifmmode \text{[OL: #1]}\else [OL: #1] \fi}}
\newcommand{\mh}[1]{\textcolor{violet}{\ifmmode \text{[MH: #1]}\else [MH: #1] \fi}}
\newcommand{\lh}[1]{\textcolor{pink}{\ifmmode \text{[LH: #1]}\else [LH: #1] \fi}}
\newcommand{\js}[1]{\textcolor{purple}{\ifmmode \text{[JS: #1]}\else [JS: #1] \fi}}
\newcommand{\vh}[1]{\textcolor{green!70!black}{\ifmmode \text{[VH: #1]}\else [VH: #1] \fi}}
\newcommand{\yfc}[1]{\textcolor{blue!70!black}{\ifmmode \text{[YFC: #1]}\else [YFC: #1] \fi}}
\newcommand{\defiff}[0]{\stackrel{\text{\tiny def.}}{\Leftrightarrow}}
\newcommand{\defequiv}[0]{\defiff}
\newcommand{\defequal}[0]{\stackrel{\text{\tiny def.}}{=}}
\newcommand{\alphabet}[0]{\Gamma}
\newcommand{\col}[0]{\mathcal{C}}
\newcommand{\aut}[0]{A}
\newcommand{\but}[0]{B}
\newcommand{\autcon}[0]{\aut_\circ}
\newcommand{\leftsymb}[0]{\mathcal{L}}
\newcommand{\rightsymb}[0]{\mathcal{R}}
\newcommand{\singlesimplediseq}[0]{\mathrm{I}}
\newcommand{\singlediseq}[0]{\mathrm{II}}
\newcommand{\multidiseq}[0]{\mathrm{III}}
\newcommand{\combination}[0]{\mathit{comb}}
\newcommand{\bigLeft}{\mathopen{}\mathclose\bgroup\left}
\newcommand{\bigRight}{\aftergroup\egroup\right}
\newcommand{\bigo}[1]{\mathcal{O}\bigLeft(#1\bigRight)}
\newcommand{\bigtheta}[1]{\Theta\bigLeft(#1\bigRight)}
\newcommand{\ltr}[1]{\xrightarrow{#1}}
\newcommand{\run}[0]{\mathcal R}
\newcommand{\num}[0]{\#}
\newcommand{\numof}[1]{{\num{}#1}}
\newcommand{\autass}{\mathsf{Aut}}
\DeclareSymbolFont{stixletters}{LS1}{stix}{m}{it}
\DeclareSymbolFont{stixoperators}{LS1}{stix}{m}{n}
\DeclareMathSymbol{\othersubseteq}{\mathrel}{stixletters}{"D3}
\DeclareMathSymbol{\otherin}{\mathrel}{stixoperators}{"CB}
\DeclareMathSymbol{\othereq}{\mathrel}{stixoperators}{`=}
\newcommand{\concat}{\mathbin{\circ}}
\newcommand{\concatxof}[1]{\concat_{#1}}
\newcommand{\concateps}[0]{\mathbin{\concat_\epsilon}}
\newcommand{\langof}[1]{\lang(#1)}
\renewcommand{\prod}{\mathit{Product}}
\newcommand{\var}{\mathsf{var}}
\newcommand{\vars}{\mathbb X}
\newcommand{\intvars}[0]{\mathbb{I}}
\newcommand{\ziiinoodler}{\textsc{Z3-Noodler}\xspace}
\newcommand{\cvc}{\textsc{cvc}\xspace}
\newcommand{\cvcv}{\cvc{}5\xspace}
\newcommand{\ziii}{\textsc{Z3}\xspace}
\newcommand{\ziiistriiire}{\textsc{Z3str3RE}\xspace}
\newcommand{\ziiitrau}{\textsc{Z3-Trau}\xspace}
\newcommand{\kepler}[0]{\texttt{Kepler}$_{\mathtt{22}}$\xspace}
\newcommand{\ostrich}[0]{\textsc{OSTRICH}\xspace}
\newcommand{\retro}[0]{\textsc{Retro}\xspace}
\newcommand{\sloth}[0]{\textsc{Sloth}\xspace}
\newcommand{\stranger}[0]{\textsc{Stranger}\xspace}
\newcommand{\trau}[0]{\textsc{Trau}\xspace}
\newcommand{\norn}[0]{\textsc{Norn}\xspace}
\newcommand{\slog}{\textsc{Slog}\xspace}
\newcommand{\ziiinoodlerpos}[0]{\textsc{Z3-Noodler-pos}\xspace}
\newcommand{\mata}[0]{\textsc{Mata}\xspace}
\newcommand{\ziiialpha}[0]{\textsc{Z3-Alpha}\xspace}
\newcommand{\biopython}{\textsf{biopython}\xspace}
\newcommand{\django}{\textsf{django}\xspace}
\newcommand{\thefuck}{\textsf{thefuck}\xspace}
\newcommand{\poshard}{\textsf{position-hard}\xspace}
\newcommand{\oob}{\textsf{OOR}\xspace}
\newcommand{\unknown}{\textsf{Unk}\xspace}
\newcommand{\timet}{\textsf{Time}\xspace}
\newcommand{\timeall}{\textsf{TimeAll}\xspace}
\newcommand{\intterm}[0]{t_i}
\newcommand{\stringterm}[0]{t_s}
\newcommand{\intvar}[0]{x_i}
\newcommand{\stringvar}[0]{x_s}
\newcommand{\bigOOf}[1]{\mathcal{O}(#1)}
\newcommand{\len}[0]{\mathit{len}}
\newcommand{\lenof}[1]{\len(#1)}
\newcommand{\limpl}[0]{\mathrel{\Rightarrow}}
\newcommand{\liff}[0]{\mathrel{\Leftrightarrow}}
\newcommand{\assgn}[0]{\nu}
\newcommand{\parikh}[0]{\mathit{PI}}
\newcommand{\parikhofrun}[1]{\parikh_{#1}}
\newcommand{\equalword}[2]{\mathit{EqualWords}(#1, #2)}  
\newcommand{\parikhform}[0]{\mathit{PF}}
\newcommand{\parikhformof}[1]{\parikhform(#1)}
\newcommand{\parikhformtag}[0]{\parikhform_{\mathit{tag}}}
\newcommand{\parikhformtagof}[1]{\parikhformtag(#1)}
\newcommand{\lenDiff}[1]{\mathit{LenDiff}(#1)}  
\newcommand{\dsof}[1]{\langle#1\rangle}
\newcommand{\tags}[0]{\mathbb{T}}
\newcommand{\tagmis}[0]{\textbf{\textsf{M}}}
\newcommand{\tagmisone}[0]{\textbf{\textsf{M}}_1}
\newcommand{\tagmistwo}[0]{\textbf{\textsf{M}}_2}
\newcommand{\tagpos}[0]{\textbf{\textsf{P}}}
\newcommand{\taglen}[0]{\textbf{\textsf{L}}}
\newcommand{\tagreg}[0]{\textbf{\textsf{R}}}
\newcommand{\tagcop}[0]{\textbf{\textsf{C}}}
\newcommand{\tagsym}[0]{\textbf{\textsf{S}}}
\newcommand{\tagposone}[0]{\textbf{\textsf{P}}_1}
\newcommand{\tagpostwo}[0]{\textbf{\textsf{P}}_2}
\newcommand{\tagposthree}[0]{\textbf{\textsf{P}}_3}
\newcommand{\tagof}[1]{\langle #1 \rangle}
\newcommand{\tagmisoneof}[1]{\tagof{\tagmisone, #1}}
\newcommand{\tagmistwoof}[1]{\tagof{\tagmistwo, #1}}
\newcommand{\tagsymof}[1]{\tagof{\tagsym, #1}}
\newcommand{\tagposof}[1]{\tagof{\tagpos, #1}}
\newcommand{\taglenof}[1]{\tagof{\taglen, #1}}
\newcommand{\tagposoneof}[1]{\tagof{\tagposone, #1}}
\newcommand{\tagpostwoof}[1]{\tagof{\tagpostwo, #1}}
\newcommand{\tagposthreeof}[1]{\tagof{\tagposthree, #1}}
\newcommand{\formdiseqlen}[0]{\varphi_{\mathit{len}}}
\newcommand{\formdiseqposij}[0]{\varphi_{\mathit{pos}(i,j)}}
\newcommand{\formdiseqsym}[0]{\varphi_{\mathit{sym}}}
\newcommand{\formdiseqmis}[0]{\varphi_{\mathit{mis}}}
\newcommand{\eqconstr}[0]{\mathcal{E}}
\newcommand{\intconstr}[0]{\mathcal{I}}
\newcommand{\regconstr}[0]{\mathcal{R}}
\newcommand{\diseqconstr}[0]{\mathcal{P}}
\newcommand{\formatom}[0]{\varphi_{\mathit{atom}}}
\newcommand{\transvarof}[1]{\numof{#1}}      
\newcommand{\initvarof}[1]{\gamma^I_{#1}}    
\newcommand{\finalvarof}[1]{\gamma^F_{#1}}   
\newcommand{\formkirchhoff}[0]{\varphi_{\mathit{Kirch}}}
\newcommand{\forminitstates}[0]{\varphi_{\mathit{Init}}}
\newcommand{\formfinalstates}[0]{\varphi_{\mathit{Fin}}}
\newcommand{\formspan}[0]{\varphi_{\mathit{Span}}}
\newcommand{\spanningvarof}[1]{\sigma_{#1}}
\newcommand{\posregsat}[0]{\textsc{PosRegSAT}\xspace}
\newcommand{\clNP}[0]{\textsc{NP}\xspace}
\newcommand{\clNPhard}[0]{\textsc{NP-hard}\xspace}
\newcommand{\clPSPACE}[0]{\textsc{PSpace}\xspace}
\newcommand{\clEXP}[0]{\textsc{Exp}\xspace}
\newcommand{\clNEXPTIME}[0]{\textsc{NExpTime}\xspace}
\newcommand{\cltwoEXPSPACE}[0]{\textsc{2-ExpSpace}\xspace}
\newcommand{\clP}[0]{\textsc{PTime}\xspace}
\newcommand{\lang}{L}
\newcommand{\RGBcircle}[1]{\ensuremath{{\color[RGB]{#1}\bullet}}}
\newcommand{\smtlib}[0]{\texttt{SMT-LIB}\xspace}
\newcommand{\prefix}[0]{\mathit{prefixof}}
\newcommand{\prefixof}[2]{\prefix(#1, #2)}
\newcommand{\suffix}[0]{\mathit{suffixof}}
\newcommand{\suffixof}[2]{\suffix(#1, #2)}
\newcommand{\contains}[0]{\mathit{contains}}
\newcommand{\containsof}[2]{\contains(#1, #2)}
\newcommand{\strat}[0]{\mathit{str.at}}
\newcommand{\stratof}[2]{\strat(#1, #2)}
\newcommand{\notstrat}[0]{\neg \strat}
\newcommand{\notsuffix}[0]{\neg\suffix}
\newcommand{\notprefix}[0]{\neg\prefix}
\newcommand{\notcontains}{\neg \contains}
\newcommand{\notcontainsof}[2]{\notcontains(#1, #2)}
\newcommand{\nat}[0]{\mathbb{N}}
\newcommand{\integers}[0]{\mathbb{Z}}
\newcommand{\offset}[0]{\kappa}
\newcommand{\tagAut}[0]{\aut_{\mathit{tag}}}
\newcommand{\lentag}[0]{\mathsf{LenTag}}
\newcommand{\notcontainsLIA}[0]{\varphi^{\mathit{NC}}}
\newcommand{\fFair}[0]{\varphi_{\mathit{Fair}}}
\newcommand{\fConsist}[0]{\varphi_{\mathit{Consistent}}}
\newcommand{\fParikh}[0]{\varphi_{\mathit{Parikh}}}
\newcommand{\fCopies}[0]{\varphi_{\mathit{Copies}}}
\newcommand{\fMismatch}[1]{\varphi^{#1}_{\mathit{Mismatch}}}
\newcommand{\fLen}[1]{\varphi^{#1}_{\mathit{Len}}}
\newcommand{\fPredSat}[1]{\varphi^{#1}_{\mathit{Sat}}}
\newcommand{\symbof}[1]{#1}
\newcommand{\letter}[1]{\ensuremath{#1}}
\newcommand{\ackPhdTalent}[0]{
\noindent
The work of Michal Hečko, a~Brno Ph.D.\ Talent Scholarship
\raisebox{-6pt}{\protect\includegraphics[height=17pt]{}}
Holder, is funded by the Brno City Municipality.\xspace
}
\DeclareRobustCommand{\shortto}{%
  \mathrel{\mathpalette\short@to\relax}%
}
\DeclareRobustCommand{\shortminus}{%
  \mathrel{\mathpalette\short@minus\relax}%
}
\newcommand{\short@to}[2]{%
  \mkern2mu
  \clipbox{{.5\width} 0 0 0}{$\m@th#1\vphantom{+}{\rightarrow}$}%
}
\newcommand{\short@minus}[2]{%
  \mkern2mu
  \clipbox{{.5\width} 0 0 0}{$\m@th#1\vphantom{+}{-}$}%
}
\newcommand{\labeledtoCnt}[1]{{{\shortminus}\hspace{-1.0pt}\raisebox{0.16ex}{$\scriptstyle( #1\hspace{-0.28pt})$}\hspace{-1.6pt}{\shortto}}}
\newcommand{\labeledto}[1]{{{\shortminus}\hspace{-2pt}\raisebox{0.16ex}{$\scriptstyle\{ #1\hspace{-0.28pt}\}$}\hspace{-2.2pt}{\shortto}}}
\newcommand{\scriptlabeledto}[1]{{{\shortminus}\hspace{-1.0pt}\raisebox{0.12ex}{$\scriptscriptstyle\{ #1\hspace{-0.28pt}\}$}\hspace{-1.6pt}{\shortto}}}
\newcommand{\scriptlabeledtoCnt}[1]{{{\shortminus}\hspace{-1.0pt}\raisebox{0.12ex}{$\scriptscriptstyle #1\hspace{-0.28pt} $}\hspace{-1.6pt}{\shortto}}}
\newcommand\move[3]{
\mathchoice
{#1\,\labeledto{#2}\,#3}
{#1\labeledto{#2}#3}
{#1\scriptlabeledto{#2}#3}
{#1\scriptlabeledto{#2}#3}
}
\newcommand\cnmove[3]{
\mathchoice
{#1\,\labeledtoCnt{#2}\,#3}
{#1\labeledtoCnt{#2}#3}
{#1\scriptlabeledtoCnt{#2}#3}
{#1\scriptlabeledtoCnt{#2}#3}
}
\crefname{claimcounter}{Claim}{Claims}
\crefname{section}{Sec.}{Secs.}
\begin{document}

\tikzstyle{state} = [draw,circle,inner sep=2pt, scale=0.7, fill=white]
\tikzstyle{runState} = [rounded corners=3, fill=teal!30]
\tikzstyle{runArrows} = [line width=0.22mm]
\tikzstyle{tags} = [rounded corners=3, fill=gray!30, scale=0.6, text width=1cm, align=center]  
\tikzstyle{autLabel} = [scale=1.2]
\tikzstyle{autBoundingBox} = [draw, rounded corners=3, dashed]
\tikzstyle{edge} = [line width=0.25mm, ->]
\tikzstyle{nonSamplingEdge} = [edge, bend right=25]
\tikzstyle{epsEdge} = [edge, dashed]
\tikzstyle{samplingEdge} = [edge, red!70]
\tikzstyle{samplingEdgeLabel} = [black, fill=red!40, text width=14mm]





\title{A Uniform Framework for Handling Position Constraints in String Solving\ifTR{} (Technical Report) \fi}

\author{Yu-Fang Chen}
\orcid{0000-0003-2872-0336}
\affiliation{%
  \institution{Academia Sinica}
  \city{Taipei}
  \country{Taiwan}
}
\email{yfc@iis.sinica.edu.tw}

\author{Vojtěch Havlena}
\orcid{0000-0003-4375-7954}
\affiliation{%
  \institution{Brno University of Technology}
  \city{Brno}
  \country{Czech Republic}
}
\email{ihavlena@fit.vutbr.cz}

\author{Michal Hečko}
\orcid{0009-0003-2428-8547}
\affiliation{%
  \institution{Brno University of Technology}
  \city{Brno}
  \country{Czech Republic}
}
\email{ihecko@fit.vutbr.cz}

\author{Lukáš Holík}
\orcid{0000-0001-6957-1651}
\affiliation{%
  \institution{Brno University of Technology}
  \city{Brno}
  \country{Czech Republic}
}
\affiliation{%
  \institution{Aalborg University}
  \city{Aalborg}
  \country{Denmark}
}
\email{holik@fit.vutbr.cz}

\author{Ondřej Lengál}
\orcid{0000-0002-3038-5875}
\affiliation{%
  \institution{Brno University of Technology}
  \city{Brno}
  \country{Czech Republic}
}
\email{lengal@fit.vutbr.cz}

\begin{abstract}
We introduce a~novel decision procedure for solving the class of \emph{position
string constraints}, which includes string disequalities, $\notprefix$,
$\notsuffix$, $\strat$, and $\notstrat$.
These constraints are generated frequently in almost any application of string
constraint solving.
Our procedure avoids expensive encoding of the constraints to word equations
and, instead, reduces the problem to checking
conflicts on positions satisfying an integer constraint obtained
from the Parikh image of a~polynomial-sized finite automaton with a~special structure.
By the reduction to counting, solving position constraints becomes $\clNP$-complete
and for some cases even falls into~$\clP$.
This is much cheaper than the previously used techniques, which either used reductions generating word
equations and length constraints (for which modern string solvers use
exponential-space algorithms) or incomplete techniques.
Our method is relevant especially for automata-based string solvers,
\cbstart
which have recently achieved the best results in terms of practical efficiency,
\cbend
generality, and completeness guarantees.
This work allows them to excel also on position constraints, which used to be
their weakness.
Besides the efficiency gains, we show that our framework may be extended to
solve a~large fragment of $\notcontains$ (in $\clNEXPTIME$), for which decidability has been
long open, and gives a hope to solve the general problem. 
\cbstart
Our implementation of the technique within the \ziiinoodler solver significantly improves its performance on position constraints.
\cbend
%
%
%
%
\end{abstract}

\begin{CCSXML}
<ccs2012>
   <concept>
       <concept_id>10003752.10003766.10003776</concept_id>
       <concept_desc>Theory of computation~Regular languages</concept_desc>
       <concept_significance>300</concept_significance>
       </concept>
   <concept>
       <concept_id>10003752.10003790.10003794</concept_id>
       <concept_desc>Theory of computation~Automated reasoning</concept_desc>
       <concept_significance>300</concept_significance>
       </concept>
   <concept>
       <concept_id>10003752.10003790.10002990</concept_id>
       <concept_desc>Theory of computation~Logic and verification</concept_desc>
       <concept_significance>300</concept_significance>
       </concept>
   <concept>
       <concept_id>10002978.10002986.10002990</concept_id>
       <concept_desc>Security and privacy~Logic and verification</concept_desc>
       <concept_significance>300</concept_significance>
       </concept>
 </ccs2012>
\end{CCSXML}

\ccsdesc[300]{Theory of computation~Regular languages}
\ccsdesc[300]{Theory of computation~Automated reasoning}
\ccsdesc[300]{Theory of computation~Logic and verification}
\ccsdesc[300]{Security and privacy~Logic and verification}

\ifTR \else
\keywords{string constraints, stabilization, word equations, SMT solving, disequality, length constraints, regular languages, monadic decomposition, not contains}  
\fi

\maketitle

\vspace{-0.0mm}
\section{Introduction}\label{sec:introduction}
\vspace{-0.0mm}

Solving string constraints (string solving) has been motivated initially 
by the analysis of string manipulations in programs, 
especially in preventing security risks such as cross-site scripting or SQL injection in web-applications.
\cbstart
In the last two decades,
the lively research community has managed to develop a number of string constraints solvers (overviewed in \cref{sec:related}) that may be capable of realizing this goal.
String constraints solvers are being integrated within
SMT solvers such as \ziii, \cvc{}4/5, or
Princess~\cite{Ruemmer2008,cvc4,cvc5,z3}, and the string
category has been introduced in the SMT competition~\cite{smtcomp24}.
\cbend
Since string solving is ubiquitous and string constraint logics are general,
string solving keeps finding also new applications,
such as analysing Simulink models~\cite{Ishii2022}, verifying UML models \cite{Jha2023},  
or checking cloud access policies at Amazon Web Services~\cite{Rungta2022}. 

The competition of string-solving methods is lively. While it was long dominated by the strongest industrial grade SMT-solvers \ziii and \cvc{}4/5, the leading positions have been recently taken by approaches based on finite automata. They include \ziiinoodler \cite{ChenCHHLS24}, 
a~recent winner of string categories of SMT-COMP~\cite{smtcomp24strings},
\ostrich~\cite{ChenFHHHKLRW22,ChenHLRW19}, which supports the richest palette of string constraints with strong completeness guarantees, \ziiistriiire \cite{Z3str3RE,BerzishDGKMMN23}, and loosely also one of the engines of \ziii~\cite{VeanesDerivatives21}.
Automata-based solvers excel especially in handling complex regular constraints with word equations and related constraints, such as transducer constraints or ReplaceAll.

%
\paragraph{Position constraints.}
In this paper, we aim at remedying a weakness of automata-based techniques,
which is handling a class of constraints that we call \emph{position constraints}. 
The so-called \emph{existential} position constraints can be reduced to checking disequality 
of letters at two specific positions in strings (called a~\emph{mismatch}),
where the positions are determined through measuring lengths of certain
sub-strings---i.e., counting their letters and comparing the counts. 
The prime example of such constraints are string disequalities (negated word
equations) like $xyx \neq yxz$, which can be reduced to the existence of
mismatching positions in the two strings that are at the same distance from the
\cbstart
strings' start.\footnote{On a~high-level, we might write such a formula as $\exists i \in \nat\colon i \leq \max(\lenof{xyx}, \lenof{yxz}) \land (xyx)[i] \neq (yxz)[i]$ where $\len$ denotes the length of the corresponding string obtained by substituting into the variables and $t[i]$ denotes the $i$-th position in the string given by the term~$t$.}
\cbend
Other existential position constraints include $\notprefix$, $\notsuffix$,
$\strat$, and $\notstrat$.  
A special place among position constraints is occupied by $\notcontains(t_x,
t_y)$ where $t_x$ and $t_y$ are concatenations of variables, which does not
reduce to a~simple existence of mismatching positions (i.e., it is not an existential
constraints), but is equivalent to a~string formula with \emph{quantifier
alternation}, much harder to solve than existential formulae.
Informally, the formula says that there exists a~string assignment to variables
such that for the two strings~$w_x$ and~$w_y$ obtained by concatenating
assignments of the variables in~$t_x$ and~$t_y$, it holds that for all
alignments of the start of $w_x$ inside $w_y$, the letter at some position of
$w_x$ does not match the letter at the aligned position in $w_y$.
\cbstart
The formula falls into the $\forall\exists$ fragment of the string theory, which is undecidable~\cite{Marchenkov82,Durnev95};
the decidability of $\notcontains$ is a~known open problem~\cite{Notsubstring}.
\cbend
Solvers use heuristics that handle only very simple cases or rough approximations. 

Position constraints are practically relevant, for instance, a~disequality may be generated in symbolic execution
at every else-branch of a program that tests the equality of strings. 
Some solvers
may be able to guess the right solution for satisfiable position constraints with ease (CVC5 excels in this), 
but especially unsatisfiable position constraints may be much harder;
no current solver can handle them satisfactorily.

The automata-based approach particularly reduces position constraints into combinations of equations and length constraints. The work~\cite{ChenCHHLS23} even shows a version of this reduction that allows one to freely add disequalities to one of the largest known decidable fragments of basic constraints (word equations, regular membership, and string length constraints), the \emph{chain-free fragment} \cite{ChainFree}, 
while preserving decidability. 
However, since equations are expensive, 
the price of the reduction may be very high.
Indeed, essentially all string solvers use exponential-space algorithms to deal with word equations,
including the automata-based ones, and the problem is opaque even in theory: 
equations with regular memberships are in $\clPSPACE$~\cite{Plandowski99,Jez16},
but decidability of their combination with length
constraints is a long-standing open problem.


\paragraph{The gist of our approach to solving position constraints.}
In contrast, the procedure we propose in this paper starts only \emph{after}
\cbstart
the rest of the constraint is solved---transformed into the \emph{monadic
decomposition}~\cite{HagueLRW20,ChenHLRW19,VeanesBNB17}, i.e., a~formula
obtained by transforming word equations into regular
constraints---and then solves
\cbend
position constraints quickly and efficiently by other means. 
%
\newcommand{\problem}{MP\xspace}
The specific technical problem we are solving is therefore \emph{satisfiability of Monadic-Position constraints}: 
\begin{itemize}
\item[\bf (\problem)]
{\bfseries
Satisfiability of a quantifier-free conjunction of a monadic constraint (a conjunction of
\cbstart
regular membership constraints and linear integer arithmetic (LIA)
constraints over string lengths, a.k.a. length constraints) and a conjunction of
position constraints with string terms as parameters (e.g., $xyx\neq yxy$,
$\notcontains(xyx,yxy)$). }
\cbend
\end{itemize}
Monadic decomposition is at the heart of how automata-based solvers, such as \ziiinoodler and \ostrich, solve string constraints. 
The fact that position constraints can be ignored in this process is the main distinguishing feature of our approach and the key to its efficiency. 
Translating the entire \problem to a monadic constraint first, as automata-based solvers currently do,
\cbstart
takes \emph{exponential} space,
\cbend
while our algorithm runs in \emph{$\clNP$ for existential position constraints}
and in \emph{$\clP$ for a single~one}. 

Moreover, our framework also allows to make a step towards showing decidability of \problem with $\notcontains$. 
Namely, it allows to translate $\notcontains(t,t')$ combined with a~monadic constraint
to LIA with a nested universal quantifier
when the languages constraining variables in terms~$t$ and~$t'$ are \emph{flat}
(expressible as a~concatenation of a fixed number of parts, where every part is
an iteration of a~single word).
The resulting quantified LIA formula can be solved by an off-the-shelf solver
(SMT-solvers seem to be capable of solving the obtained formulae efficiently).
To the best of our knowledge,
our algorithm is the first one for exact reasoning with $\notcontains$ that is
complete for a~large and interesting fragment of the problem.

\paragraph{The technical basis of our approach.}
Technically, given a set of position constraints and automata-represented regular membership constraints in the monadic decomposition, our procedure derives a~LIA constraint that relates positions of mismatches and lengths of the strings assigned to variables. 
Repetition of variables, e.g., $xyx\neq yxy$ or $\notcontains(xyx,yxy)$, the main limiting factor of decidable fragments, is handled too:
The length of every variable is extracted from a single run of its automaton, and its contribution to a mismatch position in the string is counted with the multiplicity equal to the number of the occurrences of the variable that precede the mismatch position.
For a single existential position constraint, the LIA formula is constructed by taking the formula for the Parikh image of an automaton with a~specific structure, enriched with constraints that enforce seeing a~mismatch of symbols at at the proper positions.

%

The construction is more complex with multiple position constraints that share
variables. The straightforward approach would be to enumerate an exponential
number of cases corresponding to all orders of mismatches in the position
constraints.
E.g., for the constraint $D_a \land D_b \land D_c$, we would need to consider orders like
$(D_a^1, D_a^2, D_b^1, D_b^2, D_c^1, D_c^2)$,
$(D_a^1, D_b^1, D_c^1, D_a^2, D_c^2, D_b^2)$, etc., where $D_x^i$ denotes the $i$-th mismatch in~$D_x$.
The generated LIA formula would then be exponential (more precisely in
$2^{\Theta(n \log n)}$) to the number of position constraints.
Our $\clNP$ algorithm for \problem depends on the discovery of an equivalent polynomial encoding.
The encoding combines the Parikh image of a~polynomial-size automaton that generates any number of mismatch positions in any order with an additional arithmetic constraint that rules out the ``wrong'' orderings.  


Moreover, we show that \problem with a \emph{single} existential position constraint is in $\clP$ by a reduction to 0-reachability in a~one-counter automaton~\cite{AlurC11,melskireps1997}. 

\vspace{-2mm}
\paragraph{Practical impact on performance.}
Our experiments show that our framework substantially improves the speed and
effectiveness of the automata approach to string constraint solving whenever
position constraints are involved.
On position-heavy cases from our benchmark (obtained from symbolic executions
of large software projects), the string solver \ziiinoodler extended with our
decision procedure for position constraints is the fastest of all solvers and
\cbstart
has less timeouts than \cvcv, the only other solver that handles
\cbend
these benchmarks well.
The performance of \cvcv and the modified \ziiinoodler is orthogonal,
reflecting the large difference between the two approaches, and together they
solve all but 10 out of the $\sim$150,000 benchmarks.
\ziiinoodler also demonstrates its ability to solve hard instances of
$\notcontains$ on an artificial benchmark made to test solvers on constraints
involving $\notcontains$, where all other solvers fail.

\vspace{-2mm}
\paragraph{Summary of contributions.}
Our contributions can be summarised as follows:  
\begin{enumerate}

\item 
We propose a procedure for handling position constraints in the automata-based approach efficiently, without the need of including them into a~procedure for monadic decomposition. 
%
\item
We show that \problem is in $\clNP$ for existential position constraints and
in $\clP$ for a single one.
This contrasts with the exponential cost of monadic decomposition. 

\item
We propose the first algorithm for exact reasoning with $\notcontains$ that is guaranteed to work on a large and interesting fragment of the problem,
namely, the \problem where the regular languages restricting variables within the terms in the arguments of $\notcontains$ are \emph{flat}, and show that the problem is in $\clNEXPTIME$.
\item
Experimental results show that our techniques significantly improve the performance of one of the fastest string solvers, \ziiinoodler, on position constraints. 

\end{enumerate}

\cbstart
\paragraph{Context of our work.}
Our focus is on the basic constraints that must be handled by a universal
practical string solver: combinations of word equations, regular constraints,
and length constraints (as witnessed, e.g., by the SMT-COMP
benchmark~\cite{smtlib2024}).
Besides the \clPSPACE-completeness of classical decidability of equations with
regular constraints by Makanin, Plandowski, and Je\.{z}~\cite{makanin,Plandowski99,Jez16}, decidability of the
full basic logic is a long-standing open problem, and it is open even when
equations are quadratic, where only two occurrences of every variable are
allowed~\cite{LinQuadra21}.
The known undecidability results are concerned with extensions of this logic
with other than basic constraints (here called extended constraints), and are
thus only marginally relevant to our current work.
These fragments concern unrestricted ReplaceAll, transducer-defined relations,
string-integer conversions, and other extended
constraints~\cite{AnthonyReplaceAll2018,ChenFHHHKLRW22,ChenHHHLRW20,ChenHLRW19,AnthonyTowards2016,day-str-dec-undec-18,DayGGKM24}.
The approach that started at~\cite{AutomataSplitting} led to a discovery of the
largest decidable fragment of the basic constraints, the chain-free
fragment~\cite{ChainFree}.
It generalizes the earlier acyclic fragment of~\cite{AnthonyTowards2016} and the larger
straight-line fragment of~\cite{AutomataSplitting}.
The other solvers, especially \ziii and \cvcv, which use different approaches
usually revolving around the congruence closure~\cite{MouraB07}, can handle
similar class of constraints in practice, but do not come with strong
theoretical guarantees.
Limited extensions of straight-line or chain-free logics with extended
constraints were shown decidable
in~\cite{AnthonyReplaceAll2018,ChenFHHHKLRW22,ChenHHHLRW20,ChenHLRW19,AnthonyTowards2016,day-str-dec-undec-18,DayGGKM24}.
The automata-based approach transforms equations and regular constraints to
a~monadic decomposition (a~disjunction of conjunctions of regular constraints
of at most doubly exponential size), which is in turn transformed to a~LIA
formula and solved using a~LIA solver.
The decidable fragments are all based on prohibiting forms of ``\emph{cyclic
dependencies}'' of string variables in word equations, and range from
\clPSPACE-complete to \cltwoEXPSPACE, depending on the allowed extended
constraints.
The position constraints that we discuss here are in this framework normally
solved by a~reduction to the basic constraints.
Essentially, after obtaining doubly exponential monadic decomposition of the
other basic constraints, the original approach needs to run the doubly
exponential space procedure to solve the position constraints.
Our work allows to replace the second \cltwoEXPSPACE phase by an
\clNP-algorithm (or \clNEXPTIME for our fragment of $\notcontains$), or even by
a~\clP one in the simplest case of one disequality.
\cbend

\newcommand{\figSemantics}[0]{
\begin{figure}[t]
\begin{center}
\begin{minipage}{10cm}
\begin{align*}
  \assgn \models {}& \stringvar \in \lang & {}\liff {} & \assgn(\stringvar) \in \lang, \\
  \assgn \models {}& \intterm \leq \intterm' & {}\liff {} & \assgn(\intterm) \leq \assgn(\intterm'), \\
  \assgn \models {}& \stringterm = \stringterm' & {}\liff {} & \assgn(\stringterm) = \assgn(\stringterm'), \\
  \assgn \models {}& \stringvar = \stratof{\stringterm}{\intterm} &{} \liff {} & \begin{cases}
     \assgn(\stringvar) = w_s[\assgn(\intterm)] \land \assgn(\stringterm) = w_s
 &\text{if } \assgn(\intterm) < |w_s|, \\
    \assgn(\stringvar) = \epsilon  &\text{otherwise},
  \end{cases}\\
  \assgn \models {}& \prefixof{\stringterm}{\stringterm'} &{} \liff {} & \exists z_p \in \alphabet^*\colon \assgn(\stringterm) \concat z_p = \assgn(\stringterm'), \\
  \assgn \models {}& \suffixof{\stringterm}{\stringterm'} &{} \liff {} & \exists z_s \in \alphabet^*\colon z_s \concat \assgn(\stringterm) = \assgn(\stringterm'), \text{and}\\
  \assgn \models {}& \containsof{\stringterm}{\stringterm'} &{} \liff {} & \exists z_c,z_c' \in \alphabet^*\colon z_c \concat \assgn(\stringterm)\concat z_c' = \assgn(\stringterm').
\end{align*}
\end{minipage}
\end{center}
\caption{Semantics of atomic predicates for a~variable assignment~$\assgn$}
\label{fig:semantics}
\end{figure}
}

\vspace{-2.0mm}
\section{Preliminaries}\label{sec:preliminaries}
\vspace{-0.0mm}

We fix a~finite non-empty \emph{alphabet}~$\alphabet$.
A~\emph{word} or \emph{string} over~$\alphabet$ is a~(finite) sequence of symbols $w = a_0 \ldots a_{n-1}$
from~$\alphabet$, its \emph{length}~$|w|$ is~$n$, and the symbol at index~$0 \leq
i < n$ is denoted as~$w[i] = a_i$.
We use~$\epsilon$ to denote the \emph{empty word} and $\concat$ denotes string concatenation
(we sometimes omit the operator, i.e., $a \concat b = ab$).
$\nat$ denotes the set of natural numbers $\{0, 1, \ldots\}$.

A~\emph{nondeterministic finite automaton} (NFA) is a~tuple
$\aut = (Q, \Delta, I, F)$ where
$Q$~is a~finite set of \emph{states},
$\Delta \subseteq Q \times \alphabet \times Q$ is a~\emph{transition relation}
with transitions denoted as $q \ltr a p$ for $q,p \in Q$ and $a\in\alphabet$,
and~$I, F \subseteq Q$ are sets of \emph{initial} and \emph{final} states respectively.
A~\emph{run}~$\run$ of~$\aut$ over a~word~$w = a_1 \ldots a_n \in \alphabet^*$ is
a~sequence of transitions $q_0 \ltr{a_1} q_1 \ltr{a_2} \cdots
\ltr{a_n} q_n$ s.t.~$q_0 \in I$ and $\forall 1 \leq i \leq n\colon q_{i-1}
\ltr{a_i} q_i \in \Delta$.
The run~$\run$ is \emph{accepting} if $q_n \in F$, and the language of~$\aut$
is the set $\langof \aut = \{w \in \alphabet^* \mid \textrm{there exists an
accepting run of } \aut \textrm{ over } w\}$.
A~language $\lang \subseteq \alphabet^*$ is \emph{regular} iff there exists an
NFA~$\aut$ such that $\lang = \langof \aut$.
We sometimes define regular languages using regular expressions with the
standard textbook notation.

Given a~set~$U$, let~$\numof U = \{\numof u \mid u \in U\}$ denote the set of elements obtained from~$U$'s elements by prepending them with~$\num$.
The \emph{Parikh image} of a~run~$\run$, denoted as $\parikhofrun \run$, is
a~mapping $\parikhofrun \run\colon \numof\Delta \to \nat$
such that $\parikhofrun{\run}(\numof t)$
denotes the number of occurrences of transition~$t$ in~$\run$.
We say that~$\aut$ is \emph{flat}\footnote{We note that our notion of
\emph{flatness} differs from the one from~\cite{AbdullaACDHRR17} and is similar
to the one from~\cite{LerouxS05}.}
if for every two runs~$\run_1$ and~$\run_2$ it holds that if
$\parikhofrun{\run_1} = \parikhofrun{\run_2}$, then $\run_1 = \run_2$.
Structurally, this means that flat automata have the form of \emph{directed
acyclic graphs} (DAGs) connecting \emph{simple} (i.e., non-nested) \emph{loops}.
We say that a~regular language~$\lang$ is flat iff there exists a~flat NFA~$\aut$
s.t.~$\lang = \langof \aut$.
For instance, the language $(ab)^*c((ab)^* + (ba)^*)$ is flat, while the
language $(a+b)^*$ is not flat.

Let~$\vars$ and $\intvars$ be the sets of \emph{string} and \emph{integer variables}. String formulae are of the form:
%
%
\begin{align*}
\varphi ::={} ~~& \varphi \land \varphi \mid
                  \varphi \lor \varphi \mid
                  \neg \varphi \mid
                  \formatom \\
\formatom ::={} ~~& \stringvar \in \lang \mid
                  \intterm \leq \intterm \mid
                  \stringterm = \stringterm \mid 
                  \stringvar = \stratof{\stringterm}{\intterm} \mid
                  \prefixof{\stringterm}{\stringterm} \mid
                  \suffixof{\stringterm}{\stringterm} \mid
                  \containsof{\stringterm}{\stringterm} \\
\stringterm ::={}~~&\stringvar \mid
                    \stringterm \concat \stringterm \\
\intterm ::={}~~& \intvar \mid
                  k \mid
                  \lenof{\stringvar} \mid
                  \intterm + \intterm
\end{align*}
where
$\formatom$ is an atomic formula,
$\lang$ is a~regular language (given by a~regular expression or an NFA),
$\stringterm$~is a~string term consisting of a concatenation of string
variables~$\stringvar \in \vars$\footnote{A~string literal $u \in \alphabet^*$ can
be encoded by a~new string variable~$x_u$ and a~regular constraint $u \in L_u$
for $L_u = \{u\}$.}, and
$\intterm$ is an integer term composed of sums of integer variables~$\intvar$,
integers $k \in \integers$, and lengths of string variables~$\lenof{\stringvar}$.

\figSemantics   

The semantics of formulae is defined as follows.
A~(variable) \emph{assignment} is a~mapping $\assgn\colon(\vars \to \alphabet^*)
\cup (\intvars \to \integers)$ and we lift~$\assgn$ to string
terms~$\stringterm$ and integer terms~$\intterm$ in the usual way ($\stringterm
\concat \stringterm$ as string concatenation and $\intterm + \intterm$ as integer
addition), with $\assgn(\lenof{\stringvar})$ being interpreted as the length of
the string~$\assgn(\stringvar)$.
Semantics of atomic predicates is given in~\cref{fig:semantics}.

When used within the~DPLL(T) framework, it is sufficient to consider only conjunctions of atomic formulae and their negations.
A \emph{normal form} of such formula is
$\eqconstr \land \regconstr \land \intconstr \land \diseqconstr$
%
where
\begin{itemize}
  \item  $\eqconstr$ is a~conjunction of word equations $\stringterm = \stringterm$,
  \item  $\regconstr$ is a~conjunction of regular memberships
    $\stringvar \in \lang$\footnote{A~regular \emph{non-membership} constraint can be
    translated into a~\emph{membership} constraint for a~complement language.}
    such that for every $\stringvar \in \vars$, there is exactly one regular
    membership constraint in~$\regconstr$ (and we use $\langof{\stringvar}$ to denote it),
  \item  $\intconstr$ is a~conjunction of integer constraints $\intterm \leq
    \intterm$ where $\intterm$'s do not contain $\lenof{\stringterm}$ terms, and
  \item $\diseqconstr$ is a~conjunction of position constraints of the following form:
    \begin{align*}
      \stringterm \neq \stringterm \mid
      \intvar = \lenof{\stringvar} \mid
      \stringvar = \stratof{\stringterm}{\intterm} \mid {}&
      \stringvar \neq \stratof{\stringterm}{\intterm} \mid {} \\
      \neg\prefixof{\stringterm}{\stringterm} \mid
      \neg\suffixof{\stringterm}{\stringterm} \mid {}&
      \neg\containsof{\stringterm}{\stringterm}
    \end{align*}
\end{itemize}
%
%
We transform a conjunction of literals into
the normal form in the following way:
\begin{inparaenum}[(i)]
  \item  We substitute every non-negated occurrence of the predicates
    $\prefixof{u_p}{v_p}$, $\suffixof{u_s}{v_s}$, and $\containsof{u_c}{v_c}$
    with the word equations $v_p = u_p z_p$, $v_s = z_s u_s$, and $v_s = z_c
    u_c z'_c$ respectively for fresh string variables $z_p$, $z_s$, $z_c$, and
    $z'_c$ (note that we cannot do a~similar thing for negated occurrences,
    since we would need to introduce universal quantifiers for the
    $z$-variables).
  \item  For every string variable~$x$, we compute a~single NFA that represents
    all regular membership constraints for~$x$.
\end{inparaenum}
\cbstart
We will use $|\regconstr|$ to denote the sum of numbers of states of all NFAs
used for encoding the~$\regconstr$ constraint.
\cbend
%

\vspace{-0.0mm}
\section{Overview}\label{sec:label}
\vspace{-0.0mm}

Given a~formula in the normal form $\eqconstr \land \regconstr \land
\intconstr \land \diseqconstr$ in the considered fragment, the main idea of our
approach is the following (focusing on the~$\diseqconstr$ part).
Other automata-based approaches usually try to get rid of the
predicates in~$\diseqconstr$ by transforming them into word equations and
length constraints (e.g.~\cite{ChenCHHLS23,HavlenaHLS24,ChenHHHLRW20}), thus making their
word equations potentially much harder to process.
Handling word equations is the 
most demanding task in string solving, as the best known algorithms
for dealing with word equations work in
$\clPSPACE$~\cite{Plandowski99,Jez16} and are not practical\footnote{Existing string solvers usually deal with word equations by
incomplete algorithms that do not guarantee termination.} (and in the presence of length
constraints, the decidability of the problem is currently unknown).

%
%

In our approach,
we take care of the constraints
in~$\diseqconstr$ only after the word equations in~$\eqconstr$ have been
processed and the obtained constraint $\regconstr' \land \intconstr' \land
\diseqconstr'$ contains no more word equations.
This is achieved by the stabilization-based procedure introduced in~\cite{ChenCHHLS23}, which transforms $\eqconstr \land \regconstr \land \intconstr$ 
into a~disjunction of constraints $\regconstr' \land \intconstr'$ extended with 
a substitution mapping variables from the original constraints to a concatenation of 
fresh variables occurring in~$\regconstr'$ and $\intconstr'$.
The transformation comes with the additional property that the resulting
constraint is a~\emph{monadic
decomposition}~\cite{HagueLRW20,ChenHLRW19,VeanesBNB17}, which means that each
choice of the fresh variable assignment given by $\regconstr'$ forms a solution
of the original system of equations (the substitution defines how to obtain
values of variables occurring in~$\eqconstr$ from the fresh variables)\footnote{Strictly speaking, we only need monadic decomposition on the variables that occur in position constraints.}.
Using the substitution 
map, we can substitute variables in $\diseqconstr$ in order to obtain a~position 
constraint $\regconstr' \wedge \intconstr' \wedge \diseqconstr'$. 
Therefore, we will now focus on solving formulae of the form $\regconstr' \wedge \intconstr' \wedge \diseqconstr'$, which is the main contribution of this paper.


The main idea of solving a~formula of the form $\regconstr' \land \intconstr'
\land \diseqconstr'$ is by transforming $\regconstr' \land
\diseqconstr'$ into a~LIA formula that is then added to~$\intconstr'$ to obtain
a~(potentially quantified) LIA constraint~$\intconstr''$, which can be solved by
an off-the-shelf LIA solver.
The procedure for transforming $\regconstr' \land \diseqconstr'$ into a~LIA
constraint is based on constructing a~\emph{tag automaton}~$\tagAut$, which is
an NFA whose
transitions are extended with \emph{tags}.
The tags do not affect the run of~$\tagAut$, but are used for counting
``\emph{positions}'' where something interesting happens, e.g., for a~string
disequality $x \neq y$, we count the position~$\ell$ of a~single character
mismatch $x[\ell] \neq y[\ell]$.
$\tagAut$ is constructed from the regular constraints in~$\regconstr'$ based on
the atomic predicates in~$\diseqconstr'$.

\vspace{-0.0mm}
\section{Tag Automaton}\label{sec:tag_automaton}
\vspace{-0.0mm}

In this section, we define \emph{tag automata} (TAs) used in the later sections
for encoding position constraints.
We note that TAs are used just to simplify
notation; one could build the framework on top of NFAs or some counter model,
such as Parikh automata~\cite{KlaedtkeR03}, cost-enriched finite
automata~\cite{ChenHHHLRW20}, or simply vector addition systems with
states~\cite{HopcroftP79},
\cbstart
for the price of a~more cumbersome notation.
\cbend

Let~$\tags$ be a~set of \emph{tags}.
A \emph{tag automaton} (TA) over~$\tags$ is a quadruple $T = (Q, \Delta, I, F)$, where $Q,
I, F$ are as for an NFA and the set of transitions $\Delta$ is $\Delta \subseteq Q \times 2^\tags \times Q$.
We use $\move q S p$ to denote a~transition $(q,S,p)$;
we drop duplicate braces, so, e.g., for $S = \{a,b\}$, we write $\move q {a,b}
p$.
A~run~$\run$ of~$T$ is a~sequence of transitions
$\move{q_0}{S_1}{q_1}\move{}{S_2}{} \cdots \move{}{S_n}{q_n}$ such that
$q_0 \in I$ and 
$\move{q_{i-1}}{S_i}{q_i} \in \Delta$ for all $1 \leq i \leq n$.
$\run$~is accepting if~$q_n \in F$.
The Parikh image of~$\run$, $\parikhofrun \run$, is defined in the same way as for~NFAs.
We may write  $Q(T)$, $\Delta(T)$, $I(T)$, and~$F(T)$ to refer
to the components of a~TA~$T$.

For an NFA $\aut = (Q,\Delta,I,F)$ and a~variable~$x$, we define the tag automaton
$\lentag_x(\aut) = (Q, \Delta', I, F)$ over a~set of tags~$\{ \tagsymof a \mid
a \in \alphabet\} \cup \{\taglenof x \}$
where $\Delta'= \{ \move q {\tagsymof a, \taglenof{x}} r \mid q \ltr{a} r \in \Delta  \}$.
The used tags denote the \textbf{S}ymbol and \textbf{L}ength (i.e., we will use
the number of occurrences of the $\taglen$ tag to derive the length of a~word
from the TA).
Given two TAs~$\aut = (Q_\aut, \Delta_\aut, I_\aut, F_\aut)$ and~$\but =
(Q_\but, \Delta_\but, I_\but, F_\but)$ with disjoint sets of states, their
$\epsilon$-concatenation is the TA~$\aut \concateps \but = (Q_\aut \cup
Q_\but, \Delta_\aut \cup \Delta_\but \cup \Delta_\epsilon, I_\aut, F_\but)$
with $\Delta_\epsilon = \{\move q {} r \mid q\in F_\aut, r \in I_\but\}$.

The \emph{Parikh formula} of a~TA~$T$, denoted as
$\parikhformof{T}$ is a~\emph{linear integer arithmetic}
(LIA) formula with free variables
$\numof \delta$ corresponding to numbers of occurrences of transitions $\delta
\in \Delta$.
Models of $\parikhformof{T}$ are, therefore, assignments
$\sigma\colon \numof \Delta \to \nat$ such that
\begin{equation}
\sigma \models \parikhformof{T}
\quad\text{iff there is an accepting run~$\run$ of~$T$
  s.t.~$\parikhofrun{\run} = \sigma$}
\end{equation}
Constructing~$\parikhformof{T}$ can be done in the usual
way~\cite{JankuT19} (cf. \cref{app:parikh}).
We will also work with the \emph{Parikh tag formula} $\parikhformtagof{T}$,
which is a~formula whose models are numbers of each tag seen on an accepting run
in~$T$, constructed as
\begin{equation}
\parikhformtagof{T} \quad\defiff\quad
  \parikhformof{T} \land
  \bigwedge_{t \in \tags}
    \numof t = \textstyle\sum\{\numof \delta \in \numof\Delta\mid \delta = \move
    q S r \in \Delta, t \in S\}.
\end{equation}
Note that in $\parikhformtagof{T}$, the free variables
are now also the counts of tags from~$\tags$ and a~model is an
assignment $\sigma'\colon (\numof{\Delta} \cup \numof{\tags}) \to \nat$.



\vspace{-0.0mm}
\section{Solving Disequalities}\label{sec:diseq}
\vspace{-0.0mm}

In this section, we will show how to solve a~formula $\regconstr' \land
\intconstr \land \diseqconstr$ where~$\diseqconstr$ only contains disequalities.
We will start from the simplest case (a single disequality with two different
variables), proceed to an arbitrary single disequality, and finish with a~system
of disequalities.

\vspace{-0.0mm}
\subsection{\textrm{I}: A~Single Disequality of Two Variables}\label{sec:simpleDiseq}
\vspace{-0.0mm}

First, we consider the simplest case, i.e., when the position
constraint~$\diseqconstr$ contains a~single disequality
\begin{equation}
  x \neq y,
\end{equation}
where~$x$ and~$y$ are two different string variables whose values are restricted
to regular languages~$\lang_x$ and~$\lang_y$.
The constraint is satisfiable iff there exist strings~$w_x \in \lang_x$ and~$w_y
\in \lang_y$ such that they are either 
\begin{inparaenum}[(i)]
  \item  of a~different length or
  \item  they are of the same length~$\ell$ and there exists a~position~$0\leq
    p < \ell$ such that $w_x[p] \neq w_y[p]$.
\end{inparaenum}

We will show how to construct a~tag automaton and, from it, a~LIA formula~$\varphi^\singlesimplediseq$
that is satisfiable iff the disequality is satisfiable.
We assume that we are given NFAs~$\aut_x =(Q_x, \Delta_x, I_x, F_x)$
and~$\aut_y = (Q_y, \Delta_y, I_y, F_y)$ such that~$\langof{\aut_x} = \lang_x$
and $\langof{\aut_x} = \lang_y$ with $Q_x \cap Q_y = \emptyset$.
For this, we construct a~TA~$\aut^{\singlesimplediseq}$.
Intuitively, $\aut^{\singlesimplediseq}$~is obtained by first concatenating~$\lentag_x(\aut_x)$ with~$\lentag_y(\aut_y)$ using an
$\epsilon$-transition into a~TA~$\autcon$.
One can see~$\autcon$ as an encoding of all possible models of~$x$ and~$y$ w.r.t.\
only regular constraints\footnote{In fact, the order in which we do the concatenation
does not really matter---the main objective is to obtain a~TA such that an
accepting run in it represents a~model of regular constraints}.
Then, we take three copies of~$\autcon$ and connect them together with transitions 
that represent detected mismatches:
the first copy is used for tracking the run of~$\aut_x$ before the position of
the mismatch in~$x$ is encountered,
the second copy is used for tracking the rest of the run in~$\aut_x$ and the
first part of the run in~$\aut_y$ (before the mismatch in~$\aut_y$), and
the last copy tracks the rest of the run in~$\aut_y$.
Moreover, the automaton is enhanced with tags that keep track of the position
of the mismatch in~$x$ and in~$y$ and the values of the mismatched symbols.

\newcommand{
\begin{figure}
  \scalebox{1.0}{
    \tikzstyle{myNonSamplingEdge} = [nonSamplingEdge, bend right=38]
\begin{tikzpicture}
    \def\stateDistance{1.7}
    \def\levelDistance{2.0}
    \def\samplingLevelDistance{3.6}
    \def\lastLevelDistance{2.3}
    \def\autDistance{1.7}

    \node[state] (rx0) at (0, 0)                        {$r_x, 1$};
    \node[state] (qx0) at ($(rx0)+(0, -\stateDistance)$) {$q_x, 1$};
    \node[state] (rx1) at ($(rx0)+(\samplingLevelDistance, 0)$) {$r_x, 2$};
    \node[state] (qx1) at ($(rx1)+(0,-\stateDistance)$) {$q_x, 2$};

    \node[state,accepting] (qy0) at ($(qx0)+(-\autDistance,0)$)      {$q_y, 1$};
    \node[state] (ry0) at ($(qy0)+(0, \stateDistance)$) {$r_y, 1$};
    \node[state] (qy1) at ($(qx1)+(+\levelDistance,0)$) {$q_y, 2$};
    \node[state] (ry1) at ($(qy1)+(0, \stateDistance)$) {$r_y, 2$};
    \node[state,accepting] (qy2) at ($(qy1)+(\samplingLevelDistance,0)$) {$q_y, 3$};
    \node[state] (ry2) at ($(qy2)+(0, \stateDistance)$) {$r_y, 3$};

    \draw[line width=0.25mm] ($(qx0.south east)+(0.17,-0.17)$) edge[->] (qx0.south east);

    \draw (qx0) edge[epsEdge] (qy0);
    \draw (qx1) edge[epsEdge] (qy1);

    \draw (qx0)
        edge[myNonSamplingEdge]
        node[tags] (qx0_rx0) {$\tagsymof{\texttt{a}}$ \\ $\taglenof{x}$ \\ $\tagposof{x}$} (rx0);
    \draw (rx0)
        edge[myNonSamplingEdge]
        node[tags] (rx0_qx0) {$\tagsymof{\texttt{b}}$ \\ $\taglenof{x}$ \\ $\tagposof{x}$} (qx0);
    \draw (qx1)
        edge[myNonSamplingEdge]
        node[tags] (qx1_rx1) {$\tagsymof{\texttt{a}}$ \\ $\taglenof{x}$} (rx1);
    \draw (rx1)
        edge[myNonSamplingEdge]
        node[tags] (rx1_qx1) {$\tagsymof{\texttt{b}}$ \\ $\taglenof{x}$} (qx1);

    \draw (qy0) edge[myNonSamplingEdge] node[tags] (qy0_ry0) {$\tagsymof{\texttt{a}}$ \\ $\taglenof{y}$ } (ry0);
    \draw (ry0) edge[myNonSamplingEdge] node[tags] (ry0_qy0) {$\tagsymof{\texttt{c}}$ \\ $\taglenof{y}$ } (qy0);
    \draw (qy1) edge[myNonSamplingEdge] node[tags] (qy1_ry1) {$\tagsymof{\texttt{a}}$ \\ $\taglenof{y}$ \\ $\tagposof{y}$} (ry1);
    \draw (ry1) edge[myNonSamplingEdge] node[tags] (ry1_qy1) {$\tagsymof{\texttt{c}}$ \\ $\taglenof{y}$ \\ $\tagposof{y}$} (qy1);
    \draw (qy2) edge[myNonSamplingEdge] node[tags] (qy2_ry2) {$\tagsymof{\texttt{a}}$ \\ $\taglenof{y}$} (ry2);
    \draw (ry2) edge[myNonSamplingEdge] node[tags] (ry2_qy2) {$\tagsymof{\texttt{c}}$ \\ $\taglenof{y}$} (qy2);

    \def\gxshift{1mm}
    \def\gyshift{2mm}

    \def\angleShift{30}
    \pgfmathsetmacro{\rightInAngle}{180+\angleShift}
    \pgfmathsetmacro{\leftInAngle}{180-\angleShift}
    \def\glooseness{1.0}
    \draw
        (qx0)
        edge[out=0, in=\rightInAngle, looseness=\glooseness, samplingEdge] 
        node[tags, below left, xshift=\gxshift, yshift= \gyshift, text width=13mm, samplingEdgeLabel]
            {$\tagsymof{\texttt{a}}$ \\ $\taglenof{x}$ \\ $\tagmisoneof{\texttt{a}}$}
        (rx1);
    \draw
        (rx0)
        edge[edge, out=0, in=\leftInAngle, looseness=\glooseness, samplingEdge]
        node[tags, above left, xshift=\gxshift, yshift=-\gyshift, text width=13mm, samplingEdgeLabel]
            {$\tagsymof{\texttt{b}}$ \\ $\taglenof{x}$ \\ $\tagmisoneof{\texttt{b}}$}
        (qx1);

    \draw
        (qy1)
        edge[edge, out=0, in=\rightInAngle, looseness=\glooseness, samplingEdge] 
        node[tags, below left, xshift=\gxshift, yshift= \gyshift, text width=13mm, samplingEdgeLabel]
            {$\tagsymof{\texttt{a}}$ \\ $\taglenof{y}$ \\ $\tagmistwoof{\texttt{a}}$}
        (ry2);
    \draw
        (ry1)
        edge[edge, out=0, in=\leftInAngle, looseness=\glooseness, samplingEdge]
        node[tags, above left, xshift=\gxshift, yshift=-\gyshift, text width=13mm, samplingEdgeLabel]
            {$\tagsymof{\texttt{c}}$ \\ $\taglenof{y}$  \\ $\tagmistwoof{\texttt{c}}$}
        (qy2);
    \begin{pgfonlayer}{bg}
        \node[fit=(qx0)(rx0)(qy0)(ry0)(qx0_rx0)(ry0_qy0), draw, dashed, fill=teal!20, rounded corners=1] (copy0_wrapper) {};
        \node[fill=teal!50, draw, rounded corners=2, scale=0.8] at ($(copy0_wrapper.north)+(0, +0.15)$) {Copy 1};
        \node[fill=teal!40, draw, rounded corners=2, scale=0.8] at ($(copy0_wrapper.south)+(0, -0.1)$) {0 mismatches seen};

        \node[fit=(qx1)(rx1)(qy1)(ry1)(rx1_qx1)(qy1_ry1), draw, dashed, fill=blue!20, rounded corners=1] (copy1_wrapper) {};
        \node[fill=blue!50, draw, rounded corners=2, scale=0.8] at ($(copy1_wrapper.north)+(0, +0.15)$) {Copy 2};
        \node[fill=blue!40, draw, rounded corners=2, scale=0.8] at ($(copy1_wrapper.south)+(0, -0.1)$) {1 mismatch seen};

        \node[fit=(qy2)(ry2)(ry2_qy2)(qy2_ry2), draw, dashed, fill=magenta!20, rounded corners=1] (copy2_wrapper) {};
        \node[fill=magenta!50, draw, rounded corners=2, scale=0.8] at ($(copy2_wrapper.north)+(0, +0.15)$) {Copy 3};
        \node[fill=magenta!40, draw, rounded corners=2, scale=0.8] at ($(copy2_wrapper.south)+(0, -0.1)$) {2 mismatches seen};
    \end{pgfonlayer}
\end{tikzpicture}
  }
  \vspace*{-3mm}
    \caption{Example of a tag automaton for the disequality $x \neq y$ with $\langof{\aut_x} = (ab)^*$ and $\langof{\aut_y} = (ac)^*$.
    States~$q_x, r_x$ belong to~$\aut_x$, states~$q_y, r_y$ belong to~$\aut_y$.
    }
    \label{fig:tagautSimple}
    \vspace{-3mm}
\end{figure}
}[0]{
\begin{figure}
  \scalebox{1.0}{
    \tikzstyle{myNonSamplingEdge} = [nonSamplingEdge, bend right=38]
\begin{tikzpicture}
    \def\stateDistance{1.7}
    \def\levelDistance{2.0}
    \def\samplingLevelDistance{3.6}
    \def\lastLevelDistance{2.3}
    \def\autDistance{1.7}

    \node[state] (rx0) at (0, 0)                        {$r_x, 1$};
    \node[state] (qx0) at ($(rx0)+(0, -\stateDistance)$) {$q_x, 1$};
    \node[state] (rx1) at ($(rx0)+(\samplingLevelDistance, 0)$) {$r_x, 2$};
    \node[state] (qx1) at ($(rx1)+(0,-\stateDistance)$) {$q_x, 2$};

    \node[state,accepting] (qy0) at ($(qx0)+(-\autDistance,0)$)      {$q_y, 1$};
    \node[state] (ry0) at ($(qy0)+(0, \stateDistance)$) {$r_y, 1$};
    \node[state] (qy1) at ($(qx1)+(+\levelDistance,0)$) {$q_y, 2$};
    \node[state] (ry1) at ($(qy1)+(0, \stateDistance)$) {$r_y, 2$};
    \node[state,accepting] (qy2) at ($(qy1)+(\samplingLevelDistance,0)$) {$q_y, 3$};
    \node[state] (ry2) at ($(qy2)+(0, \stateDistance)$) {$r_y, 3$};

    \draw[line width=0.25mm] ($(qx0.south east)+(0.17,-0.17)$) edge[->] (qx0.south east);

    \draw (qx0) edge[epsEdge] (qy0);
    \draw (qx1) edge[epsEdge] (qy1);

    \draw (qx0)
        edge[myNonSamplingEdge]
        node[tags] (qx0_rx0) {$\tagsymof{\texttt{a}}$ \\ $\taglenof{x}$ \\ $\tagposof{x}$} (rx0);
    \draw (rx0)
        edge[myNonSamplingEdge]
        node[tags] (rx0_qx0) {$\tagsymof{\texttt{b}}$ \\ $\taglenof{x}$ \\ $\tagposof{x}$} (qx0);
    \draw (qx1)
        edge[myNonSamplingEdge]
        node[tags] (qx1_rx1) {$\tagsymof{\texttt{a}}$ \\ $\taglenof{x}$} (rx1);
    \draw (rx1)
        edge[myNonSamplingEdge]
        node[tags] (rx1_qx1) {$\tagsymof{\texttt{b}}$ \\ $\taglenof{x}$} (qx1);

    \draw (qy0) edge[myNonSamplingEdge] node[tags] (qy0_ry0) {$\tagsymof{\texttt{a}}$ \\ $\taglenof{y}$ } (ry0);
    \draw (ry0) edge[myNonSamplingEdge] node[tags] (ry0_qy0) {$\tagsymof{\texttt{c}}$ \\ $\taglenof{y}$ } (qy0);
    \draw (qy1) edge[myNonSamplingEdge] node[tags] (qy1_ry1) {$\tagsymof{\texttt{a}}$ \\ $\taglenof{y}$ \\ $\tagposof{y}$} (ry1);
    \draw (ry1) edge[myNonSamplingEdge] node[tags] (ry1_qy1) {$\tagsymof{\texttt{c}}$ \\ $\taglenof{y}$ \\ $\tagposof{y}$} (qy1);
    \draw (qy2) edge[myNonSamplingEdge] node[tags] (qy2_ry2) {$\tagsymof{\texttt{a}}$ \\ $\taglenof{y}$} (ry2);
    \draw (ry2) edge[myNonSamplingEdge] node[tags] (ry2_qy2) {$\tagsymof{\texttt{c}}$ \\ $\taglenof{y}$} (qy2);

    \def\gxshift{1mm}
    \def\gyshift{2mm}

    \def\angleShift{30}
    \pgfmathsetmacro{\rightInAngle}{180+\angleShift}
    \pgfmathsetmacro{\leftInAngle}{180-\angleShift}
    \def\glooseness{1.0}
    \draw
        (qx0)
        edge[out=0, in=\rightInAngle, looseness=\glooseness, samplingEdge] 
        node[tags, below left, xshift=\gxshift, yshift= \gyshift, text width=13mm, samplingEdgeLabel]
            {$\tagsymof{\texttt{a}}$ \\ $\taglenof{x}$ \\ $\tagmisoneof{\texttt{a}}$}
        (rx1);
    \draw
        (rx0)
        edge[edge, out=0, in=\leftInAngle, looseness=\glooseness, samplingEdge]
        node[tags, above left, xshift=\gxshift, yshift=-\gyshift, text width=13mm, samplingEdgeLabel]
            {$\tagsymof{\texttt{b}}$ \\ $\taglenof{x}$ \\ $\tagmisoneof{\texttt{b}}$}
        (qx1);

    \draw
        (qy1)
        edge[edge, out=0, in=\rightInAngle, looseness=\glooseness, samplingEdge] 
        node[tags, below left, xshift=\gxshift, yshift= \gyshift, text width=13mm, samplingEdgeLabel]
            {$\tagsymof{\texttt{a}}$ \\ $\taglenof{y}$ \\ $\tagmistwoof{\texttt{a}}$}
        (ry2);
    \draw
        (ry1)
        edge[edge, out=0, in=\leftInAngle, looseness=\glooseness, samplingEdge]
        node[tags, above left, xshift=\gxshift, yshift=-\gyshift, text width=13mm, samplingEdgeLabel]
            {$\tagsymof{\texttt{c}}$ \\ $\taglenof{y}$  \\ $\tagmistwoof{\texttt{c}}$}
        (qy2);
    \begin{pgfonlayer}{bg}
        \node[fit=(qx0)(rx0)(qy0)(ry0)(qx0_rx0)(ry0_qy0), draw, dashed, fill=teal!20, rounded corners=1] (copy0_wrapper) {};
        \node[fill=teal!50, draw, rounded corners=2, scale=0.8] at ($(copy0_wrapper.north)+(0, +0.15)$) {Copy 1};
        \node[fill=teal!40, draw, rounded corners=2, scale=0.8] at ($(copy0_wrapper.south)+(0, -0.1)$) {0 mismatches seen};

        \node[fit=(qx1)(rx1)(qy1)(ry1)(rx1_qx1)(qy1_ry1), draw, dashed, fill=blue!20, rounded corners=1] (copy1_wrapper) {};
        \node[fill=blue!50, draw, rounded corners=2, scale=0.8] at ($(copy1_wrapper.north)+(0, +0.15)$) {Copy 2};
        \node[fill=blue!40, draw, rounded corners=2, scale=0.8] at ($(copy1_wrapper.south)+(0, -0.1)$) {1 mismatch seen};

        \node[fit=(qy2)(ry2)(ry2_qy2)(qy2_ry2), draw, dashed, fill=magenta!20, rounded corners=1] (copy2_wrapper) {};
        \node[fill=magenta!50, draw, rounded corners=2, scale=0.8] at ($(copy2_wrapper.north)+(0, +0.15)$) {Copy 3};
        \node[fill=magenta!40, draw, rounded corners=2, scale=0.8] at ($(copy2_wrapper.south)+(0, -0.1)$) {2 mismatches seen};
    \end{pgfonlayer}
\end{tikzpicture}
  }
  \vspace*{-3mm}
    \caption{Example of a tag automaton for the disequality $x \neq y$ with $\langof{\aut_x} = (ab)^*$ and $\langof{\aut_y} = (ac)^*$.
    States~$q_x, r_x$ belong to~$\aut_x$, states~$q_y, r_y$ belong to~$\aut_y$.
    }
    \label{fig:tagautSimple}
    \vspace{-3mm}
\end{figure}
}

\subsubsection{Tag Automaton Construction}

Formally, let $\autcon = (Q_{\concat}, \Delta_{\concat}, I_{\concat}, F_{\concat})$ be a~TA over
$\tags_{\concat} = \{\tagsymof{a} \mid a\in \alphabet \} \cup \{ \taglenof{x},
\taglenof{y} \}$ obtained by the
$\epsilon$-concatenation of $\lentag_x(\aut_x)$ and $\lentag_y(\aut_y)$, i.e.,
$A_{\concat} = \lentag_x(\aut_x) \concateps \lentag_y(\aut_y)$.
 The tags $\tagsym$ are used for tracking the currently read symbol and 
$\taglen$-tags are used for counting of the length of a word from the language of the corresponding variable.
Then $\aut^{\singlesimplediseq} = (Q_1 \cup Q_2 \cup Q_3, \Delta, I, F)$ is a~TA
over $\tags^{\singlesimplediseq} = \tags_{\concat} \cup \{\tagmisoneof a,
\tagmistwoof a \mid a \in \alphabet \} \cup \{\tagposof{x}, \tagposof y\}$,  where
the~$\tagmisone$ and~$\tagmistwo$
tags denote the first and the second $\tagmis$ismatch respectively and $\tagposof x,
\tagposof y$ are used to count the $\tagpos$ositions of the mismatch in~$x$ and~$y$. 
$\aut^{\singlesimplediseq}$ is constructed as follows:

\begin{figure}
  \scalebox{1.0}{
    \tikzstyle{myNonSamplingEdge} = [nonSamplingEdge, bend right=38]
\begin{tikzpicture}
    \def\stateDistance{1.7}
    \def\levelDistance{2.0}
    \def\samplingLevelDistance{3.6}
    \def\lastLevelDistance{2.3}
    \def\autDistance{1.7}

    \node[state] (rx0) at (0, 0)                        {$r_x, 1$};
    \node[state] (qx0) at ($(rx0)+(0, -\stateDistance)$) {$q_x, 1$};
    \node[state] (rx1) at ($(rx0)+(\samplingLevelDistance, 0)$) {$r_x, 2$};
    \node[state] (qx1) at ($(rx1)+(0,-\stateDistance)$) {$q_x, 2$};

    \node[state,accepting] (qy0) at ($(qx0)+(-\autDistance,0)$)      {$q_y, 1$};
    \node[state] (ry0) at ($(qy0)+(0, \stateDistance)$) {$r_y, 1$};
    \node[state] (qy1) at ($(qx1)+(+\levelDistance,0)$) {$q_y, 2$};
    \node[state] (ry1) at ($(qy1)+(0, \stateDistance)$) {$r_y, 2$};
    \node[state,accepting] (qy2) at ($(qy1)+(\samplingLevelDistance,0)$) {$q_y, 3$};
    \node[state] (ry2) at ($(qy2)+(0, \stateDistance)$) {$r_y, 3$};

    \draw[line width=0.25mm] ($(qx0.south east)+(0.17,-0.17)$) edge[->] (qx0.south east);

    \draw (qx0) edge[epsEdge] (qy0);
    \draw (qx1) edge[epsEdge] (qy1);

    \draw (qx0)
        edge[myNonSamplingEdge]
        node[tags] (qx0_rx0) {$\tagsymof{\texttt{a}}$ \\ $\taglenof{x}$ \\ $\tagposof{x}$} (rx0);
    \draw (rx0)
        edge[myNonSamplingEdge]
        node[tags] (rx0_qx0) {$\tagsymof{\texttt{b}}$ \\ $\taglenof{x}$ \\ $\tagposof{x}$} (qx0);
    \draw (qx1)
        edge[myNonSamplingEdge]
        node[tags] (qx1_rx1) {$\tagsymof{\texttt{a}}$ \\ $\taglenof{x}$} (rx1);
    \draw (rx1)
        edge[myNonSamplingEdge]
        node[tags] (rx1_qx1) {$\tagsymof{\texttt{b}}$ \\ $\taglenof{x}$} (qx1);

    \draw (qy0) edge[myNonSamplingEdge] node[tags] (qy0_ry0) {$\tagsymof{\texttt{a}}$ \\ $\taglenof{y}$ } (ry0);
    \draw (ry0) edge[myNonSamplingEdge] node[tags] (ry0_qy0) {$\tagsymof{\texttt{c}}$ \\ $\taglenof{y}$ } (qy0);
    \draw (qy1) edge[myNonSamplingEdge] node[tags] (qy1_ry1) {$\tagsymof{\texttt{a}}$ \\ $\taglenof{y}$ \\ $\tagposof{y}$} (ry1);
    \draw (ry1) edge[myNonSamplingEdge] node[tags] (ry1_qy1) {$\tagsymof{\texttt{c}}$ \\ $\taglenof{y}$ \\ $\tagposof{y}$} (qy1);
    \draw (qy2) edge[myNonSamplingEdge] node[tags] (qy2_ry2) {$\tagsymof{\texttt{a}}$ \\ $\taglenof{y}$} (ry2);
    \draw (ry2) edge[myNonSamplingEdge] node[tags] (ry2_qy2) {$\tagsymof{\texttt{c}}$ \\ $\taglenof{y}$} (qy2);

    \def\gxshift{1mm}
    \def\gyshift{2mm}

    \def\angleShift{30}
    \pgfmathsetmacro{\rightInAngle}{180+\angleShift}
    \pgfmathsetmacro{\leftInAngle}{180-\angleShift}
    \def\glooseness{1.0}
    \draw
        (qx0)
        edge[out=0, in=\rightInAngle, looseness=\glooseness, samplingEdge] 
        node[tags, below left, xshift=\gxshift, yshift= \gyshift, text width=13mm, samplingEdgeLabel]
            {$\tagsymof{\texttt{a}}$ \\ $\taglenof{x}$ \\ $\tagmisoneof{\texttt{a}}$}
        (rx1);
    \draw
        (rx0)
        edge[edge, out=0, in=\leftInAngle, looseness=\glooseness, samplingEdge]
        node[tags, above left, xshift=\gxshift, yshift=-\gyshift, text width=13mm, samplingEdgeLabel]
            {$\tagsymof{\texttt{b}}$ \\ $\taglenof{x}$ \\ $\tagmisoneof{\texttt{b}}$}
        (qx1);

    \draw
        (qy1)
        edge[edge, out=0, in=\rightInAngle, looseness=\glooseness, samplingEdge] 
        node[tags, below left, xshift=\gxshift, yshift= \gyshift, text width=13mm, samplingEdgeLabel]
            {$\tagsymof{\texttt{a}}$ \\ $\taglenof{y}$ \\ $\tagmistwoof{\texttt{a}}$}
        (ry2);
    \draw
        (ry1)
        edge[edge, out=0, in=\leftInAngle, looseness=\glooseness, samplingEdge]
        node[tags, above left, xshift=\gxshift, yshift=-\gyshift, text width=13mm, samplingEdgeLabel]
            {$\tagsymof{\texttt{c}}$ \\ $\taglenof{y}$  \\ $\tagmistwoof{\texttt{c}}$}
        (qy2);
    \begin{pgfonlayer}{bg}
        \node[fit=(qx0)(rx0)(qy0)(ry0)(qx0_rx0)(ry0_qy0), draw, dashed, fill=teal!20, rounded corners=1] (copy0_wrapper) {};
        \node[fill=teal!50, draw, rounded corners=2, scale=0.8] at ($(copy0_wrapper.north)+(0, +0.15)$) {Copy 1};
        \node[fill=teal!40, draw, rounded corners=2, scale=0.8] at ($(copy0_wrapper.south)+(0, -0.1)$) {0 mismatches seen};

        \node[fit=(qx1)(rx1)(qy1)(ry1)(rx1_qx1)(qy1_ry1), draw, dashed, fill=blue!20, rounded corners=1] (copy1_wrapper) {};
        \node[fill=blue!50, draw, rounded corners=2, scale=0.8] at ($(copy1_wrapper.north)+(0, +0.15)$) {Copy 2};
        \node[fill=blue!40, draw, rounded corners=2, scale=0.8] at ($(copy1_wrapper.south)+(0, -0.1)$) {1 mismatch seen};

        \node[fit=(qy2)(ry2)(ry2_qy2)(qy2_ry2), draw, dashed, fill=magenta!20, rounded corners=1] (copy2_wrapper) {};
        \node[fill=magenta!50, draw, rounded corners=2, scale=0.8] at ($(copy2_wrapper.north)+(0, +0.15)$) {Copy 3};
        \node[fill=magenta!40, draw, rounded corners=2, scale=0.8] at ($(copy2_wrapper.south)+(0, -0.1)$) {2 mismatches seen};
    \end{pgfonlayer}
\end{tikzpicture}
  }
  \vspace*{-3mm}
    \caption{Example of a tag automaton for the disequality $x \neq y$ with $\langof{\aut_x} = (ab)^*$ and $\langof{\aut_y} = (ac)^*$.
    States~$q_x, r_x$ belong to~$\aut_x$, states~$q_y, r_y$ belong to~$\aut_y$.
    }
    \label{fig:tagautSimple}
    \vspace{-3mm}
\end{figure}
%
\begin{itemize}
  \item  $Q_1 = Q_{\concat} \times \{1\}$,
    $Q_2 = Q_{\concat} \times \{2\}$, and
    $Q_3 = Q_{\concat} \times \{3\}$;
    intuitively, $Q_1$ are states where no mismatch was seen,
    $Q_2$ are states where only the first mismatch symbol was seen, and
    $Q_3$ are states where both mismatch symbols were seen,

  \item  $I = I_{\concat} \times \{1\}$,
  \item  $F = F_{\concat} \times \{1,3\}$, and
  \item  $\Delta$ is the union of the following sets of transitions:
    \begin{itemize}
      \item  $\{\move {(q,1)} {\tagsymof a, \tagposof x, \taglenof{x}} {(r,
        1)} \mid \move q {\tagsymof a, \taglenof{x}} r \in \Delta_{\concat}\}$
        --- transitions in~$\aut_x$ before the first mismatch,
      \item  $\{\move {(q,1)} {\tagsymof a, \taglenof{y}} {(r,
        1)} \mid \move q {\tagsymof a, \taglenof{y}} r \in \Delta_{\concat}\}$
        --- transitions in~$\aut_y$ if no mismatch symbols are seen and the disequality is satisfied due to $x$ and $y$ having different lengths,
      \item  $\{\move{(q,1)}{\tagsymof a, \tagmisoneof a, \taglenof{x}}{(r, 2)}
        \mid \move q {\tagsymof a, \taglenof{x}} r \in \Delta_{\concat}\}$
        --- the first mismatch (in~$\aut_x$),
      \item  $\{\move {(q,2)} {\tagsymof a, \taglenof{x}} {(r, 2)} \mid \move
        q {\tagsymof a, \taglenof{x}} r \in \Delta_{\concat}\}$
        --- transitions in~$\aut_x$ after the first mismatch (we still need to
        finish reading~$x$ to make sure that it was accepted
        by~$\aut_x$),
      \item  $\{\move {(q,2)}{}{(r,2)} \mid \move q {} r \in \Delta_\circ\}$ ---
        jump from~$A_x$ to~$A_y$,
      \item  $\{\move {(q,2)} {\tagsymof a, \tagposof y, \taglenof{y}} {(r,
        2)} \mid \move q {\tagsymof a, \taglenof{y}} r \in \Delta_{\concat}\}$
        --- transitions in~$\aut_y$ before the second mismatch,
      \item  $\{\move{(q,2)}{\tagsymof a, \tagmistwoof a, \taglenof{y}}{(r, 3)}
        \mid \move q {\tagsymof a, \taglenof{y}} r \in \Delta_{\concat}\}$
        --- the second mismatch (in~$\aut_y$),
      \item  $\{\move {(q,3)} {\tagsymof a, \taglenof{y}} {(r, 3)} \mid \move q
        {\tagsymof a, \taglenof{y}} r \in \Delta_{\concat}\}$
        --- transitions in~$\aut_y$ after the second mismatch.
    \end{itemize}
\end{itemize}
%
Note that in~$\Delta$, the $\tagmisone$-tagged transitions denote the occurrence of the
first mismatch (which causes a~jump from~$Q_1$ to~$Q_2$) and the $\tagmistwo$-tagged
transitions denote the occurrence of the second mismatch (jumping
from~$Q_2$ to~$Q_3$).
For accepting runs of~$\aut^{\singlesimplediseq}$, it holds that they either
\begin{inparaenum}[(i)]
  \item  stay in~$Q_1$ and accept in some state from~$F_\circ \times \{1\}$ (so
    we only keep track of the lengths of the words $w_x \in L_x$ and $w_y \in
    L_y$) or
  \item  take a~$\tagmisone$-tagged transition to~$Q_2$ and then
    a~$\tagmistwo$-tagged transition to~$Q_3$ and accept in some state
    from~$F_\circ \times \{3\}$.
\end{inparaenum}
An accepting run of the tag automaton encodes an assignment of~$x$ and~$y$ to
words from~$L_x$ and~$L_y$. An example of a~constructed tag automaton 
is given in \cref{fig:tagautSimple}.

\subsubsection{Formula Construction}

After~$\aut^{\singlesimplediseq}$ is constructed, it remains to test whether
there is a~run of~$\aut^{\singlesimplediseq}$ starting in~$I$ and ending
in~$F$ such that the number of occurrences of $\taglenof x$ and $\taglenof y$ differs 
(corresponding to the case $|x| \neq |y|$), or the number of occurrences of the~$\tagposof x$ and~$\tagposof
y$ tags is the same and there is one occurrence of a~$\tagmisoneof a$ tag and
one occurrence of a~$\tagmistwoof b$ tag with $a \neq b$.
The means to this is via the Parikh tag formula
of~$\aut^{\singlesimplediseq}$.
First, we define formulae~$\formdiseqsym^{\singlesimplediseq}$
and~$\formdiseqmis^{\singlesimplediseq}$, which express
that the two sampled symbols are a~mismatch and that there was a~mismatch:
\begin{equation}
  \formdiseqsym^{\singlesimplediseq} \quad\defiff\quad
  \bigwedge_{a \in \alphabet} \left(\numof{\tagmisoneof a} + \numof{\tagmistwoof
  a} < 2\right)
  \qquad\text{and}\qquad
  \formdiseqmis^\singlesimplediseq \quad\defiff\quad
  \sum_{a \in \alphabet} \numof{\tagmisoneof a} > 0.
\end{equation}
In the formula, the first sum is used to check that the mismatched symbols are
indeed different (from the construction of $\aut^{\singlesimplediseq}$,
there will be at most one~$\tagmisone$ and one~$\tagmistwo$ tags in every
accepting run) and the second sum makes sure that there was at least one
mismatch (so that we can only accept in~$Q_3$).
Finally, we construct the formula $\varphi^{\singlesimplediseq}$ equisatisfiable to
the disequality $x \neq y$ as follows:
\begin{equation}
\varphi^{\singlesimplediseq} \quad\defiff\quad \parikhformtagof{\aut^{\singlesimplediseq}} \land
  \left(\numof{\taglenof x} \neq \numof{\taglenof y} \lor 
  \left(\numof{\tagposof x} = \numof{\tagposof y} \land
  \formdiseqsym^{\singlesimplediseq} \land
  \formdiseqmis^{\singlesimplediseq}
  \right) \right).
\end{equation}

\cbstart
\begin{theorem}\label{thm:}
The formula $\regconstr' \land \intconstr \land x \neq y$ is equisatisfiable to the formula
$\intconstr \land \varphi^{\singlesimplediseq}$.
Moreover, the size of~$\varphi^{\singlesimplediseq}$ is polynomial to~$|\regconstr'|$.
\end{theorem}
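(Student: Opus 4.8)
The plan is to establish the equisatisfiability by two explicit translations between models, plus a routine size count, and to reduce first to a cleaner statement. Since the free variables of~$\varphi^{\singlesimplediseq}$ are precisely the transition- and tag-count variables $\numof\Delta \cup \numof{\tags^{\singlesimplediseq}}$ of~$\aut^{\singlesimplediseq}$, which shares no variable with~$\intconstr$ (pure integer variables) or with the string constraint, and since~$\aut^{\singlesimplediseq}$ is built only from~$\aut_x,\aut_y$ with $\langof{\aut_x}=\lang_x$ and $\langof{\aut_x}=\lang_y$, it suffices to show that $x\in\lang_x \land y\in\lang_y \land x\neq y$ and~$\varphi^{\singlesimplediseq}$ are equisatisfiable; conjoining the independent~$\intconstr$ on both sides then yields the theorem.

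For the \emph{forward} direction, given $w_x:=\assgn(x)\in\lang_x$ and $w_y:=\assgn(y)\in\lang_y$ with $w_x\neq w_y$, I would fix accepting runs of~$\aut_x$ over~$w_x$ and of~$\aut_y$ over~$w_y$ and stitch them into an accepting run~$\run$ of~$\aut^{\singlesimplediseq}$ following the two scenarios the construction anticipates. If $|w_x|\neq|w_y|$, keep~$\run$ inside the first copy~$Q_1$ (recording only lengths) and accept in $F_\circ\times\{1\}$. If $|w_x|=|w_y|$, pick~$p$ with $w_x[p]\neq w_y[p]$, simulate the first~$p$ steps over~$w_x$ in~$Q_1$, cross to~$Q_2$ via the $\tagmisone$-tagged transition reading $w_x[p]$, finish~$w_x$ and then take the $\epsilon$-step to the $\aut_y$-part inside~$Q_2$, simulate~$p$ steps over~$w_y$, cross to~$Q_3$ via the $\tagmistwo$-tagged transition reading $w_y[p]$, finish~$w_y$, and accept in $F_\circ\times\{3\}$. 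As~$\run$ is accepting, $\parikhofrun{\run}$ with its induced tag counts models $\parikhformtagof{\aut^{\singlesimplediseq}}$; it remains to check the Boolean rider. In the first scenario every $\aut_x$-transition carries $\taglenof x$ and every $\aut_y$-transition carries $\taglenof y$, so $\numof{\taglenof x}=|w_x|\neq|w_y|=\numof{\taglenof y}$. In the second, exactly one $\tagmisone$- and one $\tagmistwo$-tagged transition occurs, carrying the symbols $w_x[p]\neq w_y[p]$, hence $\formdiseqsym^{\singlesimplediseq}\land\formdiseqmis^{\singlesimplediseq}$ holds, while only the $p$ steps preceding each mismatch are tagged with $\tagposof x$ resp.\ $\tagposof y$, giving $\numof{\tagposof x}=p=\numof{\tagposof y}$.

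For the \emph{backward} direction, a model $\sigma'\models\varphi^{\singlesimplediseq}$ gives, by the defining property of the Parikh tag formula, an accepting run~$\run$ of~$\aut^{\singlesimplediseq}$ realising the counts in~$\sigma'$. Projecting~$\run$ onto its $\aut_x$- and $\aut_y$-transitions (dropping copy indices and tags) recovers an accepting run of~$\aut_x$ over the word~$w_x$ spelled by the $\tagsym$-tags on the $\aut_x$-part, and likewise~$w_y$ for~$\aut_y$, so $w_x\in\lang_x$, $w_y\in\lang_y$, with $\numof{\taglenof x}=|w_x|$ and $\numof{\taglenof y}=|w_y|$. If $\numof{\taglenof x}\neq\numof{\taglenof y}$ then $|w_x|\neq|w_y|$ and $w_x\neq w_y$; otherwise the right disjunct holds: $\formdiseqmis^{\singlesimplediseq}$ forces one $\tagmisone$-tagged transition, hence by the layered three-copy shape of~$\aut^{\singlesimplediseq}$ exactly one $\tagmistwo$-tagged one, $\formdiseqsym^{\singlesimplediseq}$ makes their symbols differ, and because $\numof{\tagposof x}$ (resp.\ $\numof{\tagposof y}$) equals the number of $\aut_x$-steps (resp.\ $\aut_y$-steps) preceding that transition, $w_x[\numof{\tagposof x}]\neq w_y[\numof{\tagposof y}]$, which with $\numof{\tagposof x}=\numof{\tagposof y}$ again gives $w_x\neq w_y$; then set $\assgn(x)=w_x$, $\assgn(y)=w_y$. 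Finally, the size bound is a count: $\aut^{\singlesimplediseq}$ has $3(|Q_x|+|Q_y|)=\bigo{|\regconstr'|}$ states and polynomially many transitions (the $\epsilon$-concatenation adding $|F_x|\cdot|I_y|$ of them), it uses $\bigo{|\alphabet|}$ tags, the Parikh tag formula of a TA is polynomial in its size~\cite{JankuT19} (cf.~\cref{app:parikh}), and the Boolean rider has size $\bigo{|\alphabet|}$.

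The step I expect to need the most care is the bookkeeping in the backward direction: justifying that the counts $\numof{\taglenof x}$ and $\numof{\tagposof x}$ (and their $y$-analogues) read off an abstract Parikh model truly coincide with $|w_x|$ and the mismatch position in the concrete extracted run, and that an accepting run can take at most one $\tagmisone$- and one $\tagmistwo$-tagged transition. Both facts are consequences of the layered structure of~$\aut^{\singlesimplediseq}$ and the exact placement of tags (only first-copy $\aut_x$-transitions are $\tagposof x$-tagged, only second-copy $\aut_y$-transitions are $\tagposof y$-tagged, and the $\tagmis$-tagged transitions are the unique copy changes), so this is careful case analysis rather than a genuine conceptual obstacle.
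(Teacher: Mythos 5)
Your proof is correct and takes essentially the approach the paper intends (the paper states this theorem without an explicit proof; the justification is implicit in the construction of \cref{sec:simpleDiseq}): accepting runs of~$\aut^{\singlesimplediseq}$ encode assignments of~$x$ and~$y$, the semantics of~$\parikhformtag$ transfers between runs and LIA models in both directions, and~$\intconstr$ detaches because in the normal form it shares no variables with the regular constraints, the disequality, or the fresh count variables. Your backward-direction bookkeeping (exactly one $\tagmisone$- and one $\tagmistwo$-transition on any accepting run reaching the third copy, position tags counting precisely the prefixes before the two mismatch transitions) is exactly the intended reading of the layered construction, so nothing is missing.
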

\cbend

\vspace{-0.0mm}
\subsection{\textrm{II}: A Single Unrestricted Disequality }\label{sec:single-gen-diseq}
\vspace{-0.0mm}

Let us now move to the case of an arbitrary disequality between concatenations
of (potentially repeating) variables:
\vspace{-3mm}
\begin{equation}
x_1 \ldots x_n \neq y_1 \ldots y_m.
\vspace{-1mm}
\end{equation}
This complex disequality is satisfiable if there are words $w_{x_i} \in
\lang_{x_i}$ and $w_{y_j} \in \lang_{y_j}$ for all~$i,j$
such that either
\begin{inparaenum}[(i)]
  \item both sides have different lengths (given by $\sum_{1\leq
    i\leq n}|w_{x_i}|$ and $\sum_{1 \leq j \leq m}|w_{y_j}|$ respectively) or
  \item they are of the same length and there is a \emph{mismatch} position
    $\ell$ s.t.\ $w_x[\ell] \neq w_y[\ell]$ where $w_x = w_{x_1}\cdots w_{x_n}$
    and $w_y = w_{y_1} \cdots w_{y_m}$.
\end{inparaenum}
We emphasize that there might be multiple occurrences of a~single variable~$z$,
potentially on both sides of the disequality, and they all need to be assigned
the same word from~$\lang_z$.

We will again construct a tag automaton~$\aut^{\singlediseq}$ checking whether
one of the conditions to satisfy the disequality holds.
In this case, an accepting run in the tag automaton encodes an assignment
that maps every variable~$z$ from the disequality to a~word from~$\lang_z$.
The mismatch may happen in any pair of occurrences of variables
$(x_i, y_j)$ and, moreover, the variables might have multiple occurrences in the
disequality.
In the tag automaton, a~run encoding a~mismatch needs to nondeterministically guess 
a~pair of variables' occurrences where the mismatch happens, the mismatch
positions within the variables, and the mismatch symbol itself.
In order to check that the guess is valid, we then construct a~formula
that will use the Parikh tag image of~$\aut^{\singlediseq}$ and use it to check that
\begin{inparaenum}[(i)]
  \item the mismatch symbols are different and
  \item the \emph{global} positions of both mismatches are equal, meaning that
    for a~guess of mismatch variables $(x_i, y_j)$,
    the mismatch position in $x_i$ plus lengths of assignments of $x_1 \cdots x_{i-1}$ is equal to the 
    mismatch position in $y_j$ plus lengths of assignments of $y_1 \cdots y_{j-1}$.
\end{inparaenum}

\newcommand{
\begin{figure}
  \scalebox{1.0}{
    \tikzstyle{mySamplingEdgeLabel} = [samplingEdgeLabel, text width=48mm]
\tikzstyle{myTags} = [tags, text width=30mm]
\tikzstyle{myNonSamplingEdge} = [nonSamplingEdge, bend right=11]

\begin{tikzpicture}
    \def\stateDistance{5.0}
    \def\levelDistance{1.6 }
    \def\lastLevelDistance{1.6}
    \def\autDistance{1.6}

    \node[state] (rx0) at (0, 0)                        {$r_x, 1$};
    \node[state]
        (qx0) at ($(rx0)+(\stateDistance, 0)$) {$q_x, 1$};
    \node[state] (rx1) at ($(rx0)+(0,-\levelDistance)$) {$r_x, 2$};
    \node[state] (qx1) at ($(rx1)+(\stateDistance, 0)$) {$q_x, 2$};
    \node[state] (rx2) at ($(rx1)+(0,-\lastLevelDistance)$) {$r_x, 3$};
    \node[state] (qx2) at ($(rx2)+(\stateDistance, 0)$) {$q_x, 3$};

    \node[state,accepting] (qy0) at ($(qx0)+(\autDistance,0)$)      {$q_y, 1$};
    \node[state] (ry0) at ($(qy0)+(\stateDistance, 0)$) {$r_y, 1$};
    \node[state] (qy1) at ($(qy0)+(0,-\levelDistance)$) {$q_y, 2$};
    \node[state] (ry1) at ($(qy1)+(\stateDistance, 0)$) {$r_y, 2$};
    \node[state,accepting] (qy2) at ($(qy1)+(0,-\lastLevelDistance)$) {$q_y, 3$};
    \node[state] (ry2) at ($(qy2)+(\stateDistance, 0)$) {$r_y, 3$};

    \draw[line width=0.25mm] ($(qx0.north)+(0, 0.25)$) edge[->] (qx0.north);

    \draw (qx0) edge[epsEdge] (qy0);
    \draw (qx1) edge[epsEdge] (qy1);
    \draw (qx2) edge[epsEdge] (qy2);

    \draw[>=Stealth] (qx0) edge[myNonSamplingEdge] node[myTags] {$\tagsymof{\letter{a}}$, $\taglenof{x}$, $\tagposoneof{x}$} (rx0);
    \draw[>=Stealth] (rx0) edge[myNonSamplingEdge] node[myTags] {$\tagsymof{\letter{b}}$, $\taglenof{x}$, $\tagposoneof{x}$} (qx0);

    \draw[>=Stealth] (qx1) edge[myNonSamplingEdge] node[myTags] {$\tagsymof{\letter{a}}$, $\taglenof{x}$, $\tagpostwoof{x}$} (rx1);
    \draw[>=Stealth] (rx1) edge[myNonSamplingEdge] node[myTags] {$\tagsymof{\letter{b}}$, $\taglenof{x}$, $\tagpostwoof{x}$} (qx1);

    \draw[>=Stealth] (qx2) edge[myNonSamplingEdge] node[myTags] {$\tagsymof{\letter{a}}$, $\taglenof{x}$, $\tagposthreeof{x}$} (rx2);
    \draw[>=Stealth] (rx2) edge[myNonSamplingEdge] node[myTags] {$\tagsymof{\letter{b}}$, $\taglenof{x}$, $\tagposthreeof{x}$} (qx2);

    \draw[>=Stealth] (qy0) edge[myNonSamplingEdge] node[myTags] {$\tagsymof{\letter{a}}$, $\taglenof{y}$, $\tagposoneof{y}$} (ry0);
    \draw[>=Stealth] (ry0) edge[myNonSamplingEdge] node[myTags] {$\tagsymof{\letter{c}}$, $\taglenof{y}$, $\tagposoneof{y}$} (qy0);
    \draw[>=Stealth] (qy1) edge[myNonSamplingEdge] node[myTags] {$\tagsymof{\letter{a}}$, $\taglenof{y}$, $\tagpostwoof{y}$} (ry1);
    \draw[>=Stealth] (ry1) edge[myNonSamplingEdge] node[myTags] {$\tagsymof{\letter{c}}$, $\taglenof{y}$, $\tagpostwoof{y}$} (qy1);
    \draw[>=Stealth] (qy2) edge[myNonSamplingEdge] node[myTags] {$\tagsymof{\letter{a}}$, $\taglenof{y}$, $\tagposthreeof{y}$} (ry2);
    \draw[>=Stealth] (ry2) edge[myNonSamplingEdge] node[myTags] {$\tagsymof{\letter{c}}$, $\taglenof{y}$, $\tagposthreeof{y}$} (qy2);

    \def\gxshift{0mm}
    \def\gyshift{-0.5mm}
    \def\gxshiftinner{0mm}
    \def\gyshiftinner{-0.5mm}
    \def\angleShift{70}
    \pgfmathsetmacro{\rightInAngle}{90-\angleShift}
    \pgfmathsetmacro{\leftInAngle}{90+\angleShift}
    \def\outAngleShift{-63}
    \pgfmathsetmacro{\rightOutAngle}{-90+\outAngleShift}  
    \pgfmathsetmacro{\leftOutAngle}{-90-\outAngleShift}   
    \def\glooseness{1.3}
    \draw[>=Stealth]
        (qx0)
        edge[out=\rightOutAngle, in=\rightInAngle, looseness=\glooseness, samplingEdge] 
        node[tags, above right, xshift=\gxshiftinner, yshift=-\gyshiftinner, mySamplingEdgeLabel]
            {$\tagsymof{\letter{a}}$, $\taglenof{x}$, $\tagpostwoof{x}$, $\tagmisoneof{\letter{a}, x}$}
        (rx1);
    \draw[>=Stealth]
        (rx0)
        edge[edge, out=\leftOutAngle, in=\leftInAngle, looseness=\glooseness, samplingEdge]
        node[tags, above left, xshift=-\gxshift, yshift=-\gyshift, mySamplingEdgeLabel]
            {$\tagsymof{\letter{b}}$, $\taglenof{x}$, $\tagpostwoof{x}$, $\tagmisoneof{\letter{b}, x}$}
        (qx1);
    \draw[>=Stealth]
        (qx1)
        edge[edge, out=\rightOutAngle, in=\rightInAngle, looseness=\glooseness, samplingEdge] 
        node[tags, above right, xshift=\gxshiftinner, yshift=-\gyshiftinner, mySamplingEdgeLabel]
            {$\tagsymof{\letter{a}}$, $\taglenof{x}$, $\tagposthreeof{x}$, $\tagmistwoof{\letter{a}, x}$}
        (rx2);
    \draw[>=Stealth]
        (rx1)
        edge[edge, out=\leftOutAngle, in=\leftInAngle, looseness=\glooseness, samplingEdge]
        node[tags, above left, xshift=-\gxshift, yshift=-\gyshift, mySamplingEdgeLabel]
            {$\tagsymof{\letter{b}}$, $\taglenof{x}$, $\tagposthreeof{x}$, $\tagmistwoof{\letter{b}, x}$}
        (qx2);

    \draw[>=Stealth]
        (qy0)
        edge[edge, out=\leftOutAngle, in=\leftInAngle, looseness=\glooseness, samplingEdge] 
        node[tags, above left, xshift=-\gxshift, yshift=-\gyshift, text width=13mm, mySamplingEdgeLabel]
            {$\tagsymof{\letter{a}}$, $\taglenof{y}$, $\tagpostwoof{y}$, $\tagmisoneof{\letter{a}, y}$}
        (ry1);
    \draw[>=Stealth]
        (ry0)
        edge[edge, out=\rightOutAngle, in=\rightInAngle, looseness=\glooseness, samplingEdge]
        node[tags, above right, xshift=\gxshift, yshift=-\gyshift, text width=13mm, mySamplingEdgeLabel]
            {$\tagsymof{\letter{c}}$, $\taglenof{y}$, $\tagpostwoof{y}$, $\tagmisoneof{\letter{c}, y}$}
        (qy1);
    \draw[>=Stealth]
        (qy1)
        edge[edge, out=\leftOutAngle, in=\leftInAngle, looseness=\glooseness, samplingEdge] 
        node[tags, above left, xshift=-\gxshift, yshift=-\gyshift, text width=13mm, mySamplingEdgeLabel]
            {$\tagsymof{\letter{a}}$, $\taglenof{y}$, $\tagposthreeof{y}$, $\tagmistwoof{\letter{a}, y}$}
        (ry2);
    \draw[>=Stealth]
        (ry1)
        edge[edge, out=\rightOutAngle, in=\rightInAngle, looseness=\glooseness, samplingEdge]
        node[tags, above right, xshift=\gxshift, yshift=-\gyshift, text width=13mm, mySamplingEdgeLabel]
            {$\tagsymof{\letter{c}}$, $\taglenof{y}$, $\tagposthreeof{y}$, $\tagmistwoof{\letter{c}, y}$}
        (qy2);
\end{tikzpicture}
  }
    \caption {Example of a tag automaton for the disequality $xy \neq yx$ with $\langof{\aut_x} = (ab)^*$ and $\langof{\aut_y} = (ac)^*$.}
    \label{fig:tagaut}
\vspace{-3mm}
\end{figure}
}[0]{
\begin{figure}
  \scalebox{1.0}{
    \tikzstyle{mySamplingEdgeLabel} = [samplingEdgeLabel, text width=48mm]
\tikzstyle{myTags} = [tags, text width=30mm]
\tikzstyle{myNonSamplingEdge} = [nonSamplingEdge, bend right=11]

\begin{tikzpicture}
    \def\stateDistance{5.0}
    \def\levelDistance{1.6 }
    \def\lastLevelDistance{1.6}
    \def\autDistance{1.6}

    \node[state] (rx0) at (0, 0)                        {$r_x, 1$};
    \node[state]
        (qx0) at ($(rx0)+(\stateDistance, 0)$) {$q_x, 1$};
    \node[state] (rx1) at ($(rx0)+(0,-\levelDistance)$) {$r_x, 2$};
    \node[state] (qx1) at ($(rx1)+(\stateDistance, 0)$) {$q_x, 2$};
    \node[state] (rx2) at ($(rx1)+(0,-\lastLevelDistance)$) {$r_x, 3$};
    \node[state] (qx2) at ($(rx2)+(\stateDistance, 0)$) {$q_x, 3$};

    \node[state,accepting] (qy0) at ($(qx0)+(\autDistance,0)$)      {$q_y, 1$};
    \node[state] (ry0) at ($(qy0)+(\stateDistance, 0)$) {$r_y, 1$};
    \node[state] (qy1) at ($(qy0)+(0,-\levelDistance)$) {$q_y, 2$};
    \node[state] (ry1) at ($(qy1)+(\stateDistance, 0)$) {$r_y, 2$};
    \node[state,accepting] (qy2) at ($(qy1)+(0,-\lastLevelDistance)$) {$q_y, 3$};
    \node[state] (ry2) at ($(qy2)+(\stateDistance, 0)$) {$r_y, 3$};

    \draw[line width=0.25mm] ($(qx0.north)+(0, 0.25)$) edge[->] (qx0.north);

    \draw (qx0) edge[epsEdge] (qy0);
    \draw (qx1) edge[epsEdge] (qy1);
    \draw (qx2) edge[epsEdge] (qy2);

    \draw[>=Stealth] (qx0) edge[myNonSamplingEdge] node[myTags] {$\tagsymof{\letter{a}}$, $\taglenof{x}$, $\tagposoneof{x}$} (rx0);
    \draw[>=Stealth] (rx0) edge[myNonSamplingEdge] node[myTags] {$\tagsymof{\letter{b}}$, $\taglenof{x}$, $\tagposoneof{x}$} (qx0);

    \draw[>=Stealth] (qx1) edge[myNonSamplingEdge] node[myTags] {$\tagsymof{\letter{a}}$, $\taglenof{x}$, $\tagpostwoof{x}$} (rx1);
    \draw[>=Stealth] (rx1) edge[myNonSamplingEdge] node[myTags] {$\tagsymof{\letter{b}}$, $\taglenof{x}$, $\tagpostwoof{x}$} (qx1);

    \draw[>=Stealth] (qx2) edge[myNonSamplingEdge] node[myTags] {$\tagsymof{\letter{a}}$, $\taglenof{x}$, $\tagposthreeof{x}$} (rx2);
    \draw[>=Stealth] (rx2) edge[myNonSamplingEdge] node[myTags] {$\tagsymof{\letter{b}}$, $\taglenof{x}$, $\tagposthreeof{x}$} (qx2);

    \draw[>=Stealth] (qy0) edge[myNonSamplingEdge] node[myTags] {$\tagsymof{\letter{a}}$, $\taglenof{y}$, $\tagposoneof{y}$} (ry0);
    \draw[>=Stealth] (ry0) edge[myNonSamplingEdge] node[myTags] {$\tagsymof{\letter{c}}$, $\taglenof{y}$, $\tagposoneof{y}$} (qy0);
    \draw[>=Stealth] (qy1) edge[myNonSamplingEdge] node[myTags] {$\tagsymof{\letter{a}}$, $\taglenof{y}$, $\tagpostwoof{y}$} (ry1);
    \draw[>=Stealth] (ry1) edge[myNonSamplingEdge] node[myTags] {$\tagsymof{\letter{c}}$, $\taglenof{y}$, $\tagpostwoof{y}$} (qy1);
    \draw[>=Stealth] (qy2) edge[myNonSamplingEdge] node[myTags] {$\tagsymof{\letter{a}}$, $\taglenof{y}$, $\tagposthreeof{y}$} (ry2);
    \draw[>=Stealth] (ry2) edge[myNonSamplingEdge] node[myTags] {$\tagsymof{\letter{c}}$, $\taglenof{y}$, $\tagposthreeof{y}$} (qy2);

    \def\gxshift{0mm}
    \def\gyshift{-0.5mm}
    \def\gxshiftinner{0mm}
    \def\gyshiftinner{-0.5mm}
    \def\angleShift{70}
    \pgfmathsetmacro{\rightInAngle}{90-\angleShift}
    \pgfmathsetmacro{\leftInAngle}{90+\angleShift}
    \def\outAngleShift{-63}
    \pgfmathsetmacro{\rightOutAngle}{-90+\outAngleShift}  
    \pgfmathsetmacro{\leftOutAngle}{-90-\outAngleShift}   
    \def\glooseness{1.3}
    \draw[>=Stealth]
        (qx0)
        edge[out=\rightOutAngle, in=\rightInAngle, looseness=\glooseness, samplingEdge] 
        node[tags, above right, xshift=\gxshiftinner, yshift=-\gyshiftinner, mySamplingEdgeLabel]
            {$\tagsymof{\letter{a}}$, $\taglenof{x}$, $\tagpostwoof{x}$, $\tagmisoneof{\letter{a}, x}$}
        (rx1);
    \draw[>=Stealth]
        (rx0)
        edge[edge, out=\leftOutAngle, in=\leftInAngle, looseness=\glooseness, samplingEdge]
        node[tags, above left, xshift=-\gxshift, yshift=-\gyshift, mySamplingEdgeLabel]
            {$\tagsymof{\letter{b}}$, $\taglenof{x}$, $\tagpostwoof{x}$, $\tagmisoneof{\letter{b}, x}$}
        (qx1);
    \draw[>=Stealth]
        (qx1)
        edge[edge, out=\rightOutAngle, in=\rightInAngle, looseness=\glooseness, samplingEdge] 
        node[tags, above right, xshift=\gxshiftinner, yshift=-\gyshiftinner, mySamplingEdgeLabel]
            {$\tagsymof{\letter{a}}$, $\taglenof{x}$, $\tagposthreeof{x}$, $\tagmistwoof{\letter{a}, x}$}
        (rx2);
    \draw[>=Stealth]
        (rx1)
        edge[edge, out=\leftOutAngle, in=\leftInAngle, looseness=\glooseness, samplingEdge]
        node[tags, above left, xshift=-\gxshift, yshift=-\gyshift, mySamplingEdgeLabel]
            {$\tagsymof{\letter{b}}$, $\taglenof{x}$, $\tagposthreeof{x}$, $\tagmistwoof{\letter{b}, x}$}
        (qx2);

    \draw[>=Stealth]
        (qy0)
        edge[edge, out=\leftOutAngle, in=\leftInAngle, looseness=\glooseness, samplingEdge] 
        node[tags, above left, xshift=-\gxshift, yshift=-\gyshift, text width=13mm, mySamplingEdgeLabel]
            {$\tagsymof{\letter{a}}$, $\taglenof{y}$, $\tagpostwoof{y}$, $\tagmisoneof{\letter{a}, y}$}
        (ry1);
    \draw[>=Stealth]
        (ry0)
        edge[edge, out=\rightOutAngle, in=\rightInAngle, looseness=\glooseness, samplingEdge]
        node[tags, above right, xshift=\gxshift, yshift=-\gyshift, text width=13mm, mySamplingEdgeLabel]
            {$\tagsymof{\letter{c}}$, $\taglenof{y}$, $\tagpostwoof{y}$, $\tagmisoneof{\letter{c}, y}$}
        (qy1);
    \draw[>=Stealth]
        (qy1)
        edge[edge, out=\leftOutAngle, in=\leftInAngle, looseness=\glooseness, samplingEdge] 
        node[tags, above left, xshift=-\gxshift, yshift=-\gyshift, text width=13mm, mySamplingEdgeLabel]
            {$\tagsymof{\letter{a}}$, $\taglenof{y}$, $\tagposthreeof{y}$, $\tagmistwoof{\letter{a}, y}$}
        (ry2);
    \draw[>=Stealth]
        (ry1)
        edge[edge, out=\rightOutAngle, in=\rightInAngle, looseness=\glooseness, samplingEdge]
        node[tags, above right, xshift=\gxshift, yshift=-\gyshift, text width=13mm, mySamplingEdgeLabel]
            {$\tagsymof{\letter{c}}$, $\taglenof{y}$, $\tagposthreeof{y}$, $\tagmistwoof{\letter{c}, y}$}
        (qy2);
\end{tikzpicture}
  }
    \caption {Example of a tag automaton for the disequality $xy \neq yx$ with $\langof{\aut_x} = (ab)^*$ and $\langof{\aut_y} = (ac)^*$.}
    \label{fig:tagaut}
\vspace{-3mm}
\end{figure}
}

\subsubsection{Tag Automaton Construction}

\begin{figure}
  \scalebox{1.0}{
    \tikzstyle{mySamplingEdgeLabel} = [samplingEdgeLabel, text width=48mm]
\tikzstyle{myTags} = [tags, text width=30mm]
\tikzstyle{myNonSamplingEdge} = [nonSamplingEdge, bend right=11]

\begin{tikzpicture}
    \def\stateDistance{5.0}
    \def\levelDistance{1.6 }
    \def\lastLevelDistance{1.6}
    \def\autDistance{1.6}

    \node[state] (rx0) at (0, 0)                        {$r_x, 1$};
    \node[state]
        (qx0) at ($(rx0)+(\stateDistance, 0)$) {$q_x, 1$};
    \node[state] (rx1) at ($(rx0)+(0,-\levelDistance)$) {$r_x, 2$};
    \node[state] (qx1) at ($(rx1)+(\stateDistance, 0)$) {$q_x, 2$};
    \node[state] (rx2) at ($(rx1)+(0,-\lastLevelDistance)$) {$r_x, 3$};
    \node[state] (qx2) at ($(rx2)+(\stateDistance, 0)$) {$q_x, 3$};

    \node[state,accepting] (qy0) at ($(qx0)+(\autDistance,0)$)      {$q_y, 1$};
    \node[state] (ry0) at ($(qy0)+(\stateDistance, 0)$) {$r_y, 1$};
    \node[state] (qy1) at ($(qy0)+(0,-\levelDistance)$) {$q_y, 2$};
    \node[state] (ry1) at ($(qy1)+(\stateDistance, 0)$) {$r_y, 2$};
    \node[state,accepting] (qy2) at ($(qy1)+(0,-\lastLevelDistance)$) {$q_y, 3$};
    \node[state] (ry2) at ($(qy2)+(\stateDistance, 0)$) {$r_y, 3$};

    \draw[line width=0.25mm] ($(qx0.north)+(0, 0.25)$) edge[->] (qx0.north);

    \draw (qx0) edge[epsEdge] (qy0);
    \draw (qx1) edge[epsEdge] (qy1);
    \draw (qx2) edge[epsEdge] (qy2);

    \draw[>=Stealth] (qx0) edge[myNonSamplingEdge] node[myTags] {$\tagsymof{\letter{a}}$, $\taglenof{x}$, $\tagposoneof{x}$} (rx0);
    \draw[>=Stealth] (rx0) edge[myNonSamplingEdge] node[myTags] {$\tagsymof{\letter{b}}$, $\taglenof{x}$, $\tagposoneof{x}$} (qx0);

    \draw[>=Stealth] (qx1) edge[myNonSamplingEdge] node[myTags] {$\tagsymof{\letter{a}}$, $\taglenof{x}$, $\tagpostwoof{x}$} (rx1);
    \draw[>=Stealth] (rx1) edge[myNonSamplingEdge] node[myTags] {$\tagsymof{\letter{b}}$, $\taglenof{x}$, $\tagpostwoof{x}$} (qx1);

    \draw[>=Stealth] (qx2) edge[myNonSamplingEdge] node[myTags] {$\tagsymof{\letter{a}}$, $\taglenof{x}$, $\tagposthreeof{x}$} (rx2);
    \draw[>=Stealth] (rx2) edge[myNonSamplingEdge] node[myTags] {$\tagsymof{\letter{b}}$, $\taglenof{x}$, $\tagposthreeof{x}$} (qx2);

    \draw[>=Stealth] (qy0) edge[myNonSamplingEdge] node[myTags] {$\tagsymof{\letter{a}}$, $\taglenof{y}$, $\tagposoneof{y}$} (ry0);
    \draw[>=Stealth] (ry0) edge[myNonSamplingEdge] node[myTags] {$\tagsymof{\letter{c}}$, $\taglenof{y}$, $\tagposoneof{y}$} (qy0);
    \draw[>=Stealth] (qy1) edge[myNonSamplingEdge] node[myTags] {$\tagsymof{\letter{a}}$, $\taglenof{y}$, $\tagpostwoof{y}$} (ry1);
    \draw[>=Stealth] (ry1) edge[myNonSamplingEdge] node[myTags] {$\tagsymof{\letter{c}}$, $\taglenof{y}$, $\tagpostwoof{y}$} (qy1);
    \draw[>=Stealth] (qy2) edge[myNonSamplingEdge] node[myTags] {$\tagsymof{\letter{a}}$, $\taglenof{y}$, $\tagposthreeof{y}$} (ry2);
    \draw[>=Stealth] (ry2) edge[myNonSamplingEdge] node[myTags] {$\tagsymof{\letter{c}}$, $\taglenof{y}$, $\tagposthreeof{y}$} (qy2);

    \def\gxshift{0mm}
    \def\gyshift{-0.5mm}
    \def\gxshiftinner{0mm}
    \def\gyshiftinner{-0.5mm}
    \def\angleShift{70}
    \pgfmathsetmacro{\rightInAngle}{90-\angleShift}
    \pgfmathsetmacro{\leftInAngle}{90+\angleShift}
    \def\outAngleShift{-63}
    \pgfmathsetmacro{\rightOutAngle}{-90+\outAngleShift}  
    \pgfmathsetmacro{\leftOutAngle}{-90-\outAngleShift}   
    \def\glooseness{1.3}
    \draw[>=Stealth]
        (qx0)
        edge[out=\rightOutAngle, in=\rightInAngle, looseness=\glooseness, samplingEdge] 
        node[tags, above right, xshift=\gxshiftinner, yshift=-\gyshiftinner, mySamplingEdgeLabel]
            {$\tagsymof{\letter{a}}$, $\taglenof{x}$, $\tagpostwoof{x}$, $\tagmisoneof{\letter{a}, x}$}
        (rx1);
    \draw[>=Stealth]
        (rx0)
        edge[edge, out=\leftOutAngle, in=\leftInAngle, looseness=\glooseness, samplingEdge]
        node[tags, above left, xshift=-\gxshift, yshift=-\gyshift, mySamplingEdgeLabel]
            {$\tagsymof{\letter{b}}$, $\taglenof{x}$, $\tagpostwoof{x}$, $\tagmisoneof{\letter{b}, x}$}
        (qx1);
    \draw[>=Stealth]
        (qx1)
        edge[edge, out=\rightOutAngle, in=\rightInAngle, looseness=\glooseness, samplingEdge] 
        node[tags, above right, xshift=\gxshiftinner, yshift=-\gyshiftinner, mySamplingEdgeLabel]
            {$\tagsymof{\letter{a}}$, $\taglenof{x}$, $\tagposthreeof{x}$, $\tagmistwoof{\letter{a}, x}$}
        (rx2);
    \draw[>=Stealth]
        (rx1)
        edge[edge, out=\leftOutAngle, in=\leftInAngle, looseness=\glooseness, samplingEdge]
        node[tags, above left, xshift=-\gxshift, yshift=-\gyshift, mySamplingEdgeLabel]
            {$\tagsymof{\letter{b}}$, $\taglenof{x}$, $\tagposthreeof{x}$, $\tagmistwoof{\letter{b}, x}$}
        (qx2);

    \draw[>=Stealth]
        (qy0)
        edge[edge, out=\leftOutAngle, in=\leftInAngle, looseness=\glooseness, samplingEdge] 
        node[tags, above left, xshift=-\gxshift, yshift=-\gyshift, text width=13mm, mySamplingEdgeLabel]
            {$\tagsymof{\letter{a}}$, $\taglenof{y}$, $\tagpostwoof{y}$, $\tagmisoneof{\letter{a}, y}$}
        (ry1);
    \draw[>=Stealth]
        (ry0)
        edge[edge, out=\rightOutAngle, in=\rightInAngle, looseness=\glooseness, samplingEdge]
        node[tags, above right, xshift=\gxshift, yshift=-\gyshift, text width=13mm, mySamplingEdgeLabel]
            {$\tagsymof{\letter{c}}$, $\taglenof{y}$, $\tagpostwoof{y}$, $\tagmisoneof{\letter{c}, y}$}
        (qy1);
    \draw[>=Stealth]
        (qy1)
        edge[edge, out=\leftOutAngle, in=\leftInAngle, looseness=\glooseness, samplingEdge] 
        node[tags, above left, xshift=-\gxshift, yshift=-\gyshift, text width=13mm, mySamplingEdgeLabel]
            {$\tagsymof{\letter{a}}$, $\taglenof{y}$, $\tagposthreeof{y}$, $\tagmistwoof{\letter{a}, y}$}
        (ry2);
    \draw[>=Stealth]
        (ry1)
        edge[edge, out=\rightOutAngle, in=\rightInAngle, looseness=\glooseness, samplingEdge]
        node[tags, above right, xshift=\gxshift, yshift=-\gyshift, text width=13mm, mySamplingEdgeLabel]
            {$\tagsymof{\letter{c}}$, $\taglenof{y}$, $\tagposthreeof{y}$, $\tagmistwoof{\letter{c}, y}$}
        (qy2);
\end{tikzpicture}
  }
    \caption {Example of a tag automaton for the disequality $xy \neq yx$ with $\langof{\aut_x} = (ab)^*$ and $\langof{\aut_y} = (ac)^*$.}
    \label{fig:tagaut}
\vspace{-3mm}
\end{figure}

Similarly to the previous section, we assume an NFA~$\aut_x$ for each
variable~$x$ describing the language~$\lang_x$.
We use~$\vars$ to denote the set of all variables in the disequality.
Without loss of generality, we assume that the sets of states of~$A_x$'s are
pairwise disjoint.
We also assume a~fixed linear order on variables $\preccurlyeq$, which is further used to create
a~unique concatenation of tag automata 
for each variable.
First, for each variable $x$ we construct the TA $T_x$ corresponding
to~$\aut_x$ enriched with lengths, i.e., $T_x = \lentag_x(\aut_x)$. Then, we construct 
$\autcon = (Q_{\concat}, \Delta_{\concat}, I_{\concat}, F_{\concat})$
over~$\tags_{\concat}$ as an~$\epsilon$-concatenation of all TAs $T_x$ for
$x\in\vars$ in the order given by~$\preccurlyeq$.

$\aut^{\singlediseq} = (Q_1 \cup Q_2 \cup Q_3, \Delta, I, F)$ is a~TA
over $\tags^{\singlediseq} = \tags_{\concat} \cup \{\tagmisoneof{a, x},
\tagmistwoof{a, x} \mid a \in \alphabet, x \in \vars\} \cup \{ \tagposoneof{x}, \tagpostwoof{x}, \tagposthreeof{x} \mid x\in\vars \}$,  where
the~$\tagmisone$ and~$\tagmistwo$
tags again denote the first and the second mismatch respectively (note that,
contrary to \cref{sec:simpleDiseq}, the mismatch tags here are extended with variables).
The tags $\tagposoneof z$ and $\tagpostwoof z$ are used to count the local 
positions of the first and second mismatch in~$z$ respectively\footnote{We need to consider two
possible mismatches in one variable because an occurrence of a~mismatch can be between
two positions in an~assignment for~$z$, one for the left-hand side and one for the
right-hand side. E.g., consider the disequality $xy \neq yx$ and the assignment
$\{x\mapsto ab,y\mapsto a\}$; the first mismatch is between the~$b$ in~$x$ on
the left-hand side and the~$a$ in~$x$ on the right-hand side.}.
The $\tagposthreeof{x}$ tag will become
important in \cref{sec:notprefix} when
reusing the automaton construction for the $\neg\suffix$ predicate.
$\aut^{\singlediseq}$ is constructed as follows:
\begin{itemize}
  \item  $Q_1 = Q_{\concat} \times \{1\}$,
    $Q_2 = Q_{\concat} \times \{2\}$, and
    $Q_3 = Q_{\concat} \times \{3\}$,
  \item  $I = I_{\concat} \times \{1\}$,
  \item  $F = F_{\concat} \times \{1,3\}$, and

  \item  $\Delta$ is the union of the following sets of transitions:
    \begin{itemize}
      \item  $\{\move {(q,1)} {\tagsymof a, \tagposoneof z, \taglenof{z}} {(r,
        1)} \mid \move q {\tagsymof a, \taglenof{z}} r \in \Delta_{\concat}\}$
        --- transitions in each $\aut_z$ before the first mismatch,
      \item  $\{\move{(q,1)}{\tagsymof a, \tagmisoneof{a,z}, \tagpostwoof z, \taglenof{z}}{(r, 2)}
        \mid \move q {\tagsymof a, \taglenof{z}} r \in \Delta_{\concat}\}$
        --- the first mismatch,
      \item  $\{\move {(q,2)} {\tagsymof a, \tagpostwoof z, \taglenof{z}} {(r,
        2)} \mid \move q {\tagsymof a, \taglenof{z}} r \in \Delta_{\concat}\}$
        --- transitions in each $\aut_z$ before the second mismatch,
      \item  $\{\move{(q,2)}{\tagsymof a, \tagmistwoof{a,z}, \tagposthreeof{z}, \taglenof{z}}{(r, 3)}
        \mid \move q {\tagsymof a, \taglenof{z}} r \in \Delta_{\concat}\}$
        --- the second mismatch,
      \item  $\{\move {(q,3)} {\tagsymof a, \taglenof{z}, \tagposthreeof{z}} {(r,3)} \mid \move q
        {\tagsymof a, \taglenof{z}} r \in \Delta_{\concat}\}$
        --- transitions in each $\aut_z$ after the~second mismatch, and
      \item  $\{\move {(q,i)} {} {(r,i)} \mid \move q {} r \in \Delta_{\concat},
        1 \le i \le 3 \}$ --- transitions connecting
        variables on level~$i$.
    \end{itemize}
\end{itemize}



\subsubsection{Formula Construction}

For the satisfiability checking of the general disequality, we generalize the LIA reduction 
from the previous section. As in the previous case, the LIA formula speaks about properties
of $\aut^{\singlediseq}$ using the Parikh tag formula $\parikhformtagof{\aut^{\singlediseq}}$.

First, the formula expressing that lengths of both sides are different can be
defined as follows:
\vspace{-1mm}
\begin{equation}
  \formdiseqlen^{\singlediseq} \quad\defiff\quad
    \sum_{1 \leq i \leq n} \numof{\taglenof{x_i}}
    \neq
    \sum_{1 \leq j \leq m} \numof{\taglenof{y_j}}.
\end{equation}
\vspace{-4mm}

Next, for the case the lengths are the same but there is a~mismatch, we begin by
defining a~formula that checks that the particular mismatch symbols are
different (and that there is at least one mismatch) by generalizing the
formula~$\formdiseqsym^{\singlesimplediseq}$ from the previous section:
\vspace{-1mm}
\begin{equation}
  \formdiseqsym^{\singlediseq} \quad\defiff\quad
  \bigwedge_{a \in \alphabet} \left(\sum_{x \in \vars} \left(\numof{\tagmisoneof
  {x,a}} + \numof{\tagmistwoof {x,a}}\right) < 2\right)
\end{equation}
\vspace{-2mm}

In order to check whether the global mismatch positions on both sides are equal,
we need to make a~case split ranging over all pairs~$(x_i, y_j)$ of occurrences of variables
from the left-hand side and the right-hand of the disequality.
For each such a~pair, we define an~auxiliary formula~$\formdiseqposij$ comparing
global mismatch positions when the mismatch is between the two occurrences.

\begin{enumerate}
  \item  If $x_i$ and $y_j$ are occurrences of a~different variable then:
    \begin{itemize}
      \item  if $x_i \prec y_j$,
        \vspace{-2mm}
        \begin{equation}
          \formdiseqposij \quad\defiff\quad
            \numof {\tagposoneof{x_i}} +
            \sum_{1 \leq u < i} \numof{\taglenof{x_u}}
            =
            \numof{\tagpostwoof{y_j}} +
            \sum_{1 \leq v < j} \numof{\taglenof{y_v}},
        \end{equation}
        \vspace{-5mm}
      \item  if $x_i \succ y_j$,
        \vspace{-2mm}
        \begin{equation}
          \formdiseqposij \quad\defiff\quad
            \numof {\tagpostwoof{x_i}} +
            \sum_{1 \leq u < i} \numof{\taglenof{x_u}}
            =
            \numof{\tagposoneof{y_j}} +
            \sum_{1 \leq v < j} \numof{\taglenof{y_v}}.
        \end{equation}
        \vspace{-4mm}
    \end{itemize}
    The formulae express that the mismatch 
    position is given by the sum of lengths of preceding variable assignments and the local mismatch position in the particular variable $x_i$ or $y_j$.
  \item  If $x_i$ and $y_j$ are occurrences of the same variable~$z$, in order
    to get the position of the second local mismatch we have to add
    $\numof{\tagposoneof{z}}$ to $\numof{\tagpostwoof{z}}$ since the second
    local mismatch in $z$ has to be counted from the beginning of $z$ and not
    from the beginning of the previous mismatch.
    Formally, the formula is given as
    \vspace{-2mm}
    \begin{equation}
    \begin{aligned}
      \formdiseqposij \quad\defiff \quad& \left(
        \numof{\tagposoneof{z}} +
        \sum_{1 \leq u < i} \numof{\taglenof{x_u}}
        =
        \numof{\tagposoneof{z}} + \numof{\tagpostwoof{z}} +
        \sum_{1 \leq v < j} \numof{\taglenof{y_v}}
      \right) \lor {} \\
      &
      \left(
      \numof{\tagposoneof{z}} + \numof{\tagpostwoof{z}} +
      \sum_{1 \leq u < i} \numof{\taglenof{x_u}} =
      \numof{\tagposoneof{z}} +
      \sum_{1 \leq v < j} \numof{\taglenof{y_v}}
      \right).
    \end{aligned}
    \end{equation}
\end{enumerate}
Then, for each $\formdiseqposij$, we need to combine it with a~formula that
says that there are indeed mismatches in~$x_i$ and~$y_j$, to obtain formulae
$\varphi_{i,j}$ as follows:
\begin{itemize}
  \item  if $x_i \preccurlyeq y_j$ (including the case when they are occurrences of the same variable),

\vspace{-2mm}
\begin{equation}
  \varphi_{i,j} \quad\defiff\quad \formdiseqposij
  \land
  \sum_{a\in\alphabet} \numof{\tagmisoneof{x_i, a}} > 0
  \land
  \sum_{a\in\alphabet} \numof{\tagmistwoof{y_j, a}} > 0, 
\end{equation}
\vspace{-3mm}

  \item if $x_i \succ y_j$,
\vspace{-1mm}
\begin{equation}
  \varphi_{i,j} \quad\defiff\quad \formdiseqposij
  \land
  \sum_{a\in\alphabet} \numof{\tagmisoneof{y_j, a}} > 0
  \land
  \sum_{a\in\alphabet} \numof{\tagmistwoof{x_i, a}} > 0.
\end{equation}
\vspace{-2mm}
\end{itemize}
%

We do the case split based on the order of variables since the construction
of~$\aut^{\singlesimplediseq}$ guarantees in which variable will be which
mismatch.
All~$\varphi_{i,j}$ formulae are then collected into the formula
\begin{equation}
  \formdiseqmis^{\singlediseq}\quad\defiff\quad \bigvee_{\substack{1 \leq i \leq n\\ 1 \leq j \leq m}} \varphi_{i,j} 
\end{equation}
and the final formula equisatisfiable to $x_1 \ldots x_n \neq y_1 \ldots y_m$ is
then defined as
\begin{equation}
  \varphi^{\singlediseq} \quad\defiff\quad \parikhformtagof{\aut^{\singlediseq}} \land
    \Big(\formdiseqlen^{\singlediseq} \lor 
    \big(\formdiseqsym^{\singlediseq} \land \formdiseqmis^{\singlediseq}
    \big)\Big).
\end{equation}

\cbstart
\begin{theorem}\label{thm:}
The formula $\regconstr' \land \intconstr \land x_1 \ldots x_n \neq y_1 \ldots y_m$
  is equisatisfiable to the formula
$\intconstr \land \varphi^{\singlediseq}$.
Moreover, the size of~$\varphi^{\singlediseq}$ is polynomial to~$nm\cdot|\regconstr'|$.
\end{theorem}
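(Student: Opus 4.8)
The plan is to prove the two directions of the equisatisfiability separately and to read the size bound off the construction. Since $\intconstr$ mentions only integer variables and no length terms, it interacts with neither $\regconstr'$ nor the fresh variables $\numof\delta,\numof t$ of $\varphi^{\singlediseq}$, so it suffices to show that $\regconstr' \land (x_1\ldots x_n \neq y_1\ldots y_m)$ and $\varphi^{\singlediseq}$ are equisatisfiable, with the integer part carried along unchanged on both sides. I rely on the defining property of $\parikhformtagof{\cdot}$ recalled in \cref{sec:tag_automaton}: its models are exactly the tag-count vectors $\parikhofrun\run$ of accepting runs $\run$ of $\aut^{\singlediseq}$. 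I will also use three structural facts about accepting runs, all immediate from the construction: (a) every accepting run traverses each variable block of $\autcon$ exactly once (possibly spanning two or three copies), reading inside the block $\aut_z$ an entire word $w_z\in\lang_z$, so a run induces an assignment $z\mapsto w_z$; (b) a run either stays in copy~$1$ and carries no mismatch tag, or it uses exactly one $\tagmisone$- and one $\tagmistwo$-transition and accepts in copy~$3$; (c) on such a run $\numof{\taglenof z}=|w_z|$, and $\numof{\tagposoneof z}$, $\numof{\tagpostwoof z}$ are, respectively, the length of the part of $w_z$ read before the first mismatch and the length of the part read from the first mismatch up to (but not including) the second, so that---summed with the multiplicities with which variables occur on the two sides of the disequality---the sums appearing in $\formdiseqlen^{\singlediseq}$ and in the various $\formdiseqposij$ evaluate exactly to $|w_x|$, $|w_y|$, and the two global mismatch positions.

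For completeness ($\Rightarrow$), take an assignment $\assgn\models\regconstr'$ with $w_x:=\assgn(x_1)\cdots\assgn(x_n)$, $w_y:=\assgn(y_1)\cdots\assgn(y_m)$, $w_x\neq w_y$. If $|w_x|\neq|w_y|$, route a run entirely inside copy~$1$, following in each block the accepting path of $\aut_z$ on $\assgn(z)$; by (c) this run satisfies $\parikhformtagof{\aut^{\singlediseq}}\land\formdiseqlen^{\singlediseq}$, hence $\varphi^{\singlediseq}$. Otherwise $|w_x|=|w_y|$ and some global position $g$ has $w_x[g]\neq w_y[g]$; $g$ lies in some occurrence $x_i$ on the left and some occurrence $y_j$ on the right, at local positions $p=g-\sum_{u<i}|\assgn(x_u)|$ and $p'=g-\sum_{v<j}|\assgn(y_v)|$, with $p\neq p'$ whenever $x_i,y_j$ are occurrences of the same variable (else $w_z[p]=w_z[p']$, contradicting the mismatch). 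I build a run that traverses copy~$1$ up to the earlier of the two mismatches (the one in the $\preccurlyeq$-smaller block, or the one with the smaller local position when they share a block), takes the corresponding $\tagmisone$-transition into copy~$2$, proceeds to the later mismatch, takes the $\tagmistwo$-transition into copy~$3$, and finishes. By (b)--(c) the tag counts verify the disjunct $\varphi_{i,j}$ of $\formdiseqmis^{\singlediseq}$ (the $\tagmisone/\tagmistwo$-count conjuncts since there is one mismatch of each kind, located in the blocks of $x_i$ and $y_j$; the equation $\formdiseqposij$ since both sides equal $g$, using the disjunction in $\formdiseqposij$ precisely to match whichever of $x_i,y_j$ carries the smaller local position in the same-variable case), and $\formdiseqsym^{\singlediseq}$ holds because the two sampled symbols are $w_x[g]$ and $w_y[g]$, which differ. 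Hence $\varphi^{\singlediseq}$ is satisfied.

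For soundness ($\Leftarrow$), take a model $\sigma$ of $\intconstr\land\varphi^{\singlediseq}$; by the property of $\parikhformtagof{\cdot}$ there is an accepting run $\run$ of $\aut^{\singlediseq}$ whose tag-count vector agrees with $\sigma$, and by (a) it induces an assignment $\assgn$ with $\assgn(z)\in\lang_z$ for all $z$, i.e.\ $\assgn\models\regconstr'$. If $\run$ stays in copy~$1$, no $\tagmisone$ tag occurs, so $\formdiseqmis^{\singlediseq}$ is false and the model satisfies $\formdiseqlen^{\singlediseq}$; by (c) this says $|w_x|\neq|w_y|$, hence $w_x\neq w_y$. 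Otherwise $\run$ uses exactly one $\tagmisone$-transition (in the block of some occurrence $x_i$) and one $\tagmistwo$-transition (in the block of some occurrence $y_j$), and the satisfied disjunct $\varphi_{i,j}$ forces via $\formdiseqposij$ and (c) that the two positions read by these transitions are the same global position $g$ of $w_x$ and $w_y$, while $\formdiseqsym^{\singlediseq}$ forces the two read symbols to differ; thus $w_x[g]\neq w_y[g]$ and $w_x\neq w_y$. In both cases $\assgn$, extended by $\sigma$ on integer variables, witnesses $\regconstr'\land\intconstr\land(x_1\ldots x_n\neq y_1\ldots y_m)$. For the size bound, $\autcon$ has $\sum_z|Q_z|\le|\regconstr'|$ states, and $\aut^{\singlediseq}$ consists of three copies of $\autcon$ plus a constant number of extra transitions per transition of $\autcon$ with a fixed-size tag set, so $\parikhformtagof{\aut^{\singlediseq}}$ has size polynomial in $|\regconstr'|$ by the standard Parikh-image construction; the remaining conjuncts add $\formdiseqlen^{\singlediseq}$, $\formdiseqsym^{\singlediseq}$, and the $nm$-way disjunction $\bigvee_{i,j}\varphi_{i,j}$ with each $\varphi_{i,j}$ of size $\bigOOf{n+m}$, giving total size polynomial in $nm\cdot|\regconstr'|$.

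I expect the main obstacle to be the completeness direction, and within it the case where $x_i$ and $y_j$ are occurrences of the same variable $z$: then the single $z$-block of $\autcon$ must be traversed across all three copies, the first-mismatch transition itself already carries a $\tagpostwo$-tag (for $z$), so the local position of the second mismatch is $\numof{\tagposoneof z}+\numof{\tagpostwoof z}$ rather than $\numof{\tagpostwoof z}$, and one must argue that the disjunction inside $\formdiseqposij$ always captures the true ordering of the two local mismatch positions---which is well defined precisely because a genuine mismatch within one variable forces those positions to differ. A secondary but pervasive point of care, already packaged into fact (c), is that each variable block is visited only once, yet the LIA sums must reintroduce the multiplicities of repeated occurrences on both sides of the disequality.
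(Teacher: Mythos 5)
Your proof is correct and follows essentially the same route as the paper, which states this theorem without a separate proof and relies on exactly the correspondence you formalize: accepting runs of $\aut^{\singlediseq}$ induce assignments satisfying $\regconstr'$, the tag counts $\numof{\taglenof z}$, $\numof{\tagposoneof z}$, $\numof{\tagpostwoof z}$ recover lengths and local mismatch positions (with occurrence multiplicities reintroduced by the sums in $\formdiseqlen^{\singlediseq}$ and $\formdiseqposij$), and the case split on $x_i \prec y_j$, $x_i \succ y_j$, and same-variable occurrences matches the construction's guarantee of which block carries the run-order-first mismatch. Your treatment of the same-variable case (local positions must differ, second position equals $\numof{\tagposoneof z}+\numof{\tagpostwoof z}$, and the disjunction in $\formdiseqposij$ selects the true ordering) and the size accounting are exactly the intended justification, so there is nothing to flag.
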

\cbend

\vspace{-3.0mm}
\subsection{\textrm{III}: A System of Disequalities }\label{sec:diseqSystem}
\vspace{-0.0mm}

We now move to the general case of a~system of disequalities, which all need to
be satisfied at the same time:
\begin{equation}
  \bigwedge_{1\leq i \leq n} L_i \neq R_i
\end{equation}
where each~$L_i$ and~$R_i$ are arbitrary concatenations of variables, with
potentially multiple occurrences in multiple disequalities.
We, again, construct a~tag automaton and a~corresponding LIA formula for it.
The tag automaton for this case will be more complex.

One could extend the construction of~$\aut^{\singlediseq}$ from the previous
section in a~straightforward manner by creating more copies of~$\autcon$.
With multiple disequalities, we need to keep track of satisfying position
mismatches in particular disequalities, so that we do not count two mismatches
in one disequality and zero mismatches in another disequality.
If done in a~straightforward way, we would need to consider all possible orders
of mismatches in different disequalities, basically having one copy of the tag
automaton~$\aut^{\singlediseq}$ for each such an order.
The number of these copies and the size of the resulting automaton would,
however, be intractable.
In particular, if we consider a~set of disequalities $D = \{D_1, \ldots, D_n\}$,
we would need $\frac{(2n)!}{2^n} \in 2^{\bigtheta{n \log n}}$ such copies
(obtained as the number of permutations of a~set of $n$~pairs of symbols
$\tagmisone^i, \tagmistwo^i$, respecting the order $\tagmisone^i \prec
\tagmistwo^i$).

Another issue that we need to take into consideration is the fact that one
mismatched symbol may be used for solving more than one disequality.
For instance, for the system of disequalities $x\neq y \land x\neq z$ and its
model $\{ x\mapsto a, y\mapsto b, z\mapsto c \}$, the value of~$x$ is
a~mismatch for both disequalities.
To deal with these issues, we take a~more involved approach.
Our approach is based on introducing two new types of tags:
\begin{enumerate}[(i)]
  \item  Instead of mismatch tags $\tagof{\tagmis_i, a, x}$ for $i \in \{1,2\}$
    from~$\tags^{\singlediseq}$, we use more complex tags for mismatches of the
    form $\tagof{\tagmis_i, x, D, s, a}$ denoting that the $i$-th mismatched
    symbol for the disequality $D$ on the side $s \in\{ \leftsymb, \rightsymb
    \}$ tracked for the variable $x$ was~$a$.
    The mismatches can appear in an arbitrary order in an accepting run of the
    tag automaton (potentially also multiple or zero times), so we will need to
    extend the final LIA formula with a~part that makes sure that we have
    a~mismatch for both sides of every disequality.

  \item  We introduce $\tagcop$opy tags $\tagof{\tagcop_i, x, D,
    s}$, which express that the $i$-th mismatch symbol for disequality~$D$
    and side~$s \in \{\leftsymb, \rightsymb\}$ is given by the latest symbol
    sampled by a~$\tagmis$-tag for variable~$x$.

\end{enumerate}
With these two new kinds of tags and corresponding constraints added to the
final LIA formula, we suffice with having a~tag automaton with only $2n+1$
copies of~$\autcon$.

\newcommand{
\begin{figure}
  \centering
  \scalebox{1.0}{
  \begin{tikzpicture}
    \node[state, initial, initial text={}, initial distance=6pt] (qx) at (0, 0) {$q_x$};
    \node[state] (rx) at ($(qx)+(1.3,0)$) {$r_x$};
    \node[state, accepting] (sx) at ($(rx)+(1.3,0)$) {$s_x$};
    \draw (qx) edge[->] node[above] {\letter{a}} (rx);
    \draw (rx) edge[->] node[above] {\letter{b}} (sx);
    \node[autLabel] (ax_label) at ($(rx)+(0, 0.5)$) {$A_x$};
    \node[fit=(ax_label)(qx)(rx)(sx), autBoundingBox, inner xsep=2mm, inner ysep=0.45mm, xshift=-1mm] {};

    \node[state, initial, initial text={}, initial distance=6pt] (qy) at (4, 0) {$q_y$};
    \node[state] (ry) at ($(qy)+(1.3,0)$) {$r_y$};
    \node[state, accepting] (sy) at ($(ry)+(1.3,0)$) {$s_y$};
    \draw (qy) edge[->] node[above] {\letter{a}} (ry);
    \draw (ry) edge[->] node[above] {\letter{c}} (sy);
    \node[autLabel] (ay_label) at ($(ry)+(0, 0.5)$) {$A_y$};
    \node[fit=(ay_label)(qy)(ry)(sy), autBoundingBox, inner xsep=2mm, xshift=-1mm, inner ysep=0.38mm, yshift=-0.1mm] {};

    \node[state, initial, initial text={}, initial distance=6pt] (qz) at (8, 0) {$q_z$};
    \node[state] (rz) at ($(qz)+(1.3,0)$) {$r_z$};
    \node[state, accepting] (sz) at ($(rz)+(1.3,0)$) {$s_z$};
    \draw (qz) edge[->] node[above] {\letter{a}} (rz);
    \draw (rz) edge[->] node[above] {\letter{d}} (sz);
    \node[autLabel] (az_label) at ($(rz)+(0, 0.5)$) {$A_z$};
    \node[fit=(az_label)(qz)(rz)(sz), autBoundingBox, inner xsep=1mm, inner ysep=0.45mm, xshift=-1mm] {};

	\node[runState] (r1) at (-0.4, -0.9)       {$(q_x, 1)$};
	\node[runState] (r2) at ($(r1)+(2.5, 0)$) {$(r_x, 1)$};
	\node[runState] (r3) at ($(r2)+(3.2, 0)$) {$(s_x, 2)$};
	\node[runState] (r4) at ($(r3)+(3.1, 0)$) {$(s_x, 3)$};
	\node[runState] (r5) at ($(r4)+(1.7, 0)$) {$(q_y, 3)$};
	\node[runState] (r6) at ($(r5)+(0, -0.9)$)  {$(r_y, 3)$};
	\node[runState] (r7) at ($(r6)+(-3.2, 0)$)  {$(s_y, 4)$};
	\node[runState] (r8) at ($(r7)+(-1.7, 0)$)  {$(q_z, 4)$};
	\node[runState] (r9) at ($(r8)+(-2.5, 0)$)  {$(r_z, 4)$};
	\node[runState, fill=orange!50] (r10) at ($(r9)+(-3.2, 0)$)  {$(s_z, 5)$};

    \draw[runArrows] (r1) edge[->] node[tags] {$\tagsymof{\letter{a}}$ \\ $\taglenof{x}$ \\ $\tagof {\tagpos_1, x}$} (r2);
	\draw[runArrows] (r2) edge[->] node[tags, text width=2.3cm]
        {$\tagsymof{\letter{b}}$, $\taglenof{x}$ \\ $\tagof {\tagpos_2, x}$ \\ $\tagof{ \tagmis_1, x, D_1, {\leftsymb}, \letter{b}}$ } (r3);
    \draw[runArrows] (r3) edge[->] node[tags, text width=2.1cm] {$\tagof{ \tagcop_2, {x}, D_2, \leftsymb}$} (r4);
	\draw[runArrows] (r4) edge[->] node[tags, text width=2mm]
		{$ $} (r5);
	\draw[runArrows] (r5) edge[->, out=-10, in=10, looseness=1.6]
        node[tags, xshift=3mm] {$\tagsymof{\letter{a}}$ \\ $\taglenof{y}$ \\ $\tagof {\tagpos_3, y}$} (r6);

	\draw[runArrows] (r6) edge[->] node[tags, text width=2.3cm]
        {$\tagsymof{\letter{c}}$, $\taglenof{y}$ \\ $\tagof {\tagpos_4, y}$ \\ $\tagof{\tagmis_3, y, D_1, \rightsymb, \letter{c}}$ } (r7);

	\draw[runArrows] (r7) edge[->] node[tags, text width=2mm] {$ $} (r8);
    \draw[runArrows] (r8) edge[->] node[tags] {$\tagsymof{\letter{a}}$ \\ $\taglenof{z}$ \\ $\tagof {\tagpos_4, z}$} (r9);
	\draw[runArrows] (r9) edge[->] node[tags, text width=2.3cm]
        {$\tagsymof{\letter{d}}$, $\taglenof{z}$ \\ $\tagof {\tagpos_5, z}$ \\ $\tagof{\tagmis_4, x, D_2, \rightsymb, \letter{d}} $ } (r10);
\end{tikzpicture}
  }
  \vspace*{-3mm}
  \caption{
      An example of a run satisfying the system $D_1 \land D_2$ where
      $D_1 \defiff x \neq y$ and $D_2 \defiff x \neq z$.
  }
  \label{fig:runMultiDiseqs}
  \vspace{-3mm}
\end{figure}
}[0]{
\begin{figure}
  \centering
  \scalebox{1.0}{
  \begin{tikzpicture}
    \node[state, initial, initial text={}, initial distance=6pt] (qx) at (0, 0) {$q_x$};
    \node[state] (rx) at ($(qx)+(1.3,0)$) {$r_x$};
    \node[state, accepting] (sx) at ($(rx)+(1.3,0)$) {$s_x$};
    \draw (qx) edge[->] node[above] {\letter{a}} (rx);
    \draw (rx) edge[->] node[above] {\letter{b}} (sx);
    \node[autLabel] (ax_label) at ($(rx)+(0, 0.5)$) {$A_x$};
    \node[fit=(ax_label)(qx)(rx)(sx), autBoundingBox, inner xsep=2mm, inner ysep=0.45mm, xshift=-1mm] {};

    \node[state, initial, initial text={}, initial distance=6pt] (qy) at (4, 0) {$q_y$};
    \node[state] (ry) at ($(qy)+(1.3,0)$) {$r_y$};
    \node[state, accepting] (sy) at ($(ry)+(1.3,0)$) {$s_y$};
    \draw (qy) edge[->] node[above] {\letter{a}} (ry);
    \draw (ry) edge[->] node[above] {\letter{c}} (sy);
    \node[autLabel] (ay_label) at ($(ry)+(0, 0.5)$) {$A_y$};
    \node[fit=(ay_label)(qy)(ry)(sy), autBoundingBox, inner xsep=2mm, xshift=-1mm, inner ysep=0.38mm, yshift=-0.1mm] {};

    \node[state, initial, initial text={}, initial distance=6pt] (qz) at (8, 0) {$q_z$};
    \node[state] (rz) at ($(qz)+(1.3,0)$) {$r_z$};
    \node[state, accepting] (sz) at ($(rz)+(1.3,0)$) {$s_z$};
    \draw (qz) edge[->] node[above] {\letter{a}} (rz);
    \draw (rz) edge[->] node[above] {\letter{d}} (sz);
    \node[autLabel] (az_label) at ($(rz)+(0, 0.5)$) {$A_z$};
    \node[fit=(az_label)(qz)(rz)(sz), autBoundingBox, inner xsep=1mm, inner ysep=0.45mm, xshift=-1mm] {};

	\node[runState] (r1) at (-0.4, -0.9)       {$(q_x, 1)$};
	\node[runState] (r2) at ($(r1)+(2.5, 0)$) {$(r_x, 1)$};
	\node[runState] (r3) at ($(r2)+(3.2, 0)$) {$(s_x, 2)$};
	\node[runState] (r4) at ($(r3)+(3.1, 0)$) {$(s_x, 3)$};
	\node[runState] (r5) at ($(r4)+(1.7, 0)$) {$(q_y, 3)$};
	\node[runState] (r6) at ($(r5)+(0, -0.9)$)  {$(r_y, 3)$};
	\node[runState] (r7) at ($(r6)+(-3.2, 0)$)  {$(s_y, 4)$};
	\node[runState] (r8) at ($(r7)+(-1.7, 0)$)  {$(q_z, 4)$};
	\node[runState] (r9) at ($(r8)+(-2.5, 0)$)  {$(r_z, 4)$};
	\node[runState, fill=orange!50] (r10) at ($(r9)+(-3.2, 0)$)  {$(s_z, 5)$};

    \draw[runArrows] (r1) edge[->] node[tags] {$\tagsymof{\letter{a}}$ \\ $\taglenof{x}$ \\ $\tagof {\tagpos_1, x}$} (r2);
	\draw[runArrows] (r2) edge[->] node[tags, text width=2.3cm]
        {$\tagsymof{\letter{b}}$, $\taglenof{x}$ \\ $\tagof {\tagpos_2, x}$ \\ $\tagof{ \tagmis_1, x, D_1, {\leftsymb}, \letter{b}}$ } (r3);
    \draw[runArrows] (r3) edge[->] node[tags, text width=2.1cm] {$\tagof{ \tagcop_2, {x}, D_2, \leftsymb}$} (r4);
	\draw[runArrows] (r4) edge[->] node[tags, text width=2mm]
		{$ $} (r5);
	\draw[runArrows] (r5) edge[->, out=-10, in=10, looseness=1.6]
        node[tags, xshift=3mm] {$\tagsymof{\letter{a}}$ \\ $\taglenof{y}$ \\ $\tagof {\tagpos_3, y}$} (r6);

	\draw[runArrows] (r6) edge[->] node[tags, text width=2.3cm]
        {$\tagsymof{\letter{c}}$, $\taglenof{y}$ \\ $\tagof {\tagpos_4, y}$ \\ $\tagof{\tagmis_3, y, D_1, \rightsymb, \letter{c}}$ } (r7);

	\draw[runArrows] (r7) edge[->] node[tags, text width=2mm] {$ $} (r8);
    \draw[runArrows] (r8) edge[->] node[tags] {$\tagsymof{\letter{a}}$ \\ $\taglenof{z}$ \\ $\tagof {\tagpos_4, z}$} (r9);
	\draw[runArrows] (r9) edge[->] node[tags, text width=2.3cm]
        {$\tagsymof{\letter{d}}$, $\taglenof{z}$ \\ $\tagof {\tagpos_5, z}$ \\ $\tagof{\tagmis_4, x, D_2, \rightsymb, \letter{d}} $ } (r10);
\end{tikzpicture}
  }
  \vspace*{-3mm}
  \caption{
      An example of a run satisfying the system $D_1 \land D_2$ where
      $D_1 \defiff x \neq y$ and $D_2 \defiff x \neq z$.
  }
  \label{fig:runMultiDiseqs}
  \vspace{-3mm}
\end{figure}
}

\subsubsection{Tag Automaton Construction}

Let $\autcon$ be the $\epsilon$-concatenation of NFAs for all variables
obtained in the same way as described in \cref{sec:single-gen-diseq}.
Then $\aut^{\multidiseq} = ( Q, \Delta, I, F)$ is a~TA
over $\tags^{\multidiseq} = \tags_{\concat} \cup \{\tagof{\tagmis_i, x, D, s, a}, \tagof{\tagcop_i, x, D, s} \mid a \in \alphabet, x \in \vars, 1 \leq i \leq 2n, 1\leq D \leq n, s \in\{ \leftsymb, \rightsymb \}\} \cup \{ \tagof{\tagpos_i, x} \mid x\in\vars, 1 \leq i \leq 2n + 1 \}$.
$\aut^{\multidiseq}$ is constructed as follows:
\begin{itemize}
  \item  $Q = \{ (q,i) \mid q \in Q_{\concat}, 1 \leq i \leq 2n + 1 \}$,
  \item  $I = I_{\concat} \times \{1\}$,
  \item  $F = F_{\concat} \times \{1, 3, \dots, 2n + 1\}$, and

  \item  $\Delta$ is the union of the following sets of transitions:
    \begin{itemize}
      \item  $\{\move {(q,i)} {\tagsymof a, \taglenof z, \tagof{\tagpos_i, z}} {(r,
        i)} \mid \move q {\tagsymof a, \taglenof{z}} r \in \Delta_{\concat}, 1 \leq i \leq 2n + 1\}$,
      \item  $\{\move {(q,i)} {} {(r,
        i)} \mid \move q {} r \in \Delta_{\concat}, 1 \leq i \leq 2n + 1\}$,
    \item $\{\move{(q,i)}{\tagsymof a, \tagof{\tagmis_{i}, z, D, s, a}, \taglenof{z}, \tagof{\tagpos_{i+1}, z}}{(r, i+1 )}
        \mid \move q {\tagsymof a, \taglenof{z}} r \in \Delta_{\concat}, 1 \leq D \leq n, 1 \le i \le 2n, s \in\{ \leftsymb, \rightsymb \} \}$ --- a~mismatch guess for the disequality $D$ and its side $s$, and
      \item $\{\move{(q,i)}{\tagof{\tagcop_i, x, D, s}}{(q, i+1 )}
      \mid 1 \leq D \leq n, 2 \le i \le 2n, s \in\{ \leftsymb, \rightsymb \}\}$
        --- a~guess that a~mismatch previously seen in $x$ is shared with the
        disequality $D$ and its side $s$.
    \end{itemize}
\end{itemize}
A run of $\aut^{\multidiseq}$ nondeterministically guesses possible mismatches for 
disequalities, as well as which mismatch is shared by multiple disequalities (the
correctness of the guess is enforced by the final LIA formula).
The run also guesses which disequalities are satisfied due to a mismatch and
which are satisfied by the lengths violation (that is why $F$ contains
accepting states within all odd-labelled internal copies: each length-satisfied
disequality removes the need for two mismatches).
An example of a~selected run of~$\aut^{\multidiseq}$ is shown in
\cref{fig:runMultiDiseqs}.

\begin{figure}
  \centering
  \scalebox{1.0}{
  \begin{tikzpicture}
    \node[state, initial, initial text={}, initial distance=6pt] (qx) at (0, 0) {$q_x$};
    \node[state] (rx) at ($(qx)+(1.3,0)$) {$r_x$};
    \node[state, accepting] (sx) at ($(rx)+(1.3,0)$) {$s_x$};
    \draw (qx) edge[->] node[above] {\letter{a}} (rx);
    \draw (rx) edge[->] node[above] {\letter{b}} (sx);
    \node[autLabel] (ax_label) at ($(rx)+(0, 0.5)$) {$A_x$};
    \node[fit=(ax_label)(qx)(rx)(sx), autBoundingBox, inner xsep=2mm, inner ysep=0.45mm, xshift=-1mm] {};

    \node[state, initial, initial text={}, initial distance=6pt] (qy) at (4, 0) {$q_y$};
    \node[state] (ry) at ($(qy)+(1.3,0)$) {$r_y$};
    \node[state, accepting] (sy) at ($(ry)+(1.3,0)$) {$s_y$};
    \draw (qy) edge[->] node[above] {\letter{a}} (ry);
    \draw (ry) edge[->] node[above] {\letter{c}} (sy);
    \node[autLabel] (ay_label) at ($(ry)+(0, 0.5)$) {$A_y$};
    \node[fit=(ay_label)(qy)(ry)(sy), autBoundingBox, inner xsep=2mm, xshift=-1mm, inner ysep=0.38mm, yshift=-0.1mm] {};

    \node[state, initial, initial text={}, initial distance=6pt] (qz) at (8, 0) {$q_z$};
    \node[state] (rz) at ($(qz)+(1.3,0)$) {$r_z$};
    \node[state, accepting] (sz) at ($(rz)+(1.3,0)$) {$s_z$};
    \draw (qz) edge[->] node[above] {\letter{a}} (rz);
    \draw (rz) edge[->] node[above] {\letter{d}} (sz);
    \node[autLabel] (az_label) at ($(rz)+(0, 0.5)$) {$A_z$};
    \node[fit=(az_label)(qz)(rz)(sz), autBoundingBox, inner xsep=1mm, inner ysep=0.45mm, xshift=-1mm] {};

	\node[runState] (r1) at (-0.4, -0.9)       {$(q_x, 1)$};
	\node[runState] (r2) at ($(r1)+(2.5, 0)$) {$(r_x, 1)$};
	\node[runState] (r3) at ($(r2)+(3.2, 0)$) {$(s_x, 2)$};
	\node[runState] (r4) at ($(r3)+(3.1, 0)$) {$(s_x, 3)$};
	\node[runState] (r5) at ($(r4)+(1.7, 0)$) {$(q_y, 3)$};
	\node[runState] (r6) at ($(r5)+(0, -0.9)$)  {$(r_y, 3)$};
	\node[runState] (r7) at ($(r6)+(-3.2, 0)$)  {$(s_y, 4)$};
	\node[runState] (r8) at ($(r7)+(-1.7, 0)$)  {$(q_z, 4)$};
	\node[runState] (r9) at ($(r8)+(-2.5, 0)$)  {$(r_z, 4)$};
	\node[runState, fill=orange!50] (r10) at ($(r9)+(-3.2, 0)$)  {$(s_z, 5)$};

    \draw[runArrows] (r1) edge[->] node[tags] {$\tagsymof{\letter{a}}$ \\ $\taglenof{x}$ \\ $\tagof {\tagpos_1, x}$} (r2);
	\draw[runArrows] (r2) edge[->] node[tags, text width=2.3cm]
        {$\tagsymof{\letter{b}}$, $\taglenof{x}$ \\ $\tagof {\tagpos_2, x}$ \\ $\tagof{ \tagmis_1, x, D_1, {\leftsymb}, \letter{b}}$ } (r3);
    \draw[runArrows] (r3) edge[->] node[tags, text width=2.1cm] {$\tagof{ \tagcop_2, {x}, D_2, \leftsymb}$} (r4);
	\draw[runArrows] (r4) edge[->] node[tags, text width=2mm]
		{$ $} (r5);
	\draw[runArrows] (r5) edge[->, out=-10, in=10, looseness=1.6]
        node[tags, xshift=3mm] {$\tagsymof{\letter{a}}$ \\ $\taglenof{y}$ \\ $\tagof {\tagpos_3, y}$} (r6);

	\draw[runArrows] (r6) edge[->] node[tags, text width=2.3cm]
        {$\tagsymof{\letter{c}}$, $\taglenof{y}$ \\ $\tagof {\tagpos_4, y}$ \\ $\tagof{\tagmis_3, y, D_1, \rightsymb, \letter{c}}$ } (r7);

	\draw[runArrows] (r7) edge[->] node[tags, text width=2mm] {$ $} (r8);
    \draw[runArrows] (r8) edge[->] node[tags] {$\tagsymof{\letter{a}}$ \\ $\taglenof{z}$ \\ $\tagof {\tagpos_4, z}$} (r9);
	\draw[runArrows] (r9) edge[->] node[tags, text width=2.3cm]
        {$\tagsymof{\letter{d}}$, $\taglenof{z}$ \\ $\tagof {\tagpos_5, z}$ \\ $\tagof{\tagmis_4, x, D_2, \rightsymb, \letter{d}} $ } (r10);
\end{tikzpicture}
  }
  \vspace*{-3mm}
  \caption{
      An example of a run satisfying the system $D_1 \land D_2$ where
      $D_1 \defiff x \neq y$ and $D_2 \defiff x \neq z$.
  }
  \label{fig:runMultiDiseqs}
  \vspace{-3mm}
\end{figure}


\subsubsection{Formula Construction}

We construct a~LIA formula equisatisfiable to the system of disequalities 
based on the tag automaton described above.
Contrary to the case of a~single
disequality, the resulting formula is enhanced by subformulae ensuring
consistency of each nondeterministic choice. For simplicity, we introduce auxiliary
(integer) variables describing particular choices that are then used in the LIA
subformulae: 
\begin{inparaenum}[(i)]
  \item $m_{D,s}$ variables containing the mismatch symbol for a~disequality $D$ and its side $s$ and
  \item $c_i$ variables containing the shared $i$-th mismatch symbol (the mismatch symbol 
  preceding the $\tagcop_i$-tag).
\end{inparaenum}

We start with auxiliary subformulae expressing that the mismatches are consistent, meaning 
that each disequality and side has at most one sampled mismatch and that these mismatches are 
sampled consistently for both sides.
The first subformula $\fFair$ checks that there is at most one mismatch for each side of each disequality:
\vspace{-1mm}
\begin{equation}
  \fFair \quad\defiff \bigwedge_{\substack{D \in \{D_1, \dots, D_n \} \\ s \in \{ \leftsymb, \rightsymb \}}}\Big(
  \sum_{\substack{1 \le i \le 2n \\ x \in \vars, \symbof{a} \in \alphabet }}
    \numof{\tagof{\tagmis_{i}, x, D, s, a}} + 
    \numof{\tagof{\tagcop_i, x, D, s}} \leq 1
  \Big).
\end{equation}
The subformula $\fConsist$ then ensures that the quantified variables containing the mismatch symbols 
are properly set, including the case of the copy tag, where the mismatch is
inherited from the previous mismatch transition. 
\vspace{-3mm}
\begin{equation}
\begin{aligned}
  \fConsist \quad \defiff {} 
  \bigwedge_{\substack{D \in \{D_1, \dots, D_n \} \\
               s \in \{ \leftsymb, \rightsymb\}, \symbof{a} \in \alphabet, \\
               1 \le i \le 2n
       }}&
        \bigg(
        \Big(\sum_{\substack{x \in \vars}}
        \numof{\tagof{\tagmis_{i}, x, D, s, a}}  = 1 \Big) \rightarrow
              c_i = m_{D, s} = \symbof{a}
        \bigg) \land {} \\[-2mm]
        &
        \bigg(
        \Big(\sum_{\substack{x \in \vars}}
          \numof{\tagof{\tagcop_i, x, D, s}} = 1 \Big) \rightarrow
               c_i = m_{D, s} = c_{i-1}
        \bigg).
\end{aligned}
\vspace{-0.5mm}
\end{equation}
\vspace{-0mm}
Since not all disequalities have to be satisfied by the existence of
a mismatch (but possibly also by a~length violation), the values of $m_{D,s}$
and $c_i$ variables for disequalities with a missing mismatch (one side or
both) might hold arbitrary values.
Therefore, 
it is not sufficient to compare only values of $m_{D,\leftsymb}$ and
$m_{D,\rightsymb}$ but it is necessary to take into account existing
mismatches.
It remains to check the consistency of copy tags.
In particular, we need to ensure that copy tags for a~variable~$x$ occur on
a~run only if the previous mismatch or copy transition for~$x$ was taken.
We also need to check that a~$\tagcop$-transition was taken immediately after
the previous mismatch or copy transition ($\tagmis$~or~$\tagcop$).
\begin{equation}
\begin{aligned}
  \fCopies \defiff
        \hspace*{-2mm}\bigwedge_{\substack{1 \le i \le 2n \\ x \in \vars}}&
        \bigg(
            \Big(\hspace*{4mm} \sum_{\mathclap{\substack{D \in \{D_1, \dots, D_n \} \\ s \in \{ \leftsymb, \rightsymb \}, \symbof{a} \in \alphabet }}}
              \numof{\tagof{\tagmis_{i}, x, D, s, a}} +
              \numof{\tagof{\tagcop_i, x, D, s}} = 0  \Big)  \rightarrow
              \Big( \sum_{\mathclap{\substack{D \in \{D_1, \dots, D_n \} \\ s \in \{ \leftsymb, \rightsymb \}}}} \numof{\tagof{\tagcop_{i+1}, x, D, s}} = 0  \Big)
        \bigg) \land {}
        \\
        \bigwedge_{\substack{2 \le i \le 2n \\ x \in \vars}}
        &
        \bigg(
          \Big(\hspace*{4mm} \sum_{\mathclap{\substack{D \in \{D_1, \dots, D_n \} \\ s \in \{ \leftsymb, \rightsymb \}}}}
            \numof{\tagof{\tagcop_{i}, x, D, s}} = 1 \Big)    \rightarrow
            \numof{\tagof{\tagpos_{i}, x}} - \sum_{\mathclap{\substack{D \in \{D_1, \dots, D_n \} \\ s \in \{ \leftsymb, \rightsymb \}, \symbof{a} \in \alphabet }}}\numof{\tagof{\tagmis_{i-1}, x, D, s, a}} = 0
        \bigg).
\end{aligned} 
\end{equation}
We note that the last expression in~$\fCopies$, $\numof{\tagof{\tagpos_i, x}} - \sum \ldots$
, is there since we need to make sure that if there is
a~$\tagcop_{i}$-tag immediately after an $\tagmis_{i-1}$-tag, the number
of~$\tagof{\tagpos_{i}, x}$ is one (because there was already one
$\tagpos_{i}$ tag on the $\tagmis_{i-1}$-transition), 
but if a copy tag follows another copy tag, then the number of corresponding
position tags is zero.

The final formula will be
\begin{equation}
\varphi^{\multidiseq}\quad\defiff\quad
\parikhformtagof{\aut^{\multidiseq}} \land \fFair \land \fConsist \land \fCopies \land
\bigwedge_{\mathclap{D \in \{D_1, \ldots, D_n\}}}
  \big(\formdiseqlen^D \lor (\formdiseqmis^D \land \formdiseqsym^D)\big),
\end{equation}
where $\formdiseqlen^D$, $\formdiseqmis^D$, and~$\formdiseqsym^D$ are similar
to their counterparts in \cref{sec:single-gen-diseq} but using the~$m_{D,s}$
variables instead of directly using~$\tagmis$-tags.
Details are in \cref{app:liareduction}.

\cbstart
\begin{theorem}\label{thm:}
The formula $\regconstr' \land \intconstr \land \bigwedge_{1\leq i \leq n} L_i \neq R_i$
is equisatisfiable to the formula
$\intconstr \land \varphi^{\multidiseq}$.
Moreover, the size of~$\varphi^{\multidiseq}$ is polynomial
to~$mn\cdot|\regconstr'|$ where $m$ is the maximum size of any~$L_i$
or~$R_i$.
\end{theorem}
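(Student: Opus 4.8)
The plan is to prove the two directions of the equisatisfiability separately and then read the size bound off the constructions. Since $\intconstr$ occurs verbatim on both sides and, by the shape of the normal form, constrains only the user integer variables---which are disjoint from the free variables of $\varphi^{\multidiseq}$ (the transition counts $\numof\delta$, the tag counts, and the auxiliary variables $m_{D,s}$ and $c_i$)---it suffices to show that $\regconstr' \land \bigwedge_{1\leq i \leq n} L_i \neq R_i$ is satisfiable iff $\varphi^{\multidiseq}$ is satisfiable; if $\intconstr$ is unsatisfiable, so are both sides. This matches the treatment of $\intconstr$ in the previous two theorems.

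For the left-to-right direction I would start from a model $\assgn$ of $\regconstr' \land \bigwedge_i L_i \neq R_i$, which assigns every $z \in \vars$ a word $w_z \in \langof{\autof z}$. Concatenating fixed accepting runs of the $\autof z$ in the order $\preccurlyeq$ gives an accepting run $\run_\circ$ of $\autcon$, which I would lift to an accepting run $\run$ of $\aut^{\multidiseq}$: for each $D_i$ decide whether it is witnessed by a length difference (available whenever $|w_{L_i}| \neq |w_{R_i}|$) or by a mismatch, and in the latter case fix a global mismatch position $\ell_i$ and read off, for each side $s \in \{\leftsymb,\rightsymb\}$, the hit variable occurrence and the local position inside the corresponding $w_z$. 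This produces at most $2n$ mismatch events, each located at a point of $\run_\circ$ inside some variable block. Scanning these points in the order they occur along $\run_\circ$, each event is realised either by taking the corresponding $\tagmis$-transition (advancing one level copy) or, when its point coincides with that of an already-realised $\tagmis$-event for the same variable, by inserting the corresponding $\tagcop$-transition for its $(D,s)$ immediately afterwards (again advancing one copy); as there are at most $2n$ copy-advancing steps, the $2n+1$ copies of $\autcon$ suffice. Setting $m_{D,s}$ and $c_i$ to the sampled symbols, one then checks $\parikhofrun\run \models \parikhformtagof{\aut^{\multidiseq}}$ by construction, $\fFair$ (at most one sample per side by the scan), $\fConsist$ ($m_{D,s}$, $c_i$ defined from the samples), $\fCopies$ (copies inserted immediately after their source and carrying no position tag), and each $\formdiseqlen^D \lor (\formdiseqmis^D \land \formdiseqsym^D)$ by the choice of $\ell_i$ (equal global positions, distinct symbols).

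For the converse I would take a model $\sigma$ of $\varphi^{\multidiseq}$; from $\sigma \models \parikhformtagof{\aut^{\multidiseq}}$ there is an accepting run $\run$ of $\aut^{\multidiseq}$ with $\parikhofrun\run$ matching $\sigma$ on transition counts (hence also on tag counts), and forgetting the level index yields an accepting run of $\autcon$, hence words $w_z \in \langof{\autof z}$ defining an assignment $\assgn$ of $\regconstr'$. For each $i$, if $\formdiseqlen^{D_i}$ is the satisfied disjunct then the $\taglen$-tag counts are $|w_{L_i}|$ and $|w_{R_i}|$ (exactly one $\taglenof z$ tag per letter read inside the $z$-block), so $w_{L_i} \neq w_{R_i}$. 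Otherwise the crucial step is a lemma stating that, whenever $\sigma \models \fFair \land \fConsist \land \fCopies$, each side of a mismatch-witnessed disequality has exactly one sampled mismatch symbol $m_{D,s}$, and this symbol is the actual letter of $w_z$ at the global position encoded for that side: for a directly sampled ($\tagmis$) side this is immediate, and for a $\tagcop$-chain it follows because $\fCopies$ forces each $\tagcop$-transition to sit immediately after its predecessor inside the same variable block (via the position-tag balance) while $\fConsist$ propagates the symbol unchanged along the chain. Given the lemma, $\formdiseqmis^{D_i}$ equates the two encoded global positions and $\formdiseqsym^{D_i}$ forces the two sampled symbols to differ, so $w_{L_i}$ and $w_{R_i}$ disagree there, i.e.\ $L_i \neq R_i$ under $\assgn$.

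For the size bound, $\autcon$ has $|\regconstr'|$ states (plus $\epsilon$-transitions between at most $|\vars|$ variable blocks); $\aut^{\multidiseq}$ replicates it into $2n+1$ copies and, per original transition, adds $\bigo{n}$ level copies and mismatch/copy transitions annotated by a disequality, a side, and possibly a letter, giving a per-transition blow-up of $\bigo{n|\alphabet|}$, and the tag alphabet $\tags^{\multidiseq}$ has polynomial size of the same order; hence $\aut^{\multidiseq}$ is polynomial in $mn\cdot|\regconstr'|$, using that every variable occurs in some $L_i$ or $R_i$ so $|\vars| \le 2mn$. The Parikh formula $\parikhformof{\aut^{\multidiseq}}$ is polynomial in the size of $\aut^{\multidiseq}$, and $\fFair$, $\fConsist$, $\fCopies$ and the $n$ per-disequality conjuncts are each polynomial in $n$, $m$, $|\vars|$, $|\alphabet|$, so $|\varphi^{\multidiseq}|$ is polynomial in $mn\cdot|\regconstr'|$. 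I expect the main obstacle to be the correctness of the $\tagcop$ machinery on both sides: forward, that $2n+1$ copies always admit a valid schedule of mismatch and copy events respecting the linear order of $\autcon$ and the ``a copy immediately follows its source'' rule; backward, the lemma that model-induced $\tagcop$-chains faithfully transport the sampled symbol and its global position to the disequality side to which they are charged. The length-only and pure-$\tagmis$ sub-cases are routine generalisations of the single-disequality theorem.
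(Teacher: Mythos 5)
The paper states this theorem without an explicit proof: \cref{sec:diseqSystem} gives only the construction of $\aut^{\multidiseq}$ and of $\varphi^{\multidiseq}$, with the remaining subformulae deferred to \cref{app:liareduction}. Your argument is the intended one and is sound at the level you give it: the treatment of $\intconstr$ is fine (its variables are disjoint from the Parikh, tag and auxiliary variables), the forward direction correctly schedules the at most $2n$ mismatch events along a run of $\autcon$, realising events that coincide with an already sampled position of the same variable by $\tagcop$-transitions inserted immediately after their source (so $2n+1$ copies suffice), and then discharges $\fFair$, $\fConsist$, $\fCopies$ and the per-disequality disjuncts; the backward direction correctly reduces to extracting a run and a string assignment from the Parikh model and transporting symbol and position information along copy chains; the size analysis matches the construction.

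The one place where you stay above the decisive detail is exactly the lemma you flag: that a copy-witnessed side of a disequality is charged the \emph{same} global position as its source mismatch. In the paper's formalisation this is done through auxiliary position variables $p_{D,s}$ constrained in \cref{app:liareduction} by sums of $\tagof{\tagpos_i, x}$ counts, and the bookkeeping is delicate because a $\tagmis_i$-transition itself carries a $\tagof{\tagpos_{i+1}, x}$ tag: summing position tags up to the index of a $\tagcop$-tag therefore includes the source transition's own tag, so the copy-witnessed side and the directly sampled side see counts that differ by one unless the formula compensates for it. Your phrasing (``the symbol is the actual letter of $w_z$ at the global position encoded for that side'') is correct in spirit, but a full write-up of the backward direction must pin down this indexing, and also mention that the $p_{D,s}$ variables (absent from the main text, which only introduces $m_{D,s}$ and $c_i$) are part of the quantified auxiliaries. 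Since the paper itself provides no proof, nothing you write contradicts it; this is simply the point where the argument needs the most care.
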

\cbend

\vspace{-0.0mm}
\section{Other Position Constraints}\label{sec:other_pos}
\vspace{-0.0mm}

In this section, we show how the framework introduced in \cref{sec:diseq} can
be extended for solving other considered position constraints.

\vspace{-0.0mm}
\subsection{Length Constraints}\label{sec:label}
\vspace{-0.0mm}

For conjunctions $\bigwedge_{1\leq i \leq n} x_i =
\lenof{y_{i1}\cdots y_{im_i}}$, where~$x_i$ are integer variables, we create the
$\epsilon$-concatenation~$\autcon$ for all string variables occurring in the constraint and
construct the formula
\begin{equation}
\varphi^{\mathit{LEN}} \quad\defiff\quad \parikhformtagof{\autcon}
\land \bigwedge_{1 \leq i \leq n}\Big( x_i = \sum_{\mathclap{1 \leq j \leq m_i}} \numof{\taglenof{y_{ij}}}\Big).
\end{equation}
\vspace{-5mm}

\cbstart
\begin{theorem}\label{thm:}
The formula $\regconstr' \land \intconstr \land \bigwedge_{1\leq i \leq n} x_i =
\lenof{y_{i1}\cdots y_{im_i}}$
  is equisatisfiable to the formula
$\intconstr \land \varphi^{\mathit{LEN}}$ and
the size of~$\varphi^{\mathit{LEN}}$ is polynomial to~$mn\cdot|\regconstr'|$.
\end{theorem}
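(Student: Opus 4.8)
The plan is to prove equisatisfiability via a correspondence between accepting runs of the $\epsilon$-concatenation $\autcon$ and assignments of the string variables to words from their languages, and then to match the length equalities on the two sides; the size bound will follow by inspecting the construction of $\parikhformtagof{\autcon}$.

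First I would record the structural observation underpinning the whole argument. Here $\autcon$ is the $\epsilon$-concatenation, in the fixed order $\preccurlyeq$, of the tag automata $\lentag_z(\aut_z)$ over the string variables $z$ occurring in the constraint, where $\langof{\aut_z} = \lang_z$ and the $\aut_z$ have pairwise disjoint state sets (we assume as usual that $\regconstr'$ is the regular-constraint part over exactly these variables). Because the only transitions linking consecutive $\lentag_z(\aut_z)$ blocks are $\epsilon$-transitions (empty tag set) from a final state of one block to an initial state of the next, and the blocks have disjoint states, every accepting run $\run$ of $\autcon$ factors uniquely into $\run_{z_1}, \run_{z_2}, \dots$ (interleaved with the bridging $\epsilon$-transitions), where each $\run_z$ is an accepting run of $\lentag_z(\aut_z)$ and hence encodes a word $w_z \in \lang_z$. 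By the definition of $\lentag_z$ every transition of the $z$-block carries the tag $\taglenof z$ and no transition outside that block carries it, so the number of occurrences of $\taglenof z$ along $\run$ equals $|w_z|$. Together with the defining property of $\parikhformtagof{\cdot}$, this gives the key equivalence: an assignment $\sigma'\colon (\numof\Delta \cup \numof\tags) \to \nat$ satisfies $\parikhformtagof{\autcon}$ iff there exist words $w_z \in \lang_z$, one per string variable, with $\sigma'(\numof{\taglenof z}) = |w_z|$ for each such $z$ (and $\sigma'$ agreeing with the transition counts of the corresponding run).

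The two directions are then routine. For ($\Rightarrow$), from $\assgn \models \regconstr' \land \intconstr \land \bigwedge_i x_i = \lenof{y_{i1}\cdots y_{im_i}}$ I would use $\regconstr'$ to obtain $\assgn(z) \in \lang_z$ for every string variable, pick an accepting run of $\aut_z$ over $\assgn(z)$ for each $z$, concatenate these in the order $\preccurlyeq$ into an accepting run $\run$ of $\autcon$, and extend $\assgn$ to the fresh transition- and tag-count variables by the Parikh tag image of $\run$; then $\parikhformtagof{\autcon}$ holds, $\numof{\taglenof z}$ is assigned $|\assgn(z)|$, and $\intconstr$ is unaffected because it contains no $\len$-terms and does not mention the fresh variables. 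Finally $\sum_{1 \le j \le m_i} \numof{\taglenof{y_{ij}}}$ evaluates to $\sum_j |\assgn(y_{ij})| = \assgn(x_i)$ (the last step by $\assgn \models x_i = \lenof{y_{i1}\cdots y_{im_i}}$), so the last conjunct of $\varphi^{\mathit{LEN}}$ holds too. For ($\Leftarrow$), from $\tau \models \intconstr \land \varphi^{\mathit{LEN}}$ I would extract from $\tau \models \parikhformtagof{\autcon}$ an accepting run $\run$ realizing $\tau$'s transition and tag counts, factor it into per-variable runs, take $w_z \in \lang_z$ to be the word encoded by the $z$-block, and define $\assgn$ to send each string variable $z$ to $w_z$ and to agree with $\tau$ on all integer variables; then $\assgn \models \regconstr'$, $\assgn \models \intconstr$ (again since $\intconstr$ has no $\len$-terms), and $\sum_j |w_{y_{ij}}| = \sum_j \tau(\numof{\taglenof{y_{ij}}}) = \tau(x_i) = \assgn(x_i)$, hence $\assgn \models x_i = \lenof{y_{i1}\cdots y_{im_i}}$.

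For the size bound, $\lentag_z(\aut_z)$ only relabels the transitions of $\aut_z$ and $\epsilon$-concatenation adds $\epsilon$-transitions, so $|\autcon|$ is $\bigOOf{|\regconstr'|}$; by the standard Parikh-formula construction (\cref{app:parikh}), $\parikhformof{\autcon}$ has size polynomial in $|\autcon|$, the tag-count equations of $\parikhformtagof{\autcon}$ add one linear equation per tag with $\bigOOf{|\alphabet| + |\regconstr'|}$ tags, and the remaining conjunct has size $\bigOOf{\sum_i m_i} \subseteq \bigOOf{nm}$ with $m = \max_i m_i$; summing, $\varphi^{\mathit{LEN}}$ is of size polynomial in $nm \cdot |\regconstr'|$. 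I do not expect a genuine obstacle here: this is the base case of the framework, and the only point requiring care is the unique factorization of accepting runs of $\autcon$ into per-variable runs (which makes the run/tuple correspondence an equivalence), together with the observation that $\intconstr$, being $\len$-free, and the freshly introduced counting variables transfer between the two formulae unchanged.
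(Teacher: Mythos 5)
Your proof is correct and follows exactly the intended argument behind the construction in the paper (which states this theorem without an explicit proof): models of $\parikhformtagof{\autcon}$ correspond to accepting runs of the $\epsilon$-concatenation, which factor into per-variable accepting runs encoding words $w_z\in\lang_z$ with $\numof{\taglenof z}=|w_z|$, so the added conjuncts $x_i=\sum_j\numof{\taglenof{y_{ij}}}$ mirror the length equalities, and $\intconstr$ transfers unchanged since it is $\len$-free. The size analysis likewise matches the paper's claim, so there is nothing to correct.
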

\cbend

\vspace{-1.0mm}
\subsection{Not Prefix and Not Suffix Predicates} \label{sec:notprefix}
\vspace{-0.0mm}
The $\notprefix(x_1 \cdots x_n, y_1 \cdots y_m)$ and $\notsuffix(x_1 \cdots
x_n, y_1 \cdots y_m)$ predicates are similar to a disequality $x_1 \cdots x_n
\neq y_1 \cdots y_m$ in that they are satisfied if there is a mismatch at the
same global position between their first and second argument.
Both predicates, however, have slightly different
conditions  than disequalities on satisfiability due to their sides having incompatible lengths---the
first argument ($x_1 \cdots y_n$) must be strictly longer than the second argument ($y_1 \cdots y_m$).
Therefore, the tag-automaton construction is the same as in the
case of a single unrestricted disequality given in \cref{sec:single-gen-diseq}.
Forming an equisatisfiable LIA formula also remains the same, save for small
differences.
The different condition on satisfiability by $x_1 \cdots x_n$ and $y_1 \cdots y_m$
having incompatible lengths requires replacing the corresponding subformula
$\formdiseqlen^{\singlediseq}$ by $\varphi^{*\mathrm{FIX}}_{\mathit{len}}$ defined as
\begin{equation}
    \varphi^{*\mathrm{FIX}}_{\mathit{len}} \quad\defequiv\quad
        \sum_{1 \le i \le n} \numof {\taglenof{x_i}} > \sum_{1 \le j \le m} \numof{\taglenof{y_j}}.
\end{equation}
\vspace{-3mm}

Furthermore, the $\notsuffix$ predicate treats the mismatch position
differently than $\notprefix$. Instead of being satisfied by a mismatch on the
same global position starting from the beginning of its arguments, the $\notsuffix$
predicate counts the mismatch position from the end of its arguments. Therefore, we
also need to replace the $\formdiseqposij$ subformulae with $\formdiseqposij^{\mathrm{NS}}$
asserting that the mismatch positions in both arguments are the same, using the~$\tagposthreeof x$-tags, which we already added into~$\aut^{\singlediseq}$ in \cref{sec:single-gen-diseq}. We define
$\formdiseqposij^{\mathrm{NS}}$ as follows:
\begin{enumerate}
  \item  If $x_i$ and $y_j$ are occurrences of a~different variable, then
    \begin{equation}
        \hspace*{-2mm}\formdiseqposij^{\mathrm{NS}} \!\!\defiff\!
        \begin{cases} 
            \numof {\tagpostwoof{x_i}} +
            \numof {\tagposthreeof{x_i}} +
            \sum_{1 \leq u < i} \numof{\taglenof{x_u}}
            =
            \numof{\tagposthreeof{y_j}} +
            \sum_{1 \leq v < j} \numof{\taglenof{y_v}}
            ~ \textrm{  if } x_i \prec y_j, \\[2mm]
            \numof {\tagposthreeof{x_i}} +
            \sum_{1 \leq u < i} \numof{\taglenof{x_u}}
            =
            \numof{\tagpostwoof{y_j}} +
            \numof{\tagposthreeof{y_j}} +
            \sum_{1 \leq v < j} \numof{\taglenof{y_v}} ~ \textrm{  otherwise.}
        \end{cases} 
    \end{equation}
  \item  If $x_i$ and $y_j$ are occurrences of the same variable~$z$, then
    \begin{equation}
    \begin{aligned}
      \formdiseqposij^{\mathrm{NS}} \defiff {}& \left(
        \numof{\tagpostwoof{z}} +
        \numof{\tagposthreeof{z}} +
        \sum_{1 \leq u < i} \numof{\taglenof{x_u}}
        =
        \numof{\tagposthreeof{z}} +
        \sum_{1 \leq v < j} \numof{\taglenof{y_v}}
      \right) \lor {}  \\
      &
      \left(
      \numof{\tagposthreeof{z}} +
      \sum_{1 \leq u < i} \numof{\taglenof{x_u}} =
      \numof{\tagpostwoof{z}} +
      \numof{\tagposthreeof{z}} +
      \sum_{1 \leq v < j} \numof{\taglenof{y_v}}
      \right).
    \end{aligned}
    \end{equation}
\end{enumerate}
Intuitively, we start counting the mismatch position inside a variable \emph{after} the
mismatch has been sampled rather than counting \emph{until} it has been sampled.
We denote the corresponding constructed formulae as~$\varphi^{\mathit{pre}}$
(for $\notprefix$) and $\varphi^{\mathit{suf}}$ (for $\notsuffix$).

\cbstart
\begin{theorem}\label{thm:}
The formula $\regconstr' \land \intconstr \land \notprefix(x_1 \cdots x_n, y_1
\cdots y_m)$ is equisatisfiable to the formula $\intconstr \land
\varphi^{\mathit{pre}}$ and the formula $\regconstr' \land \intconstr \land
\notsuffix(x_1 \cdots x_n, y_1 \cdots y_m)$ is equisatisfiable to the formula
$\intconstr \land \varphi^{\mathit{suf}}$.
The sizes of $\varphi^{\mathit{pre}}$ and $\varphi^{\mathit{suf}}$ are
polynomial to~$mn\cdot|\regconstr'|$.
\end{theorem}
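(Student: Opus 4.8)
The plan is to bootstrap from the single‑unrestricted‑disequality result of \cref{sec:single-gen-diseq}: the tag automaton $\aut^{\singlediseq}$ is reused verbatim, so I would isolate the handful of LIA sub‑formulae that change, check that each change is sound and complete, and then inherit the rest of that argument. First I would spell out the semantics. Writing $w_x$ and $w_y$ for the strings obtained by substituting the assignment into $x_1 \cdots x_n$ and $y_1 \cdots y_m$, negating the relevant clauses of \cref{fig:semantics} gives: $\neg\prefixof{w_x}{w_y}$ holds iff $|w_x| > |w_y|$, or there is a position $p$ with $p < |w_x|$, $p < |w_y|$, and $w_x[p] \neq w_y[p]$ (if $|w_x| > |w_y|$ the first disjunct fires, and if $|w_x| \le |w_y|$ the two range conditions collapse to $p < |w_x|$); symmetrically $\neg\suffixof{w_x}{w_y}$ holds iff $|w_x| > |w_y|$, or there is $k$ with $k < |w_x|$, $k < |w_y|$, and $w_x[|w_x|-1-k] \neq w_y[|w_y|-1-k]$, i.e.\ a symbol mismatch at equal distance from the \emph{ends} of $w_x$ and $w_y$. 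The structural fact I rely on, already established for $\aut^{\singlediseq}$ in \cref{sec:single-gen-diseq}, is that every accepting run reads all of $w_x$ and all of $w_y$ (it must be accepted by each component $\aut_z$), so a guessed mismatch automatically lies strictly below both $|w_x|$ and $|w_y|$; hence the range side‑conditions above are satisfied for free, and $\parikhformtagof{\aut^{\singlediseq}}$ has exactly the tag vectors of such runs as its models.

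For $\notprefix$, the formula $\varphi^{\mathit{pre}}$ differs from $\varphi^{\singlediseq}$ only in replacing the length disjunct $\formdiseqlen^{\singlediseq}$ ($|w_x| \neq |w_y|$) by $\varphi^{*\mathrm{FIX}}_{\mathit{len}}$ ($|w_x| > |w_y|$), while $\formdiseqsym^{\singlediseq}$, $\formdiseqmis^{\singlediseq}$, and every $\formdiseqposij$ are kept. Then both directions follow almost line‑for‑line from \cref{sec:single-gen-diseq}: the mismatch disjunct $\formdiseqsym^{\singlediseq} \land \formdiseqmis^{\singlediseq}$ encodes precisely ``there is a pair of mismatched symbols at equal distance from the \emph{start}'', which by the structural fact above is the second disjunct of $\neg\prefixof{w_x}{w_y}$, and $\varphi^{*\mathrm{FIX}}_{\mathit{len}}$ is the first disjunct. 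The weakening of $\neq$ to $>$ is sound because a length discrepancy with $|w_x| < |w_y|$ is also witnessed by a symbol mismatch, so it is already covered by the other disjunct.

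For $\notsuffix$, $\varphi^{\mathit{suf}}$ again replaces $\formdiseqlen^{\singlediseq}$ by $\varphi^{*\mathrm{FIX}}_{\mathit{len}}$ but in addition replaces each $\formdiseqposij$ by $\formdiseqposij^{\mathrm{NS}}$, expressed via the $\tagposthree$‑tags that $\aut^{\singlediseq}$ already carries for this purpose. The crux — and the step I expect to be the main obstacle — is the counting identity stating that $\formdiseqposij^{\mathrm{NS}}$ asserts exactly ``the mismatch guessed for occurrence $x_i$ and the mismatch guessed for occurrence $y_j$ are at the same distance from the ends of $w_x$ and $w_y$''. I would prove it by expressing, on an accepting run, the number of symbols of $w_x$ (resp.\ $w_y$) lying strictly after the mismatch as (the number of symbols of the mismatched variable's word after the local mismatch position) plus (a contribution from the occurrences on that side that follow the mismatch), and matching this against the $\tagpostwo$/$\tagposthree$‑tag counts appearing in $\formdiseqposij^{\mathrm{NS}}$. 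This has to be carried out in the three cases of the definition — $x_i,y_j$ occurrences of different variables with $x_i \prec y_j$, with $x_i \succ y_j$, and occurrences of one variable $z$ (the last needing the disjunction because a single traversal of $w_z$ must supply one mismatch for each side and the ``second'' automaton‑mismatch inside $z$ is offset by the ``first'') — and the delicate points are whether the mismatch transition itself carries a position tag and the fact that the variable carrying the first automaton‑mismatch is never entered in the third copy. Once this identity is secured, the remaining soundness/completeness bookkeeping, including the reuse of a single mismatched symbol for the predicate and the handling of repeated variables, is identical to \cref{sec:single-gen-diseq}.

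Finally, the size bounds are inherited: $\aut^{\singlediseq}$, and hence $\parikhformtagof{\aut^{\singlediseq}}$, is polynomial in $mn \cdot |\regconstr'|$; $\varphi^{*\mathrm{FIX}}_{\mathit{len}}$ is no larger than $\formdiseqlen^{\singlediseq}$; and each of the $mn$ sub‑formulae $\formdiseqposij^{\mathrm{NS}}$ is of the same asymptotic size as the corresponding $\formdiseqposij$. Hence both $\varphi^{\mathit{pre}}$ and $\varphi^{\mathit{suf}}$ remain polynomial in $mn \cdot |\regconstr'|$.
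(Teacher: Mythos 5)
Your route is the intended one: reuse $\aut^{\singlediseq}$ unchanged, swap only the length disjunct for $\varphi^{*\mathrm{FIX}}_{\mathit{len}}$ (and, for $\notsuffix$, the position subformulae), and inherit the correctness bookkeeping and the size bound from the single-disequality case. The $\notprefix$ half of your argument is sound: the semantics you unfold is right, the observation that a guessed mismatch always lies strictly inside both sides (because every accepting run reads each variable's word completely) is exactly the structural fact needed, and your remark that the $|w_x|<|w_y|$ branch of the old length disjunct is absorbed by the mismatch disjunct correctly justifies strengthening $\neq$ to $>$.

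The gap sits precisely at the step you flag as the crux and defer. The identity you want — that $\formdiseqposij^{\mathrm{NS}}$ says the two guessed mismatches are at the same distance from the \emph{ends} of $w_x$ and $w_y$ — requires the cross-occurrence contribution to come from the occurrences \emph{after} $x_i$ and $y_j$: the number of symbols of $w_x$ from the mismatch to the end is $\numof{\tagpostwoof{x_i}}+\numof{\tagposthreeof{x_i}}+\sum_{i<u\le n}\numof{\taglenof{x_u}}$, and analogously with $\numof{\tagposthreeof{y_j}}+\sum_{j<v\le m}\numof{\taglenof{y_v}}$ on the right. The displayed definition of $\formdiseqposij^{\mathrm{NS}}$ instead keeps the sums $\sum_{1\le u<i}\numof{\taglenof{x_u}}$ and $\sum_{1\le v<j}\numof{\taglenof{y_v}}$ over the \emph{preceding} occurrences (inherited verbatim from the start-anchored $\formdiseqposij$), so the ``matching against the tag counts'' you plan cannot be completed whenever $n>1$ or $m>1$, and for that literal reading the equisatisfiability claim itself fails: for $\notsuffix(x_1x_2,\,y_1y_2)$ with singleton languages $\{a\},\{b\},\{b\},\{ab\}$ we have $w_x=ab$ a suffix of $w_y=bab$, yet the run taking the first mismatch on $x_1$ (symbol $a$) and the second on $y_1$ (symbol $b$) satisfies $\formdiseqsym$, the presence conjuncts, and $\varphi^{\mathrm{NS}}_{\mathit{pos}(1,1)}$ (it reduces to $1+0=1$), so $\varphi^{\mathit{suf}}$ is satisfiable. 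To close the proof you must actually carry out your counting computation, which forces the index ranges to be the suffix ones ($i<u\le n$, $j<v\le m$); with that correction (and the three-case analysis, including the same-variable disjunction, done as you outline) the rest of the plan and the polynomial size bound go through.
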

\cbend

\vspace{-0.0mm}
\subsection{Symbol (not) at a~Position}\label{sec:label}
\vspace{-0.0mm}

Starting with the negative case first, let the input predicate be $x_s \neq \strat(y_1 \cdots y_m, x_i)$
where $x_s, y_1, \dots y_n$ are string variables and $x_i$ is an integer
variable. We construct the tag automaton $\aut$ in the same way as described in
\cref{sec:single-gen-diseq}. To form an equisatisfiable LIA formula, we
modify the reduction from \cref{sec:single-gen-diseq} to capture that
the mismatch position of the left-hand side is given by $x_i$ rather than being
nondeterministically given by a run in the automaton:
\begin{equation}
    \varphi_{1, j} \defiff \begin{cases}
        x_i = \numof{ \tagof{ \tagpos_1, y_j} } + \sum_{1 \le k < j} \numof{\taglenof{y_k}} & \text{  if } y_i \prec x_s, \\
        x_i = \numof{ \tagof{ \tagpos_2, y_j} } + \sum_{1 \le k < j} \numof{\taglenof{y_k}} & \text{  otherwise.}
    \end{cases}
\end{equation}
We further introduce an auxiliary predicate~$\varphi_{\mathit{InBounds}}$ checking that~$x_i$ is a~valid position in $y_1 \cdots y_m$:
\begin{equation}
    \varphi_{\mathit{InBounds}}\quad \defiff\quad 0 \leq x_i < \sum_{1 \le j \le m} \numof{\taglenof{y_j}}.
\end{equation}
The final formula $\varphi^{\neg\strat}$ is then modified to capture possible invalid positions of~$x_i$:
\begin{equation}
\begin{aligned}
  \varphi^{\neg\strat} \quad\defiff\quad \parikhformtagof{\aut} \land
    \bigg(
        & \big(\numof{\taglenof{x_s}} > 0 \land \neg \varphi_{\mathit{InBounds}} \big) \lor  \numof{\taglenof{x_s}} > 1 \lor {}\\
        & \big(\numof{\taglenof{x_s}} = 1 \land 
        \varphi_{\mathit{InBounds}} \land \formdiseqsym \land \bigvee_{\substack{1 \leq j \leq m}} \varphi_{1,j}
    \big)
    \bigg).
\end{aligned}
\end{equation}

\cbstart
\begin{theorem}\label{thm:}
The formula $\regconstr' \land \intconstr \land x_s \neq \strat(y_1 \cdots y_m,
x_i)$ is equisatisfiable to the formula $\intconstr \land \varphi^{\neg\strat}$
and the size of $\varphi^{\neg\strat}$ is polynomial to~$m\cdot|\regconstr'|$.
\end{theorem}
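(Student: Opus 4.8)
The plan is to follow the template of \cref{sec:single-gen-diseq}, viewing $x_s \neq \strat(y_1 \cdots y_m, x_i)$ as the disequality $x_s \neq y_1 \cdots y_m$ whose left-hand side is a single variable and whose mismatch position on the left is pinned to~$x_i$ rather than being chosen nondeterministically by the run. First I would unfold the semantics of $\strat$ from \cref{fig:semantics}: writing $w_s = \assgn(y_1 \cdots y_m)$ for a candidate assignment~$\assgn$, the predicate $x_s \neq \strat(y_1 \cdots y_m, x_i)$ holds iff one of three mutually exclusive cases occurs: (i)~$\assgn(x_i)$ is not a valid index into~$w_s$ and $\assgn(x_s) \neq \epsilon$; (ii)~$\assgn(x_i)$ is a valid index and $|\assgn(x_s)| > 1$, so $\assgn(x_s)$ cannot equal the single symbol $w_s[\assgn(x_i)]$; or (iii)~$\assgn(x_i)$ is a valid index, $|\assgn(x_s)| = 1$, and the unique symbol of $\assgn(x_s)$ differs from~$w_s[\assgn(x_i)]$. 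These line up with the three disjuncts of~$\varphi^{\neg\strat}$: along any accepting run of~$\aut = \aut^{\singlediseq}$ (built as in \cref{sec:single-gen-diseq} for the terms $x_s$ and $y_1 \cdots y_m$) one has $\numof{\taglenof{x_s}} = |\assgn(x_s)|$ and $\sum_j \numof{\taglenof{y_j}} = |w_s|$, so $\varphi_{\mathit{InBounds}}$ expresses exactly ``$\assgn(x_i)$ is a valid index''.

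Next I would prove the two directions of equisatisfiability, keeping $\intconstr$ and all integer variables (in particular~$x_i$) shared between the two formulas. For the forward direction, from $\assgn \models \regconstr' \land \intconstr \land x_s \neq \strat(y_1 \cdots y_m, x_i)$ I would take accepting runs of $\aut_{x_s}$ on $\assgn(x_s)$ and of each $\aut_{y_j}$ on $\assgn(y_j)$ and splice them into an accepting run of~$\aut$: in case~(i) the run stays on copy~$1$; in cases~(ii) and~(iii) it passes through the two mismatch transitions, which sample the unique symbol of $\assgn(x_s)$ and the symbol $w_s[\assgn(x_i)]$ respectively (which of these is the $\tagmisone$- vs. $\tagmistwo$-transition is determined by the fixed order $\preccurlyeq$ on variables, matching the case split in~$\varphi_{1,j}$). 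The induced Parikh valuation satisfies $\parikhformtagof{\aut}$ by its defining property, and by construction satisfies the matching disjunct of~$\varphi^{\neg\strat}$; crucially, since $|\assgn(x_s)| = 1$ in case~(iii), the local mismatch position inside~$x_s$ is forced to be~$0$, so no positional constraint on the left side is needed and $\bigvee_j \varphi_{1,j}$, which equates the global mismatch position inside $y_1 \cdots y_m$ with~$x_i$, is the only alignment obligation. The converse direction reads, from a model~$\sigma$ of $\intconstr \land \varphi^{\neg\strat}$, an accepting run of~$\aut$ via $\parikhformtagof{\aut}$, hence words $w_{x_s} \in \lang_{x_s}$ and $w_{y_j} \in \lang_{y_j}$ and an assignment satisfying $\regconstr'$ and $\intconstr$; then one checks that whichever disjunct of~$\varphi^{\neg\strat}$ is true forces exactly the corresponding case~(i)--(iii), using that the construction of~$\aut^{\singlediseq}$ guarantees at most one $\tagmisone$- and one $\tagmistwo$-tag per accepting run, so that $\formdiseqsym$ together with $\bigvee_j \varphi_{1,j}$ certifies a genuine symbol mismatch at global position~$x_i$.

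For the size bound, $\aut = \aut^{\singlediseq}$ for $x_s$ and $y_1 \cdots y_m$ is polynomial in $m \cdot |\regconstr'|$ by the size estimate established for $\varphi^{\singlediseq}$ in \cref{sec:single-gen-diseq}, and hence so is $\parikhformtagof{\aut}$; the added conjuncts $\varphi_{\mathit{InBounds}}$ and $\bigvee_{1 \le j \le m} \varphi_{1,j}$ contribute only $\bigOOf{m}$ additional summands over the transition-count variables, and $\formdiseqsym$ is bounded by $|\alphabet|$, so $\varphi^{\neg\strat}$ is polynomial in $m \cdot |\regconstr'|$.

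I expect the main obstacle to be the exhaustive and somewhat delicate case analysis on the position of~$x_i$ relative to $|w_s|$ and on~$|\assgn(x_s)|$---in particular making sure the three disjuncts of~$\varphi^{\neg\strat}$ exactly cover every way the predicate can be violated, including the boundary situations where $\assgn(x_s) = \epsilon$ or $\assgn(x_i)$ is out of bounds (or negative)---together with verifying that the ``$|\assgn(x_s)| = 1$'' branch genuinely pins the left mismatch at position~$0$, so that reusing the $\aut^{\singlediseq}$ machinery (whose formula speaks of \emph{nondeterministically chosen} mismatch positions) really does capture the intended semantics of $\strat$.
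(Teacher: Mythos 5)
Your route is the same as the paper's implicit one: reuse the tag automaton $\aut^{\singlediseq}$ for $x_s$ versus $y_1\cdots y_m$, read off $|\assgn(x_s)|$, $|w_s|$ and mismatch positions from the $\taglen$/$\tagpos$ counts, pin the right-hand mismatch position to the shared integer variable $x_i$ via the modified $\varphi_{1,j}$ together with $\varphi_{\mathit{InBounds}}$, and get the size bound from the polynomial size of $\parikhformtagof{\aut^{\singlediseq}}$ plus $\bigOOf{m}$ extra summands. So there is no methodological divergence to report.

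There is, however, a genuine gap in your correctness argument: the case split you declare exhaustive is not. The predicate $x_s \neq \stratof{y_1\cdots y_m}{x_i}$ is also satisfied when $x_i$ is a valid index \emph{and} $\assgn(x_s) = \epsilon$ (then the $\strat$ term denotes a one-letter string, from which $\epsilon$ differs), and this situation is covered by none of your cases (i)--(iii) and by none of the three disjuncts of $\varphi^{\neg\strat}$, whose guards require $\numof{\taglenof{x_s}} > 0 \land \neg\varphi_{\mathit{InBounds}}$, $\numof{\taglenof{x_s}} > 1$, and $\numof{\taglenof{x_s}} = 1$ respectively. Your forward direction therefore breaks exactly there: if $\regconstr'$ forces $x_s \in \{\epsilon\}$ and $\intconstr$ forces, say, $x_i = 0$ while $y_1$ is nonempty, the left formula is satisfiable, yet no accepting run yields a model of any disjunct, and you cannot escape by choosing a different assignment because $\regconstr'$ and $\intconstr$ pin it down. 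A complete proof must treat this boundary explicitly --- e.g.\ by adding the disjunct $\numof{\taglenof{x_s}} = 0 \land \varphi_{\mathit{InBounds}}$ (note the paper's positive-$\strat$ formula does carry a $\numof{\taglenof{x_s}} = 0$ case, the negative one as written does not) --- rather than asserting that (i)--(iii) cover everything; you even flag $\assgn(x_s)=\epsilon$ as the delicate point but then do not resolve it. A smaller slip: in your case (ii) the run need not take any mismatch transitions at all, since the disjunct $\numof{\taglenof{x_s}} > 1$ imposes neither $\formdiseqsym$ nor any alignment constraint, so speaking of sampling ``the unique symbol of $\assgn(x_s)$'' there is vacuous; harmless, but worth tightening.
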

\cbend

The $x_s = \strat(y_1 \cdots y_m, x_i)$ predicate can be reduced in a similar
fashion, requiring us to replace $\formdiseqsym$ with $\formdiseqsym'$, which enforces
the sampled letters to be the same rather than being different.
\begin{equation}
\begin{aligned}
  \varphi^\strat \quad\defiff\quad \parikhformtagof{\aut} \land
    \bigg(
        & \big(\numof{\taglenof{x_s}} = 0 \land \neg \varphi_{\mathit{InBounds}} \big) \lor {} \\[-2mm]
        & \big(\numof{\taglenof{x_s}} = 1 \land 
        \varphi_{\mathit{InBounds}} \land \formdiseqsym' \land \bigvee_{\substack{1 \leq j \leq m}} \varphi_{1,j}
    \big)
    \bigg).
\end{aligned}
\vspace{-3mm}
\end{equation}

\cbstart
\begin{theorem}\label{thm:}
The formula $\regconstr' \land \intconstr \land x_s = \strat(y_1 \cdots y_m,
x_i)$ is equisatisfiable to the formula $\intconstr \land \varphi^{\strat}$
and the size of $\varphi^{\strat}$ is polynomial to~$m\cdot |\regconstr'|$.
\end{theorem}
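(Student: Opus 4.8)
The plan is to lift the correctness proof of the single-unrestricted-disequality case from \cref{sec:single-gen-diseq}, since the tag automaton~$\aut$ used here is exactly the automaton~$\aut^{\singlediseq}$ built for the ``disequality'' $x_s \neq y_1 \cdots y_m$ (with one variable on the left), and only the LIA layer is modified. First I would isolate, as a lemma reused essentially verbatim from \cref{sec:single-gen-diseq}, the run-to-assignment correspondence: models of $\parikhformtagof{\aut}$ are the Parikh tag images of accepting runs of~$\aut$, each such run determines an assignment~$\assgn$ of every string variable to a word in its language with $\numof{\taglenof z} = |\assgn(z)|$, and a run that takes a $\tagmisone$ and a $\tagmistwo$ mismatch transition does so exactly once each, sampling a symbol and a local position (recorded by the $\tagposone$/$\tagpostwo$ tags) in the variable in which that mismatch sits, the two mismatch transitions being ordered according to~$\preccurlyeq$. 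With this lemma in hand, I would prove the theorem by establishing the two directions of equisatisfiability.

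For soundness, I would take a model $\assgn$ of $\regconstr' \land \intconstr \land x_s = \stratof{y_1 \cdots y_m}{x_i}$ and write $w = \assgn(y_1 \cdots y_m)$. If $\assgn(x_i)$ is not a legal index into~$w$, i.e.\ $\assgn(x_i) < 0$ or $\assgn(x_i) \geq |w|$, then $\assgn(x_s) = \epsilon$, and I would exhibit the accepting run of~$\aut$ that reads the concatenation of all variable assignments staying in copy~$1$ (no mismatch): its Parikh tag image satisfies $\parikhformtagof{\aut}$, has $\numof{\taglenof{x_s}} = 0$, and, since the $\numof{\taglenof{y_j}}$-counts sum to $|w|$, falsifies $\varphi_{\mathit{InBounds}}$, so together with the unchanged $\intconstr$ it is a model of $\intconstr \land \varphi^\strat$. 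Otherwise $p := \assgn(x_i)$ satisfies $0 \le p < |w|$, so $\assgn(x_s) = w[p]$ is a single letter; let $j$ be the unique index with $\sum_{k<j}|\assgn(y_k)| \le p < \sum_{k\le j}|\assgn(y_k)|$ and $p' = p - \sum_{k<j}|\assgn(y_k)|$ the local position of~$p$ in~$y_j$. I would exhibit the accepting run that samples the sole letter of~$x_s$ as one mismatch and the letter of~$y_j$ at local position~$p'$ as the other (ordering the two mismatch transitions by~$\preccurlyeq$): its Parikh tag image has $\numof{\taglenof{x_s}} = 1$, satisfies $\varphi_{\mathit{InBounds}}$, makes $\formdiseqsym'$ true because both sampled letters equal $w[p]$, and makes the disjunct $\varphi_{1,j}$ true because the relevant $\tagpos$-count for~$y_j$ is $p'$ while $\sum_{k<j}\numof{\taglenof{y_k}} = \sum_{k<j}|\assgn(y_k)|$.

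For completeness, a model of $\intconstr \land \varphi^\strat$ fixes, via $\parikhformtagof{\aut}$, an accepting run of~$\aut$ and thus an assignment~$\assgn$ that satisfies $\regconstr'$ (and $\intconstr$ on the shared integer variables); the outer disjunct of $\varphi^\strat$ tells which branch of the $\strat$ semantics we land in. In the first disjunct, $\numof{\taglenof{x_s}} = 0$ forces $\assgn(x_s) = \epsilon$ and $\neg\varphi_{\mathit{InBounds}}$ forces $\assgn(x_i)$ out of range, which is precisely the ``otherwise'' clause. In the second, $\numof{\taglenof{x_s}} = 1$ forces $\assgn(x_s)$ to be a single letter, namely the one sampled by the run's (unique) $\tagmis$ transition for~$x_s$; the satisfied disjunct $\varphi_{1,j}$ then pins $\assgn(x_i)$ to the global position in~$w$ of the letter sampled in~$y_j$, $\varphi_{\mathit{InBounds}}$ certifies this index is legal, and $\formdiseqsym'$ equates that letter with $\assgn(x_s)$, so $\assgn(x_s) = w[\assgn(x_i)]$ as required. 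The size bound is immediate: $\aut$ has $\bigOOf{|\regconstr'|}$ states and $\parikhformtagof{\aut}$ is polynomial in it (by the construction of~\cite{JankuT19}); $\varphi_{\mathit{InBounds}}$ and each $\varphi_{1,j}$ have size $\bigOOf{m}$, there are $m$ disjuncts, and $\formdiseqsym'$ has size $\bigOOf{m\cdot|\alphabet|}$, so $\varphi^\strat$ is polynomial in $m\cdot|\regconstr'|$.

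The main obstacle I anticipate is the bookkeeping when $x_s$ itself occurs among $y_1,\dots,y_m$, so that $x_s$ and~$y_j$ are occurrences of one variable sharing a single portion of~$\autcon$: as the $xy \neq yx$ example in \cref{sec:single-gen-diseq} shows, a run must then realize both mismatches inside that one portion and the equation $\varphi_{1,j}$ has to be replaced by its same-variable refinement that adds $\numof{\tagposone}$ to $\numof{\tagpostwo}$ so the first mismatch is not counted twice; one also has to check that a run can sample a letter of~$x_s$ as a ``mismatch'' with itself, i.e.\ that the $\tagmis$ transitions of~$\aut^{\singlediseq}$ are not conditioned on the sampled symbols being distinct (they are not---distinctness is enforced only afterwards by $\formdiseqsym$, and here dually $\formdiseqsym'$ enforces equality). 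The remaining points---the empty-$y_1\cdots y_m$ case, negative $x_i$, and the interaction of $\varphi_{\mathit{InBounds}}$ with $\numof{\taglenof{x_s}}$---are routine boundary checks, and the $x_s = \strat$ case differs from the already-treated $x_s \neq \strat$ case only in using $\formdiseqsym'$ in place of $\formdiseqsym$ and in the shape of the outer disjunction.
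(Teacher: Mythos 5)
Your overall strategy is the right one and, as far as the paper is concerned, essentially the only one available: the paper states this theorem without an explicit proof, relying on the construction of \cref{sec:single-gen-diseq} plus the modified LIA layer, and your plan (reuse the run-to-assignment correspondence of $\aut^{\singlediseq}$, build a witnessing run for soundness, read off an assignment for completeness, count sizes) is exactly the implicit justification. For the generic case --- $x_s$ distinct from the variable $y_j$ that contains the addressed position --- your two directions and your size analysis are fine, and you are also right that the mismatch transitions of $\aut^{\singlediseq}$ place no distinctness condition on the sampled symbols, so sampling equal letters is unproblematic at the automaton level.

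The gap is in the self-occurrence case that you yourself flag as the main obstacle: your proposed repair (switching to the same-variable refinement of $\varphi_{1,j}$ that adds $\numof{\tagposoneof{z}}$ to $\numof{\tagpostwoof{z}}$) fixes only the position bookkeeping, not the existence of the witnessing run. If $x_s$ occurs among $y_1,\dots,y_m$ and the satisfying position $\assgn(x_i)$ falls inside that very occurrence (e.g.\ $x_s = \stratof{x_s}{x_i}$ with $\langof{\aut_{x_s}}=\{a\}$, satisfied by $x_s\mapsto a$, $x_i\mapsto 0$), then $\assgn(x_s)$ is a single letter, the $\epsilon$-concatenation $\autcon$ contains exactly one portion for that variable which the run traverses once, and that single letter-occurrence corresponds to a single transition of the run --- which can carry the $\tagmisone$-tag (copy $1\to2$) or the $\tagmistwo$-tag (copy $2\to3$) but never both, and a run stuck in copy $2$ is not accepting. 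So the accepting run promised in your soundness paragraph (``samples the sole letter of $x_s$ as one mismatch and the letter of $y_j$ at local position $p'$ as the other'') simply does not exist in this case. Closing the gap requires either an explicit assumption that $x_s$ does not occur in the second argument (which is how such constraints typically arise, with a fresh variable for the $\strat$ term, but is not guaranteed by the normal form), or an additional mechanism letting one sampled letter serve both sides --- precisely the role the $\tagcop$ copy tags play in the system construction of \cref{sec:diseqSystem} --- together with the corresponding adjustment of $\formdiseqsym'$ and $\varphi_{1,j}$. As stated, the soundness direction of your proof (and the literal formula $\varphi^{\strat}$ it targets) does not cover this corner.
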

\cbend

\tikzstyle{cell} = [draw, fill=white, minimum size=5mm, text height=6pt]
\tikzstyle{ghostCell} = [minimum size=5mm]
\tikzstyle{mismatch} = [fill=red!20]

\newcommand{
\begin{figure}
    \resizebox{1.0\textwidth}{!}{
      \begin{tikzpicture}
\begin{scope}
  \node[cell] (u1_1) {a};
  \node[cell, anchor=west, mismatch] at (u1_1.east) (u1_2) {a};
  \node[cell, anchor=west, mismatch] at (u1_2.east) (u1_3) {b};
  \node[cell, anchor=west] at (u1_3.east) (u1_4) {b};
  \node[cell, anchor=west] at (u1_4.east) (u1_5) {a};
\end{scope}

\begin{scope}
  \node[cell, anchor=north] at ($(u1_1.south)+(0, -0.3)$) (v1_1) {a};
  \node[cell, anchor=west, mismatch] at (v1_1.east) (v1_2) {b};
  \node[cell, anchor=west, mismatch] at (v1_2.east) (v1_3) {a};

  \node[anchor=north] (offset_label1) at ($(u1_3.south)+(0, -1.2)$) {offset $\offset = 0$};
\end{scope}

\begin{scope}
  \node[cell, anchor=west] at ($(u1_5.east)+(1,0)$) (u2_1) {a};
  \node[cell, anchor=west] at (u2_1.east) (u2_2) {a};
  \node[cell, anchor=west] at (u2_2.east) (u2_3) {b};
  \node[cell, anchor=west, mismatch] at (u2_3.east) (u2_4) {b};
  \node[cell, anchor=west] at (u2_4.east) (u2_5) {a};
\end{scope}

\begin{scope}
  \node[cell, anchor=north] at ($(u2_2.south)+(0, -0.3)$) (v2_1) {a};
  \node[cell, anchor=west]  at (v2_1.east) (v2_2) {b};
  \node[cell, anchor=west, mismatch]  at (v2_2.east) (v2_3) {a};

  \node[ghostCell, anchor=north] at ($(u2_1.south)+(0, -0.3)$) (b2_1) {$\bullet$};

  \node[anchor=north] (offset_label2) at ($(u2_3.south)+(0, -1.2)$) {offset $\offset = 1$};

  \coordinate (b2_1o) at ($(b2_1.south)+(-0.1, 0.15)$);
  \draw[dashed] (b2_1o) edge[->, bend right] (offset_label2.west);
\end{scope}

\begin{scope}
  \node[cell, anchor=west] at ($(u2_5.east)+(1,0)$) (u3_1) {a};
  \node[cell, anchor=west] at (u3_1.east) (u3_2) {a};
  \node[cell, anchor=west, mismatch] at (u3_2.east) (u3_3) {b};
  \node[cell, anchor=west] at (u3_3.east) (u3_4) {b};
  \node[cell, anchor=west] at (u3_4.east) (u3_5) {a};
\end{scope}

\begin{scope}
  \node[cell, anchor=north, mismatch] at ($(u3_3.south)+(0, -0.3)$) (v3_1) {a};
  \node[cell, anchor=west]  at (v3_1.east) (v3_2) {b};
  \node[cell, anchor=west]  at (v3_2.east) (v3_3) {a};

  \node[ghostCell, anchor=north] at ($(u3_1.south)+(0, -0.3)$) (b3_1) {$\bullet$};
  \node[ghostCell, anchor=west]  at (b3_1.east) (b3_2) {$\bullet$};

  \node[anchor=north] (offset_label3) at ($(u3_3.south)+(0, -1.2)$) {offset $\offset = 2$};

  \coordinate (b3_1o) at ($(b3_1.south)+(-0.1, 0.15)$);
  \coordinate (b3_2o) at ($(b3_2.south)+(-0.1, 0.15)$);

  \draw[dashed] (b3_1o) edge[->, bend right] (offset_label3.west);
  \draw[dashed] (b3_2o) edge[->, bend right] (offset_label3.west);
\end{scope}

\begin{scope}
  \node[cell, anchor=west] at ($(u3_5.east)+(1,0)$) (u4_1) {a};
  \node[cell, anchor=west] at (u4_1.east)           (u4_2) {a};
  \node[cell, anchor=west] at (u4_2.east)           (u4_3) {b};
  \node[cell, anchor=west] at (u4_3.east)           (u4_4) {b};
  \node[cell, anchor=west] at (u4_4.east)           (u4_5) {a};
  \node[ghostCell, anchor=west, red, fill=red!10, rounded corners=2] at (u4_5.east)      (u4_6) {$\divideontimes$};
\end{scope}

\begin{scope}
  \node[cell, anchor=north] at ($(u4_4.south)+(0, -0.3)$) (v4_1) {a};
  \node[cell, anchor=west]  at (v4_1.east)                (v4_2) {b};
  \node[cell, anchor=west]  at (v4_2.east)                (v4_3) {a};

  \node[ghostCell, anchor=north] at ($(u4_1.south)+(0, -0.3)$) (b4_1) {$\bullet$};
  \node[ghostCell, anchor=west]  at (b4_1.east) (b4_2) {$\bullet$};
  \node[ghostCell, anchor=west]  at (b4_2.east) (b4_3) {$\bullet$};
  
  \coordinate (mid) at ($(u4_3.south)!0.5!(u4_4.south)$);
  \node[anchor=north] (offset_label4) at ($(mid)+(0, -1.2)$) {offset $\offset = 3$};

  \coordinate (b4_1o) at ($(b4_1.south)+(-0.1, 0.15)$);
  \coordinate (b4_2o) at ($(b4_2.south)+(-0.1, 0.15)$);
  \coordinate (b4_3o) at ($(b4_3.south)+(-0.1, 0.15)$);

  \draw[dashed] (b4_1o) edge[->, bend right] (offset_label4.west);
  \draw[dashed] (b4_2o) edge[->, bend right] (offset_label4.west);
  \draw[dashed] (b4_3o) edge[->, bend right] (offset_label4.west);

  \coordinate (midpoint) at ($(u4_3.north)!0.5!(u4_4.north)$);


  \draw[red!20, line width=1.2mm] (u4_6) edge[-] (v4_3.north);
  \draw[red, line width=0.3mm] ($(u4_6.center)+(0, -0.05)$) edge[dotted, -] (v4_3.north);
\end{scope}

\begin{scope}
    \node[fit=(u4_1)(u4_5)(v4_1)(v4_3)] (offset4_wrapper) {};

  \node[ ] at ($(u1_1.west)+(-0.5, 0)$) (var_v) {$v{:}$};
  \node[ ] at ($(v1_1.west)+(-0.5, 0)$) (var_u) {$u{:}$};


\end{scope}

\begin{pgfonlayer}{bg1}
  \node[fit=(v1_1)(u1_5)] (offset1_wrapper) {};
  \node[fit=(v2_1)(u2_1)(u2_5)] (offset2_wrapper) {};
  \node[fit=(v3_1)(u3_5)(u3_1)] (offset3_wrapper) {};
  \node[fill=gray!10, rounded corners=2] (land1) at ($(offset1_wrapper)!0.5!(offset2_wrapper)$) {$\land$};
  \node[fill=gray!10, rounded corners=2] (land2) at ($(offset2_wrapper)!0.5!(offset3_wrapper)$) {$\land$};
  \node[fill=gray!10, rounded corners=2] (land3) at ($(offset3_wrapper)!0.5!(offset4_wrapper)$) {$\land$};
\end{pgfonlayer}

\end{tikzpicture}
    }
  \vspace*{-3mm}
  \caption{
    Demonstration of how $\notcontains(u, v)$ for $u, v \in
    \vars^*$ is satisfied by an assignment $\sigma = \{u \mapsto aba$, $v\mapsto aabba\}$.
    Symbols with \textcolor{red}{red} background present mismatches
    in~the~corresponding alignments.
  }
  \label{fig:notcontainsSemantics}
\end{figure}
}[0]{
\begin{figure}
    \resizebox{1.0\textwidth}{!}{
      \begin{tikzpicture}
\begin{scope}
  \node[cell] (u1_1) {a};
  \node[cell, anchor=west, mismatch] at (u1_1.east) (u1_2) {a};
  \node[cell, anchor=west, mismatch] at (u1_2.east) (u1_3) {b};
  \node[cell, anchor=west] at (u1_3.east) (u1_4) {b};
  \node[cell, anchor=west] at (u1_4.east) (u1_5) {a};
\end{scope}

\begin{scope}
  \node[cell, anchor=north] at ($(u1_1.south)+(0, -0.3)$) (v1_1) {a};
  \node[cell, anchor=west, mismatch] at (v1_1.east) (v1_2) {b};
  \node[cell, anchor=west, mismatch] at (v1_2.east) (v1_3) {a};

  \node[anchor=north] (offset_label1) at ($(u1_3.south)+(0, -1.2)$) {offset $\offset = 0$};
\end{scope}

\begin{scope}
  \node[cell, anchor=west] at ($(u1_5.east)+(1,0)$) (u2_1) {a};
  \node[cell, anchor=west] at (u2_1.east) (u2_2) {a};
  \node[cell, anchor=west] at (u2_2.east) (u2_3) {b};
  \node[cell, anchor=west, mismatch] at (u2_3.east) (u2_4) {b};
  \node[cell, anchor=west] at (u2_4.east) (u2_5) {a};
\end{scope}

\begin{scope}
  \node[cell, anchor=north] at ($(u2_2.south)+(0, -0.3)$) (v2_1) {a};
  \node[cell, anchor=west]  at (v2_1.east) (v2_2) {b};
  \node[cell, anchor=west, mismatch]  at (v2_2.east) (v2_3) {a};

  \node[ghostCell, anchor=north] at ($(u2_1.south)+(0, -0.3)$) (b2_1) {$\bullet$};

  \node[anchor=north] (offset_label2) at ($(u2_3.south)+(0, -1.2)$) {offset $\offset = 1$};

  \coordinate (b2_1o) at ($(b2_1.south)+(-0.1, 0.15)$);
  \draw[dashed] (b2_1o) edge[->, bend right] (offset_label2.west);
\end{scope}

\begin{scope}
  \node[cell, anchor=west] at ($(u2_5.east)+(1,0)$) (u3_1) {a};
  \node[cell, anchor=west] at (u3_1.east) (u3_2) {a};
  \node[cell, anchor=west, mismatch] at (u3_2.east) (u3_3) {b};
  \node[cell, anchor=west] at (u3_3.east) (u3_4) {b};
  \node[cell, anchor=west] at (u3_4.east) (u3_5) {a};
\end{scope}

\begin{scope}
  \node[cell, anchor=north, mismatch] at ($(u3_3.south)+(0, -0.3)$) (v3_1) {a};
  \node[cell, anchor=west]  at (v3_1.east) (v3_2) {b};
  \node[cell, anchor=west]  at (v3_2.east) (v3_3) {a};

  \node[ghostCell, anchor=north] at ($(u3_1.south)+(0, -0.3)$) (b3_1) {$\bullet$};
  \node[ghostCell, anchor=west]  at (b3_1.east) (b3_2) {$\bullet$};

  \node[anchor=north] (offset_label3) at ($(u3_3.south)+(0, -1.2)$) {offset $\offset = 2$};

  \coordinate (b3_1o) at ($(b3_1.south)+(-0.1, 0.15)$);
  \coordinate (b3_2o) at ($(b3_2.south)+(-0.1, 0.15)$);

  \draw[dashed] (b3_1o) edge[->, bend right] (offset_label3.west);
  \draw[dashed] (b3_2o) edge[->, bend right] (offset_label3.west);
\end{scope}

\begin{scope}
  \node[cell, anchor=west] at ($(u3_5.east)+(1,0)$) (u4_1) {a};
  \node[cell, anchor=west] at (u4_1.east)           (u4_2) {a};
  \node[cell, anchor=west] at (u4_2.east)           (u4_3) {b};
  \node[cell, anchor=west] at (u4_3.east)           (u4_4) {b};
  \node[cell, anchor=west] at (u4_4.east)           (u4_5) {a};
  \node[ghostCell, anchor=west, red, fill=red!10, rounded corners=2] at (u4_5.east)      (u4_6) {$\divideontimes$};
\end{scope}

\begin{scope}
  \node[cell, anchor=north] at ($(u4_4.south)+(0, -0.3)$) (v4_1) {a};
  \node[cell, anchor=west]  at (v4_1.east)                (v4_2) {b};
  \node[cell, anchor=west]  at (v4_2.east)                (v4_3) {a};

  \node[ghostCell, anchor=north] at ($(u4_1.south)+(0, -0.3)$) (b4_1) {$\bullet$};
  \node[ghostCell, anchor=west]  at (b4_1.east) (b4_2) {$\bullet$};
  \node[ghostCell, anchor=west]  at (b4_2.east) (b4_3) {$\bullet$};
  
  \coordinate (mid) at ($(u4_3.south)!0.5!(u4_4.south)$);
  \node[anchor=north] (offset_label4) at ($(mid)+(0, -1.2)$) {offset $\offset = 3$};

  \coordinate (b4_1o) at ($(b4_1.south)+(-0.1, 0.15)$);
  \coordinate (b4_2o) at ($(b4_2.south)+(-0.1, 0.15)$);
  \coordinate (b4_3o) at ($(b4_3.south)+(-0.1, 0.15)$);

  \draw[dashed] (b4_1o) edge[->, bend right] (offset_label4.west);
  \draw[dashed] (b4_2o) edge[->, bend right] (offset_label4.west);
  \draw[dashed] (b4_3o) edge[->, bend right] (offset_label4.west);

  \coordinate (midpoint) at ($(u4_3.north)!0.5!(u4_4.north)$);


  \draw[red!20, line width=1.2mm] (u4_6) edge[-] (v4_3.north);
  \draw[red, line width=0.3mm] ($(u4_6.center)+(0, -0.05)$) edge[dotted, -] (v4_3.north);
\end{scope}

\begin{scope}
    \node[fit=(u4_1)(u4_5)(v4_1)(v4_3)] (offset4_wrapper) {};

  \node[ ] at ($(u1_1.west)+(-0.5, 0)$) (var_v) {$v{:}$};
  \node[ ] at ($(v1_1.west)+(-0.5, 0)$) (var_u) {$u{:}$};


\end{scope}

\begin{pgfonlayer}{bg1}
  \node[fit=(v1_1)(u1_5)] (offset1_wrapper) {};
  \node[fit=(v2_1)(u2_1)(u2_5)] (offset2_wrapper) {};
  \node[fit=(v3_1)(u3_5)(u3_1)] (offset3_wrapper) {};
  \node[fill=gray!10, rounded corners=2] (land1) at ($(offset1_wrapper)!0.5!(offset2_wrapper)$) {$\land$};
  \node[fill=gray!10, rounded corners=2] (land2) at ($(offset2_wrapper)!0.5!(offset3_wrapper)$) {$\land$};
  \node[fill=gray!10, rounded corners=2] (land3) at ($(offset3_wrapper)!0.5!(offset4_wrapper)$) {$\land$};
\end{pgfonlayer}

\end{tikzpicture}
    }
  \vspace*{-3mm}
  \caption{
    Demonstration of how $\notcontains(u, v)$ for $u, v \in
    \vars^*$ is satisfied by an assignment $\sigma = \{u \mapsto aba$, $v\mapsto aabba\}$.
    Symbols with \textcolor{red}{red} background present mismatches
    in~the~corresponding alignments.
  }
  \label{fig:notcontainsSemantics}
\end{figure}
}

\vspace{-3.0mm}
\subsection{Not Contains Predicate}\label{sec:notcontains}
\vspace{-0.0mm}



\begin{figure}
    \resizebox{1.0\textwidth}{!}{
      \begin{tikzpicture}
\begin{scope}
  \node[cell] (u1_1) {a};
  \node[cell, anchor=west, mismatch] at (u1_1.east) (u1_2) {a};
  \node[cell, anchor=west, mismatch] at (u1_2.east) (u1_3) {b};
  \node[cell, anchor=west] at (u1_3.east) (u1_4) {b};
  \node[cell, anchor=west] at (u1_4.east) (u1_5) {a};
\end{scope}

\begin{scope}
  \node[cell, anchor=north] at ($(u1_1.south)+(0, -0.3)$) (v1_1) {a};
  \node[cell, anchor=west, mismatch] at (v1_1.east) (v1_2) {b};
  \node[cell, anchor=west, mismatch] at (v1_2.east) (v1_3) {a};

  \node[anchor=north] (offset_label1) at ($(u1_3.south)+(0, -1.2)$) {offset $\offset = 0$};
\end{scope}

\begin{scope}
  \node[cell, anchor=west] at ($(u1_5.east)+(1,0)$) (u2_1) {a};
  \node[cell, anchor=west] at (u2_1.east) (u2_2) {a};
  \node[cell, anchor=west] at (u2_2.east) (u2_3) {b};
  \node[cell, anchor=west, mismatch] at (u2_3.east) (u2_4) {b};
  \node[cell, anchor=west] at (u2_4.east) (u2_5) {a};
\end{scope}

\begin{scope}
  \node[cell, anchor=north] at ($(u2_2.south)+(0, -0.3)$) (v2_1) {a};
  \node[cell, anchor=west]  at (v2_1.east) (v2_2) {b};
  \node[cell, anchor=west, mismatch]  at (v2_2.east) (v2_3) {a};

  \node[ghostCell, anchor=north] at ($(u2_1.south)+(0, -0.3)$) (b2_1) {$\bullet$};

  \node[anchor=north] (offset_label2) at ($(u2_3.south)+(0, -1.2)$) {offset $\offset = 1$};

  \coordinate (b2_1o) at ($(b2_1.south)+(-0.1, 0.15)$);
  \draw[dashed] (b2_1o) edge[->, bend right] (offset_label2.west);
\end{scope}

\begin{scope}
  \node[cell, anchor=west] at ($(u2_5.east)+(1,0)$) (u3_1) {a};
  \node[cell, anchor=west] at (u3_1.east) (u3_2) {a};
  \node[cell, anchor=west, mismatch] at (u3_2.east) (u3_3) {b};
  \node[cell, anchor=west] at (u3_3.east) (u3_4) {b};
  \node[cell, anchor=west] at (u3_4.east) (u3_5) {a};
\end{scope}

\begin{scope}
  \node[cell, anchor=north, mismatch] at ($(u3_3.south)+(0, -0.3)$) (v3_1) {a};
  \node[cell, anchor=west]  at (v3_1.east) (v3_2) {b};
  \node[cell, anchor=west]  at (v3_2.east) (v3_3) {a};

  \node[ghostCell, anchor=north] at ($(u3_1.south)+(0, -0.3)$) (b3_1) {$\bullet$};
  \node[ghostCell, anchor=west]  at (b3_1.east) (b3_2) {$\bullet$};

  \node[anchor=north] (offset_label3) at ($(u3_3.south)+(0, -1.2)$) {offset $\offset = 2$};

  \coordinate (b3_1o) at ($(b3_1.south)+(-0.1, 0.15)$);
  \coordinate (b3_2o) at ($(b3_2.south)+(-0.1, 0.15)$);

  \draw[dashed] (b3_1o) edge[->, bend right] (offset_label3.west);
  \draw[dashed] (b3_2o) edge[->, bend right] (offset_label3.west);
\end{scope}

\begin{scope}
  \node[cell, anchor=west] at ($(u3_5.east)+(1,0)$) (u4_1) {a};
  \node[cell, anchor=west] at (u4_1.east)           (u4_2) {a};
  \node[cell, anchor=west] at (u4_2.east)           (u4_3) {b};
  \node[cell, anchor=west] at (u4_3.east)           (u4_4) {b};
  \node[cell, anchor=west] at (u4_4.east)           (u4_5) {a};
  \node[ghostCell, anchor=west, red, fill=red!10, rounded corners=2] at (u4_5.east)      (u4_6) {$\divideontimes$};
\end{scope}

\begin{scope}
  \node[cell, anchor=north] at ($(u4_4.south)+(0, -0.3)$) (v4_1) {a};
  \node[cell, anchor=west]  at (v4_1.east)                (v4_2) {b};
  \node[cell, anchor=west]  at (v4_2.east)                (v4_3) {a};

  \node[ghostCell, anchor=north] at ($(u4_1.south)+(0, -0.3)$) (b4_1) {$\bullet$};
  \node[ghostCell, anchor=west]  at (b4_1.east) (b4_2) {$\bullet$};
  \node[ghostCell, anchor=west]  at (b4_2.east) (b4_3) {$\bullet$};
  
  \coordinate (mid) at ($(u4_3.south)!0.5!(u4_4.south)$);
  \node[anchor=north] (offset_label4) at ($(mid)+(0, -1.2)$) {offset $\offset = 3$};

  \coordinate (b4_1o) at ($(b4_1.south)+(-0.1, 0.15)$);
  \coordinate (b4_2o) at ($(b4_2.south)+(-0.1, 0.15)$);
  \coordinate (b4_3o) at ($(b4_3.south)+(-0.1, 0.15)$);

  \draw[dashed] (b4_1o) edge[->, bend right] (offset_label4.west);
  \draw[dashed] (b4_2o) edge[->, bend right] (offset_label4.west);
  \draw[dashed] (b4_3o) edge[->, bend right] (offset_label4.west);

  \coordinate (midpoint) at ($(u4_3.north)!0.5!(u4_4.north)$);


  \draw[red!20, line width=1.2mm] (u4_6) edge[-] (v4_3.north);
  \draw[red, line width=0.3mm] ($(u4_6.center)+(0, -0.05)$) edge[dotted, -] (v4_3.north);
\end{scope}

\begin{scope}
    \node[fit=(u4_1)(u4_5)(v4_1)(v4_3)] (offset4_wrapper) {};

  \node[ ] at ($(u1_1.west)+(-0.5, 0)$) (var_v) {$v{:}$};
  \node[ ] at ($(v1_1.west)+(-0.5, 0)$) (var_u) {$u{:}$};


\end{scope}

\begin{pgfonlayer}{bg1}
  \node[fit=(v1_1)(u1_5)] (offset1_wrapper) {};
  \node[fit=(v2_1)(u2_1)(u2_5)] (offset2_wrapper) {};
  \node[fit=(v3_1)(u3_5)(u3_1)] (offset3_wrapper) {};
  \node[fill=gray!10, rounded corners=2] (land1) at ($(offset1_wrapper)!0.5!(offset2_wrapper)$) {$\land$};
  \node[fill=gray!10, rounded corners=2] (land2) at ($(offset2_wrapper)!0.5!(offset3_wrapper)$) {$\land$};
  \node[fill=gray!10, rounded corners=2] (land3) at ($(offset3_wrapper)!0.5!(offset4_wrapper)$) {$\land$};
\end{pgfonlayer}

\end{tikzpicture}
    }
  \vspace*{-3mm}
  \caption{
    Demonstration of how $\notcontains(u, v)$ for $u, v \in
    \vars^*$ is satisfied by an assignment $\sigma = \{u \mapsto aba$, $v\mapsto aabba\}$.
    Symbols with \textcolor{red}{red} background present mismatches
    in~the~corresponding alignments.
  }
  \label{fig:notcontainsSemantics}
\end{figure}

In this section, we extend the reasoning about the disequality tag
automaton~$\aut^{\singlediseq}$ introduced in \cref{sec:single-gen-diseq} to
handling $\notcontains$ with flat languages.
The constraint $\notcontains(u,v)$ for $u,v\in\vars^*$ is satisfiable 
if there is a string assignment of variables from $u$ and $v$ yielding words $w_u$ and $w_v$ respectively such that 
for \emph{every} alignment of $w_u$ and $w_v$
\begin{inparaenum}[(i)]
  \item there is a mismatch symbol of $w_u$ and~$w_v$ or
  \item $w_u$~overflows $w_v$.
\end{inparaenum}
For example, considering the constraint $\notcontainsof{u}{v}$, the assignment
$\sigma = \{u \mapsto aba, v\mapsto aabba\}$ is a~model---for every 
alignment of $aba$ and $aabba$, there is either a~mismatching symbol or a~part
of~$aba$ is outside~$aabba$ (cf.\ \cref{fig:notcontainsSemantics}).
The alignment of $w_u$ and $w_v$ can be characterized by the offset $\offset
\in \nat$ of $w_u$ counted from the beginning of $w_v$.
Therefore, the semantics of $\notcontains$ implicitly involves a~universal
quantifier ranging over all possible offsets.

Since we need to consider mismatches for all offsets of~$w_u$ and~$w_v$, 
one might consider the formula~$\varphi^{\singlediseq}$ from
\cref{sec:single-gen-diseq}, but changed such that the position constraints
$\formdiseqposij$ take into account a~universally quantified offset
variable~$\offset$.
Such a~solution, however, does not work, since we need that
\begin{inparaenum}[(i)]
  \item for each value of~$\offset$, the string assignment remains the same (we
    want to shift the same assignment to different positions given by the
    offset and not obtain a different assignment for each offset), and 
  \item for each offset the particular mismatch position and symbol (if any)
    may be different.
\end{inparaenum}
The second property is problematic as the different offsets might involve
different runs in the tag automaton that are, however, over the same string
assignment.
For this reason, we need to impose a~\emph{flat language restriction}, since
for flat automata, the number of taken transitions (and so a~model
of~$\parikhformof \aut$) uniquely determines the accepted word.
On the other hand, for non-flat automata, this property does not hold.
E.g,. consider an NFA with a~single state~$q$ (which is both initial and
accepting) and two transitions: $q \ltr a q$ and $q \ltr b q$.
The two words $aabb$ and $bbaa$ accepted by the NFA are different, but they
have the same Parikh images.
Our restriction to flat languages gives us the guarantee that a~model of the
Parikh formula uniquely determines the string assignment.

\vspace{-0.0mm}
\subsubsection{Formula Construction}\label{sec:label}
\vspace{-0.0mm}

Let $\varphi = \notcontains(u, v)$ where $u = u_1\dots u_n$ and $v = v_1\dots v_m$ are sequences of variables from~$\vars$ such that
$\langof {\autass(x)}$ is flat for every $x \in \vars$, and let $\aut^{\singlediseq}$ be
a tag automaton constructed as described in \cref{sec:single-gen-diseq}.
Since we now need to speak about different runs of $\aut^{\singlediseq}$, we 
lift the definition of the Parikh tag image to 
explicitly speak about the Parikh variables, i.e., $\parikhformtag(T, \num)$
where $\num$ denotes the set of all $\num$-prefixed variables
in~$\parikhformtag$.
Since we need to speak about alignments of assignments, we also refine the
formulae $\formdiseqposij$ from \cref{sec:single-gen-diseq} to
$\formdiseqposij(\offset, \num)$, explicitly relating the used Parikh variables
and the particular offset~$\offset$.
The offset~$\offset$ is added to the left-hand side of every equation occurring
inside~$\formdiseqposij$, in order to express that the assignments of the
left-hand side are shifted to the right by~$\offset$.
The formula $\formdiseqmis^{\singlediseq}$ from \cref{sec:single-gen-diseq} is
then also changed into~$\formdiseqmis(\offset, \num)$, which
uses~$\formdiseqposij(\offset, \num)$ instead of~$\formdiseqposij$.

Let us start by defining some auxiliary predicates used later in the resulting formula:
\begin{align}
  \pi (\move {q} {\tagsymof a, \taglenof x} {p}) &\quad\defequal\quad \left\{ \move {(q,i)} {U} {(p,j)} \mid \move {(q,i)} {U} {(p,j)} \in \Delta(\aut^{\singlediseq}), \tagsymof a \in U \right\}~\text{and} \\
  \equalword{\num_1} {\num_2} &\quad\defequiv\quad
    \bigwedge_{t \in \Delta (\autcon)} \left( \sum_{r \in \pi(t)} \num_1 r =
    \sum_{r \in \pi(t)} \num_2 r  \right).
\end{align}
Let $\num_1$ and $\num_2$ be two sets of Parikh variables encoding accepting runs of $
\aut^{\singlediseq}$, i.e., 
$\parikhformtagof{\aut^{\singlediseq}, \num_1}$ and $\parikhformtagof{\aut^{\singlediseq}, \num_2}$ hold. 
Intuitively, $\equalword{\num_1} {\num_2}$ is satisfied iff $\num_1$ and $\num_2$ correspond to the
same sets of runs in the $\epsilon$-concatenation~$\autcon$ serving as a basis for $\aut^{\singlediseq}$, and, therefore, 
since we assume flat automata, 
$\num_1$~and~$\num_2$ correspond to the same string assignments. 
%
Further, we define $\lenDiff {\num}$ expressing the difference $|w_v| - |w_u|$ 
where $w_u$ and $w_v$ are concatenated assignments of $\notcontains$'s arguments 
given by the Parikh image:
\vspace{-2mm}
\begin{equation}
  \lenDiff {\num} \quad\defequal\quad
      \left( \sum_{1 \le j \le m} \numof{\taglenof{v_j}} \right)
      - \left(\sum_{1 \le i \le n} \numof{\taglenof{u_i}} \right).
\end{equation}
The resulting formula equisatisfiable to the $\notcontains$ is then given as
\begin{equation}
  \begin{aligned}
    \notcontainsLIA\quad \defequiv \quad
    \parikhformtagof{\aut^{\singlediseq}, \num_1} \land &
      \forall \offset \exists \num_2
      \Big(
        \big(
          \parikhformtagof{\aut^{\singlediseq}, \num_2} \land {}  \equalword{\num_1} {\num_2} \land \formdiseqmis(\offset, \num_2) 
        \big)
    \lor {} &\\[-2mm]
        &
        \hspace*{12mm}
        \offset < 0
        \lor
        \offset > \lenDiff{\num_1}
      \Big).
  \end{aligned}
\end{equation} 
Note that in the formula, we use $\exists \num$ to denote the existential quantification over all variables in $\num$.
Intuitively, $\notcontainsLIA$ is satisfied iff there is a string assignment corresponding to $\num_1$ such that
for every offset $\offset$, we can find some other model $\num_2$ satisfying
$\parikhformtagof{\aut^{\singlediseq}, \num_2}$ that encodes the same string
model (but potentially a~different run through~$\aut^{\singlediseq}$.
Moreover, $\notcontainsLIA$ says that the
run corresponding to~$\num_2$ contains a mismatch for the offset~$\kappa$.
Alternatively, the offset $\kappa$ might be larger than the difference between lengths of $\notcontains$ arguments or negative, 
in which case the corresponding alignment
is trivially satisfied.

\cbstart
\begin{theorem}\label{thm:}
The formula $\regconstr' \land \intconstr \land \notcontains(u_1\dots u_n,
v_1\dots v_m)$ where the language of each~$u_i$ and~$v_j$ is flat is
equisatisfiable to the formula $\intconstr \land \notcontainsLIA$.
Moreover, the size of $\notcontainsLIA$ is polynomial to~$mn\cdot|\regconstr'|$.
\end{theorem}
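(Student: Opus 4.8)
The plan is to prove equisatisfiability by moving between models of the string formula and models of $\notcontainsLIA$ through accepting runs of $\aut^{\singlediseq}$, reusing the correctness of the single-disequality reduction of \cref{sec:single-gen-diseq} as a black box for a fixed offset, and isolating the use of the flat-language hypothesis in one lemma. \textbf{Step 1 (the flatness lemma).} I would first prove that whenever $\parikhformtagof{\aut^{\singlediseq}, \num_1}$ and $\parikhformtagof{\aut^{\singlediseq}, \num_2}$ hold and $\equalword{\num_1}{\num_2}$ is true, every run with Parikh image $\num_1$ and every run with Parikh image $\num_2$ read the same word, hence encode the same assignment $\sigma$ of the string variables occurring in $u$ and $v$. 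The argument: projecting a run of $\aut^{\singlediseq}$ onto $\autcon$ — dropping the copy index and the mismatch/position tags, mapping each symbol transition to the $\autcon$-transition it refines via $\pi(\cdot)$ and each internal $\epsilon$-transition to the matching $\epsilon$-transition of $\autcon$ — yields a valid accepting run of $\autcon$; $\equalword{\num_1}{\num_2}$ says exactly that the two projected runs agree on the Parikh image of every symbol transition of $\autcon$; and since the NFA $\aut_x$ picked for each flat language $\langof{\autass(x)}$ is flat and $\autcon$ is their $\epsilon$-concatenation in the fixed order $\preccurlyeq$, a run of $\autcon$ decomposes uniquely into one segment per variable (the separating $\epsilon$-transitions being determined by the adjacent segments), so the symbol-Parikh-image over $\autcon$ determines the word read (the only residual freedom, picking among equivalent states when a variable contributes $\epsilon$, does not change the word). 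This is the only place the flat restriction is used, and the non-flat NFA in \cref{sec:notcontains} shows it is genuinely needed; making this lemma precise is the main obstacle, as the rest is bookkeeping on top of the already-established single-disequality correctness.

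\textbf{Step 2 (soundness, $\Leftarrow$).} Given a model of $\intconstr \land \notcontainsLIA$, it fixes values of $\num_1$ satisfying $\parikhformtagof{\aut^{\singlediseq}, \num_1}$; I would take a witnessing run, read off the string assignment $\sigma$ (well-defined by Step~1; each per-variable segment stays inside $\aut_x$, so $\sigma \models \regconstr'$, and the length tags make $|w_u|$, $|w_v|$, and $\lenDiff{\num_1} = |w_v| - |w_u|$ consistent), and let $\intconstr$ hold with the remaining integer values. It remains to check $\sigma \models \notcontains(u,v)$. For an offset $\offset < 0$ or $\offset > |w_v| - |w_u|$ the alignment of $w_u$ in $w_v$ is an overflow and the requirement for $\offset$ holds vacuously; for $0 \le \offset \le \lenDiff{\num_1}$, $\notcontainsLIA$ supplies $\num_2$ with $\equalword{\num_1}{\num_2}$ and $\formdiseqmis(\offset, \num_2)$. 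By Step~1 a run realising $\num_2$ is over the same $\sigma$, and since $\formdiseqmis(\offset, \num_2)$ is the offset-shifted variant of the mismatch subformula of \cref{sec:single-gen-diseq} (the offset added to the left-hand side of each equation in $\formdiseqposij$), its correctness argument shows it witnesses a position $\ell$ with $w_u[\ell] \neq w_v[\ell + \offset]$, i.e.\ a genuine mismatch in that alignment. Hence $\sigma$ satisfies the whole string formula.

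\textbf{Step 3 (completeness, $\Rightarrow$).} For the converse, I would take a model $\sigma$ of the string formula with concatenated assignments $w_u, w_v$ and let $\num_1$ be the Parikh image of the canonical run of $\aut^{\singlediseq}$ that stays in copy~$1$ and merely reads the word encoding $\sigma$'s string part. For each offset $\offset$: if $\offset < 0$ or $\offset > \lenDiff{\num_1}$ the corresponding disjunct is satisfied; otherwise $\sigma \models \notcontains(u,v)$ gives $\ell$ with $w_u[\ell] \neq w_v[\ell + \offset]$, lying in some occurrence $u_i$ on the left and with $\ell + \offset$ in some occurrence $v_j$ on the right, and I take $\num_2$ to be the Parikh image of the run over the same word that additionally fires the first mismatch transition in whichever of $u_i, v_j$ precedes the other under $\preccurlyeq$ and the second mismatch transition in the other; then $\parikhformtagof{\aut^{\singlediseq}, \num_2}$ and $\equalword{\num_1}{\num_2}$ hold and the disjunct $\varphi_{i,j}(\offset, \num_2)$ of $\formdiseqmis(\offset, \num_2)$ holds, so the $\forall \offset\, \exists \num_2$ part is satisfied; with $\intconstr$ (satisfied by $\sigma$) this is a model of $\intconstr \land \notcontainsLIA$.

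\textbf{Size bound.} Finally, $\aut^{\singlediseq}$ is three copies of $\autcon$, whose size is $\bigOOf{|\regconstr'|}$, so each occurrence of $\parikhformtagof{\aut^{\singlediseq}, \cdot}$ is polynomial in $|\regconstr'|$; $\equalword$ has one conjunct per symbol transition of $\autcon$; $\formdiseqmis(\offset, \cdot)$ is a disjunction of $nm$ subformulae $\varphi_{i,j}$, each of size $\bigOOf{n + m}$; $\lenDiff$ has size $\bigOOf{n + m}$; and the two quantifier prefixes add nothing. Hence $\notcontainsLIA$ is polynomial in $nm \cdot |\regconstr'|$.
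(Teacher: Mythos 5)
Your proposal is correct and follows essentially the same route the paper intends (the paper states this theorem without a detailed proof, only the informal justification in the $\notcontains$ section): flatness of the per-variable automata makes the Parikh image projected onto $\autcon$ determine the string assignment, so $\equalword{\num_1}{\num_2}$ pins the two runs to the same assignment, and the two directions are then handled exactly as in your Steps 2--3 by reusing the single-disequality machinery per offset, with your size analysis matching the claimed bound. One point to make explicit: as written, $\notcontainsLIA$ contains only $\formdiseqmis(\offset, \num_2)$, so for your soundness step to produce a \emph{genuine} mismatch you must read the offset-indexed $\formdiseqmis$ as also incorporating the symbol-difference check $\formdiseqsym^{\singlediseq}$ (otherwise the two sampled symbols could coincide and the alignment could in fact be a match); with that reading, and noting that your Step 3 run construction also covers the case where the two mismatch occurrences lie in the same variable, your argument goes through.
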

\cbend

\vspace{-0.0mm}
\subsection{Arbitrary Combination of Position Predicates}\label{sec:combination}
\vspace{-0.0mm}

The construction of the tag automaton for multiple disequalities and the
subsequent LIA reduction can be easily extended to a system $\psi \equiv \bigwedge_{1 \le k
\le K} P_{k}(x_{k, 1} \cdots x_{k, n_k}, y_{k, 1} \cdots y_{k, m_k})$ where
$P_{k} \in \{\neq$, $\notprefix$, $\notsuffix$, $\strat$, $\notstrat$, $\notcontains \}$.
From a high-level perspective, a single $\epsilon$-concatenation~$\autcon$ of
all automata of variables occurring in $\psi$ is created. The tag automaton
is formed in the same way as described in \cref{sec:diseqSystem},
containing $2K+1$ copies of $\autcon$ to track up to $2K$ possible
mismatch symbols (one for each side of every predicate).
The resulting LIA formula is then
\begin{equation}
    \varphi^\combination \defiff \fParikh \land \fConsist{} \land \fCopies{} \land
    \bigwedge_{1 \le i \le K} \fPredSat {i}
\end{equation}
where $\fPredSat {i}$ is a LIA formula specific to the type of $i$-th constraint
described in previous sections expressing that the predicate is satisfied. Note
that $\fPredSat {i}$ needs to be modified in the same way as in the case of a
system of multiple disequalities (cf. \cref{sec:diseqSystem}) to make use of
the $p_{D, s}$ and $m_{D, s}$ variables.
Furthermore, all variables present in any
$\notcontains$ predicate must have a flat language to maintain soundness, similar
as in the case of a~single $\notcontains$ predicate.

\cbstart
\begin{theorem}\label{thm:}
The formula $\regconstr' \land \intconstr \land \psi$
  is equisatisfiable to the formula
$\intconstr \land \varphi^{\combination}$, provided all variables occurring
  in~$\notcontains$ constraints within~$\psi$ are assigned flat languages
  by~$\regconstr'$.
  In addition, the size of~$\varphi^{\combination}$ is polynomial
  to~$mn\cdot|\regconstr'|$ where~$n$ is the number of constraints in~$\psi$
  and~$m$ is the maximum size of any side of the constraints in~$\psi$.
\end{theorem}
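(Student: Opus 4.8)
The plan is to lift the three constructions of \cref{sec:diseqSystem} (the tag automaton with $2n+1$ copies, the auxiliary variables $m_{D,s}$ and $c_i$, and the consistency formulae $\fConsist$, $\fCopies$) to an arbitrary combination of predicates, plugging in for each conjunct the predicate-specific satisfaction formula developed in \cref{sec:single-gen-diseq}, \cref{sec:notprefix}, \cref{sec:other_pos}, and \cref{sec:notcontains}. Concretely, I would first build a single $\epsilon$-concatenation $\autcon$ of the NFAs $\autass(x)$ for every variable $x$ occurring in $\psi$, and then take $\aut^{\combination}$ to be $2K+1$ stacked copies of $\autcon$, wired exactly as $\aut^{\multidiseq}$: each copy-to-copy transition carries either a mismatch tag $\tagof{\tagmis_i, x, P_k, s, a}$ (the $i$-th sampled mismatch symbol, equal to $a$, is used on side $s\in\{\leftsymb,\rightsymb\}$ of the $k$-th predicate and is read inside an occurrence of $x$) or a copy tag $\tagof{\tagcop_i, x, P_k, s}$ (this side reuses an earlier mismatch already seen in $x$); every copy also carries the position tags $\tagof{\tagpos_i, z}$ and the $\tagposthree$-style tags that \cref{sec:single-gen-diseq} already added for the $\notsuffix$ and $\notcontains$ machinery, and accepting states are placed only in the odd copies so that a predicate satisfied by a length violation consumes no mismatch budget. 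The formula is $\varphi^{\combination}\defiff\fParikh\land\fConsist\land\fCopies\land\bigwedge_{1\le k\le K}\fPredSat{k}$, with $\fParikh=\parikhformtagof{\aut^{\combination}}$, $\fConsist$ and $\fCopies$ the formulae of \cref{sec:diseqSystem} (pinning down $m_{P_k,s}$ and $c_i$ and ruling out inconsistent sharing), and $\fPredSat{k}$ the satisfaction formula from $P_k$'s subsection, rewritten to use the shared Parikh variables of $\aut^{\combination}$ and the $m_{P_k,s}$ variables in place of the tags of a single-predicate automaton; for a $\notcontains$ conjunct $\fPredSat{k}$ is the $\forall\offset\exists\num$-formula $\notcontainsLIA$. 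It then remains to prove equisatisfiability in both directions and to bound the size.

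For soundness, given a model of $\intconstr\land\varphi^{\combination}$ I would read off the Parikh image carried by the base (non-mismatch, non-copy) transitions; by the Parikh-formula property of \cref{sec:tag_automaton} it is realized by an accepting run of $\autcon$, which decomposes into accepting runs of the NFAs $\autass(x)$ and hence yields an assignment $\assgn\models\regconstr'$. The conjuncts $\fConsist$ and $\fCopies$ force, for each predicate and side, a well-defined sampled mismatch symbol (possibly inherited along a $\tagcop$-chain) sitting at a position consistent with the $\tagpos$-counts, so every $\fPredSat{k}$ evaluated over $\assgn$ is precisely the satisfaction criterion for $P_k$ proved in its subsection; for the $\notcontains$ conjuncts the flatness of the involved languages is exactly what makes the Parikh image of $\aut^{\singlediseq}$ determine the underlying word, which is what the inner $\equalword{\num_1}{\num_2}$ part of $\notcontainsLIA$ relies on. Hence $\assgn\models\regconstr'\land\intconstr\land\psi$. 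Conversely, from $\assgn\models\regconstr'\land\intconstr\land\psi$ I would collect the witness data supplied by each predicate's own subsection (for a conjunct of type $\neq$, $\notprefix$, $\notsuffix$, $\strat$, or $\notstrat$: a length violation or a global mismatch position on each side; for a $\notcontains$ conjunct: for every offset a mismatching position in the same word, which by flatness re-encodes as another accepting run of $\aut^{\singlediseq}$ over the same base Parikh image), linearize the at most $2K$ fresh mismatches into an order $i=1,\dots,2K$, and assemble one accepting run of $\aut^{\combination}$ that climbs copy by copy, emitting a $\tagmis$-transition when a new mismatch symbol is introduced and a $\tagcop$-transition when a side reuses a mismatch already sampled in the same variable, setting $m_{P_k,s}$ and $c_i$ accordingly; this makes $\fParikh$, $\fConsist$, $\fCopies$, and each $\fPredSat{k}$ true.

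For the size bound, $\autcon$ has $\bigOOf{|\regconstr'|}$ states, hence $\aut^{\combination}$ has $\bigOOf{K\cdot|\regconstr'|}$ states and $\fParikh$ is polynomial in that; each $\fPredSat{k}$ adds a case split over the $\bigOOf{n_k m_k}$ pairs of variable occurrences of $P_k$, while $\fConsist$ and $\fCopies$ are polynomial in $K$, $|\vars|$, and $|\alphabet|$, so summing over $k$ yields the claimed $\mathrm{poly}(mn\cdot|\regconstr'|)$ bound (and for $\notcontains$-free $\psi$ the result is a quantifier-free formula of polynomial size, matching the stated complexity). The hard part will be the completeness direction: showing that one global ordering of the $\le 2K$ mismatches together with a single sharing pattern can simultaneously realize the independently obtained witnesses of all conjuncts --- in particular reconciling the $\forall\exists$ shape of the $\notcontains$ conjuncts, whose inner existential may pick a different run of $\aut^{\singlediseq}$ for each offset, with the copy-tag bookkeeping that is common to all predicates --- since this is the only step not obtained by a direct lift of an earlier lemma and is where the flat-language hypothesis is indispensable.
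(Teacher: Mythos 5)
Your proposal follows essentially the same route as the paper: a single $\epsilon$-concatenation $\autcon$, a tag automaton with $2K+1$ copies wired exactly as $\aut^{\multidiseq}$ with mismatch/copy/position tags, and the formula $\fParikh \land \fConsist \land \fCopies \land \bigwedge_k \fPredSat{k}$ with each $\fPredSat{k}$ taken from the corresponding predicate's section (rewritten over the $m_{D,s}$ variables, and the $\forall\offset\exists$-shaped $\notcontainsLIA$ for $\notcontains$ conjuncts, relying on flatness), which is precisely the construction the paper gives in the combination section before stating the theorem. Your added soundness/completeness sketch and size analysis fill in the argument along the same lines as the earlier single- and multi-disequality cases, and your flagged ``hard part'' (reconciling the per-offset runs of $\notcontains$ with the global copy-tag bookkeeping) is indeed the only point where the paper, too, offers no further detail.
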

\cbend

\vspace{-0.0mm}
\section{Decidability and Complexity}\label{sec:dec_complexity}
\vspace{-0.0mm}
This section covers theoretical results that follow from tag-automaton
constructions based on position predicates and subsequent reductions into LIA
described in previous sections. We will formulate our results as instances of
the following parametrized decision problem.

\begin{center}
\begin{tabular}{ll}
  \toprule
$\posregsat(\eqconstr, \regconstr, \intconstr, \diseqconstr)$
\\ \midrule
  \textbf{INPUT:}
  &
  $\bullet$ a~set of string variables~$\vars = \{x_1, x_2, \dots, x_k\}$,
  \\
  &
  $\bullet$ a conjunction of word equations $\eqconstr$,
  \\
  &
  $\bullet$ a conjunction of regular constraints $\regconstr = \big\{ x \in \langof{\aut_x} \mid x \in \vars \big\}$,
  \\
  &
  $\bullet$ a conjunction of length constraints $\intconstr$, and
  \\
  &
  $\bullet$ a conjunction of position constraints $\diseqconstr$.
  \\
  \textbf{QUESTION:}
  &
  \parbox[t]{9.8cm}{Is there an assignment $\sigma\colon \vars \to
\alphabet^*$ satisfying $\eqconstr \land \regconstr \land \intconstr \land \diseqconstr$?}\\
\bottomrule
\end{tabular}
\end{center}

In the following, we write $\regconstr$ to denote an~arbitrary conjunction of
regular constraints of the form $\regconstr = \bigwedge_{x \in \vars}  x \in
\langof{\aut_x}$ where $\aut_x$ is an NFA associated with the variable $x$
if not specified otherwise.


\cbstart
\begin{theorem}\label{thm:ptime}
    The time complexity of $\posregsat(\emptyset, \regconstr, \emptyset, \diseqconstr)$ with
    $\diseqconstr = P(x_1 \cdots x_n, $ $ y_1 \cdots y_m)$ for $P \in \{\neq, \notsuffix, \notprefix \}$ is
    in $\clP$, more concretely in $\bigOOf{nm \cdot |\alphabet|^3\cdot |\regconstr|^6}$.
\end{theorem}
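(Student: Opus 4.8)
The plan is to reduce $\posregsat(\emptyset,\regconstr,\emptyset,\diseqconstr)$ for a single such $\diseqconstr$ in polynomial time to \emph{$0$-reachability} in a one-counter automaton, which is solvable in polynomial time~\cite{AlurC11,melskireps1997}; the $\clNP$ route through existential Presburger is avoided precisely because, after a polynomial case split, only a \emph{single} linear equation on transition counts remains, which a counter can track directly. As in \cref{sec:single-gen-diseq}, the constraint $P(x_1\cdots x_n,y_1\cdots y_m)$ is satisfiable iff some assignment with $\sigma(z)\in\lang_z$ for all variables $z$ makes the two sides either (i)~have ``incompatible'' lengths (different for $\neq$, the left one strictly larger for $\notprefix$ and $\notsuffix$), or (ii)~have compatible lengths with a global symbol mismatch. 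I would handle the two alternatives by separate families of one-counter automata and answer \sat iff any of them has the required run.

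For alternative~(ii) I would start from the tag automaton $\aut^{\singlediseq}$ of \cref{sec:single-gen-diseq} (with $\bigOOf{|\regconstr|}$ states) and first enumerate the polynomially many guesses: the occurrence pair $(x_i,y_j)$ carrying the mismatch ($\le nm$ choices) and the two distinct mismatch symbols $a,b\in\alphabet$ ($\le|\alphabet|^2$ choices). For a fixed guess I would restrict $\aut^{\singlediseq}$ structurally so that its final states lie in the third copy, the $\tagmisone$-tagged transition taken carries $\tagmisoneof{a,x_i}$ and the $\tagmistwo$-tagged one carries $\tagmistwoof{b,y_j}$ (symmetrically if $x_i\succ y_j$, and the same-variable case uses the two disjuncts of $\formdiseqposij$). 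This restriction already enforces $\formdiseqsym^{\singlediseq}$ and the ``at least one mismatch'' condition, so the only arithmetic constraint of $\varphi^{\singlediseq}$ that survives is the single linear equation $\formdiseqposij$ over the transition counts of $\aut^{\singlediseq}$, whose coefficients (numbers of variable occurrences preceding the mismatch on each side) are bounded by $n+m$. I would then make every transition of $\aut^{\singlediseq}$ additionally update a counter by its coefficient in the ``left-minus-right'' form of $\formdiseqposij$ (unary updates of magnitude $\le n+m$, realised with $\bigOOf{n+m}$ auxiliary states per transition); an accepting run ending with counter~$0$ then corresponds exactly to a witnessing $\sigma$. Crucially, repeated variables need no special treatment: $\aut^{\singlediseq}$ reads each $\aut_z$ only once, and the $\preccurlyeq$-ordering of occurrences makes its length contribute to the counter with the right multiplicity. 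Alternative~(i) is encoded the same way on top of $\autcon$, with the counter tracking $\sum_i\numof{\taglenof{x_i}}-\sum_j\numof{\taglenof{y_j}}$ and acceptance asking this value to be nonzero (for $\neq$) or positive (for $\notprefix,\notsuffix$), both of which reduce to $0$-reachability.

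Two points need care. The counter genuinely takes negative values mid-run (a variable occurring only on the right decrements it before later increments), so I would either use a blind, i.e.\ $\integers$-valued, one-counter automaton (still polynomial-time to analyse) or reorder variable processing into an increment-only phase followed by a decrement-only phase, which keeps the counter in $\nat$ since the target is~$0$. And for $\notsuffix$ the mismatch position must be counted from the \emph{end} of both sides, so I would reuse the $\tagposthreeof{z}$-tags already present in $\aut^{\singlediseq}$ and replace $\formdiseqposij$ by $\formdiseqposij^{\mathrm{NS}}$ from \cref{sec:notprefix}; the asymptotics are unchanged. Altogether this runs $\bigOOf{nm\cdot|\alphabet|^2}$ polynomial-time $0$-reachability queries on one-counter automata of size polynomial in $|\alphabet|$ and $|\regconstr|$ (the cited algorithm being cubic in that size), and a careful accounting of the factors yields the stated bound $\bigOOf{nm\cdot|\alphabet|^3\cdot|\regconstr|^6}$. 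The main obstacle I anticipate is the argument that this polynomial-size case split is \emph{exhaustive}: one must check that fixing a single occurrence pair and symbol pair, together with the structural restriction of $\aut^{\singlediseq}$, captures every way the disequality (resp.\ $\notprefix$, $\notsuffix$) can be satisfied while leaving only one linear equation for the counter---so that no exponential enumeration of mismatch orderings, unlike in \cref{sec:diseqSystem}, is needed.
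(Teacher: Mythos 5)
Your proposal is correct and follows essentially the same route as the paper: a polynomial case split over the occurrence pair carrying the mismatch, position bookkeeping on an automaton built from the $\epsilon$-concatenation with per-transition coefficients bounded by $n+m$, expansion of these weighted updates into unit updates, and an appeal to polynomial-time $0$-reachability in one-counter automata~\cite{AlurC11,melskireps1997}. The only cosmetic differences are that the paper eliminates the length-violation case up front by padding both sides with a fresh variable ranging over a fresh padding symbol (so only the mismatch case remains and no nonzero/positive-reachability variant is needed), and it records the first mismatch symbol in the control state (a single factor $|\alphabet|$ in the automaton size) rather than enumerating the $|\alphabet|^2$ symbol pairs externally.
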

\cbend

\newcommand{\ca}[0]{\mathcal{C}}
\newcommand{\caFirst}[0]{\ca^{1}}
\newcommand{\caSecond}[0]{\ca^{2}}
\newcommand{\caThird}[0]{\ca^{3}}
\newcommand{\caCounter}[0]{\mathbf{c}}
\newcommand{\caCounterL}[0]{\caCounter_\mathrm{L}}
\newcommand{\caCounterR}[0]{\caCounter_\mathrm{R}}

\begin{proof}[Proof outline]
  We outline the proof for $P$ being a disequality; the other cases are similar.
  We construct a one-counter automaton~$C$ with
  a~counter~$\caCounter$ with updates limited to $\{-1, 0, +1\}$ such that 
  there is an accepting state reachable with
  $\caCounter = 0$ iff the input combination of regular constraints
  and~$P$ is satisfiable. We show that $C$ has a polynomial size to the input and 
  using the result of \cite[Lemma~11]{AlurC11} stating that $0$-reachability of a state in a~one-counter automaton can be decided
  in $\clP$, we get the theorem. The full proof is given in \cref{app:ptimeProof}.
\end{proof}

\begin{lemma}\label{lemma:nphard}
    $\posregsat(\emptyset, \regconstr, \emptyset, \diseqconstr)$ with
    $\diseqconstr = \bigwedge_{1 \le i \le K} (x_{i, 1} \cdots x_{i, n_i} \neq y_{i, 1} \cdots y_{i, m_i})$
    is $\clNP$-hard.
\end{lemma}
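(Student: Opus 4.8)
The plan is to reduce from \textsc{Graph 3-Colorability}, which is a classical \clNP-complete problem. Given an undirected graph $G = (V, E)$, I would introduce one string variable $x_v$ for each vertex $v \in V$, so that $\vars = \{x_v \mid v \in V\}$. Assuming $|\alphabet| \ge 3$, fix three distinct letters $c_1, c_2, c_3 \in \alphabet$; if $|\alphabet| = 2$, replace them by three distinct equal-length words such as $00, 01, 11$ (the argument below is unchanged), and if $|\alphabet| = 1$ the problem is trivial. Then build the instance $\posregsat(\emptyset, \regconstr, \emptyset, \diseqconstr)$ with $\regconstr = \bigwedge_{v \in V} x_v \in \{c_1, c_2, c_3\}$, each such language accepted by a constant-size NFA, and $\diseqconstr = \bigwedge_{\{u,v\} \in E} x_u \neq x_v$. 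The instance uses only disequalities between single variables, and it is computable in time linear in $|V| + |E|$.

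Correctness follows directly from the semantics in \cref{fig:semantics}. If $\chi\colon V \to \{c_1,c_2,c_3\}$ is a proper coloring, then $\sigma(x_v) := \chi(v)$ satisfies every regular constraint, since each $\chi(v)$ is a one-letter word in the corresponding language, and every disequality, since the endpoints of any edge receive distinct colors and hence distinct one-letter strings. Conversely, any assignment $\sigma$ satisfying the instance must, by $\regconstr$, send each $x_v$ to one of $c_1, c_2, c_3$; setting $\chi(v) := \sigma(x_v)$ gives a coloring, and every disequality $x_u \neq x_v$ enforces $\chi(u) \neq \chi(v)$, so $\chi$ is proper. Hence the instance is satisfiable iff $G$ is 3-colorable, which establishes the claim.

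I do not expect a genuine obstacle here; the only steps that need a word of care are the handling of alphabets with fewer than three symbols, dealt with by the fixed-length codeword encoding above, and the observation that, for variables each restricted to a set of equal-length words, $x_u \neq x_v$ expresses exactly ``distinct values'', which is immediate from \cref{fig:semantics}. A reduction from \textsc{3-SAT} is also possible but less economical, so I would present only the coloring reduction.
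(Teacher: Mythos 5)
Your proof is correct, but it takes a different route from the paper. The paper reduces from 3-SAT: each Boolean variable $x_i$ becomes a string variable $y_i$ with $\langof{\aut_{y_i}} = \{0,1\}$, and each clause becomes one disequality whose left-hand side concatenates the three variables of the clause and whose right-hand side is the unique falsifying bit pattern (e.g., $y_1 y_2 y_3 \neq 010$ for $x_1 \lor \neg x_2 \lor x_3$, with the literal handled via a fresh variable and a singleton regular constraint, as the preliminaries allow). Your reduction from \textsc{Graph 3-Colorability} instead uses one variable per vertex, a three-word regular language per variable, and one disequality per edge. Both reductions are linear-time and equally valid; what yours buys is a slightly sharper statement, namely that hardness already holds when every disequality relates two \emph{single} variables ($n_i = m_i = 1$, the shape of \cref{sec:simpleDiseq}), so the hardness genuinely comes from the conjunction of many disequalities rather than from concatenation within one side — which nicely complements \cref{thm:ptime}. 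What the paper's reduction buys is a minimal alphabet ($\{0,1\}$) and two-word languages without any codeword detour. One small remark: your aside that the $|\alphabet| = 1$ case is ``trivial'' is unnecessary (and not quite accurate as stated) — for \clNP-hardness it suffices that the reduction works for some fixed alphabet with at least two letters, which is also all the paper's proof uses — but this does not affect the correctness of your argument.
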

\begin{proof}
    By reduction from 3-SAT.
    Let $\varphi$ be an input 3-SAT formula.
    For each Boolean variable~$x_i$ in~$\varphi$, we create a~string
    variable~$y_i$ with $\autass(y_i)$ being a~DFA with 2 states accepting the
    language $\{0,1\}$.
    For each clause in~$\varphi$, we create a~new disequality such that, e.g.,
    for a~clause $(x_1 \lor \neg x_2 \lor x_3)$, we create the disequality $y_1
    y_2 y_3 \neq 010$.
    Then the system of disequalities is equisatisfiable to~$\varphi$.
\end{proof}

\begin{theorem}\label{thm:npcomplete}
    $\posregsat(\emptyset, \regconstr, \emptyset, \diseqconstr)$ with
    $\diseqconstr = \bigwedge_{1 \le i \le K} P_i(x_{i, 1} \cdots x_{i, n_i}, y_{i, 1} \cdots y_{i, m_i})$
    for $P_i \in \{\neq$, $\notsuffix, \notprefix, \strat, \notstrat \}$
    is $\clNP$-complete. 
\end{theorem}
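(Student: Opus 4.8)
The plan is to prove the two inclusions separately. Hardness is already settled by \cref{lemma:nphard}, since the system of disequalities built there is a special case of a conjunction $\bigwedge_{1 \le i \le K} P_i(\dots)$ in which every $P_i$ is $\neq$; thus $\posregsat(\emptyset, \regconstr, \emptyset, \diseqconstr)$ is $\clNP$-hard. What remains is to place the problem in $\clNP$.

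For the upper bound I would reduce, in polynomial time, to satisfiability of an existential Presburger formula. Because the instance has $\eqconstr = \emptyset$ and $\intconstr = \emptyset$, no stabilization is needed and $\regconstr' = \regconstr$, so I can invoke the construction of \cref{sec:combination} directly: build the single $\epsilon$-concatenation $\autcon$ of the NFAs $\aut_x$ of all variables occurring in $\diseqconstr$, construct the tag automaton with $2K+1$ copies of $\autcon$, and assemble $\varphi^{\combination} = \fParikh \land \fConsist \land \fCopies \land \bigwedge_{1 \le i \le K} \fPredSat{i}$. By the equisatisfiability result of \cref{sec:combination} (whose flat-language proviso is vacuous here, as $\diseqconstr$ contains no $\notcontains$ predicate), $\regconstr \land \diseqconstr$ is satisfiable iff $\varphi^{\combination}$ is, and $|\varphi^{\combination}|$ is polynomial in $mn \cdot |\regconstr|$ with $n = K$ and $m = \max_i \max(n_i, m_i)$; hence $\varphi^{\combination}$ can be produced in polynomial time.

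The decisive observation is that, because $P_i \in \{\neq, \notsuffix, \notprefix, \strat, \notstrat\}$ excludes $\notcontains$, the formula $\varphi^{\combination}$ has no quantifier alternation: $\fParikh$, $\fConsist$, $\fCopies$, and each $\fPredSat{i}$ are quantifier-free LIA formulae over the tag-count variables, the auxiliary variables $m_{D,s}$ and $c_i$, the integer variables occurring in $\strat$/$\notstrat$ atoms, and the transition-count variables introduced by the standard Parikh-image encoding (cf.\ \cref{app:parikh}, \cite{JankuT19}). Reading all of these as existentially quantified, deciding $\varphi^{\combination}$ is an instance of satisfiability of existential Presburger arithmetic, which is classically in $\clNP$; composing the polynomial-time construction with a nondeterministic polynomial-time procedure for existential Presburger then gives the $\clNP$ upper bound, and together with \cref{lemma:nphard} this yields $\clNP$-completeness. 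The step I expect to need the most care is confirming that the Parikh tag formula of the tag automaton is genuinely of polynomial size: this hinges on the automaton using only $2K+1$ copies of $\autcon$ rather than the $2^{\bigtheta{K \log K}}$ copies that a naive enumeration of all orderings of mismatches would require (cf.\ the discussion in \cref{sec:diseqSystem}), so the automaton---and hence its existential-Presburger Parikh encoding---stays polynomial; bounding $\fConsist$, $\fCopies$, and the $\fPredSat{i}$ afterwards is just routine counting of conjuncts over $D$, $s$, $a$, $i$, and variable occurrences.
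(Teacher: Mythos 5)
Your proposal is correct and follows essentially the same route as the paper: $\clNP$-hardness is inherited from \cref{lemma:nphard}, and $\clNP$-membership comes from the polynomial-size quantifier-free (existential) LIA formula produced by the construction of \cref{sec:combination}, whose satisfiability is in $\clNP$. Your additional remarks on the absence of quantifier alternation (no $\notcontains$) and on the $2K+1$-copy tag automaton being the source of the polynomial bound merely make explicit what the paper's proof leaves implicit.
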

\begin{proof}
    From \cref{lemma:nphard} we have that the problem is $\clNP$-hard.
    $\clNP$-membership follows from constructing an equisatisfiable quantifier-free
    LIA formula $\psi$ as described in~\cref{sec:combination} and
    observing that $\psi$ is of polynomial size.
    Satisfiability of quantifier-free LIA is in $\clNP$~\cite{qfLiaNP}.
\end{proof}

The following theorem states that position constraints with structurally
limited languages of variables occurring in $\notcontains$ predicates
can be decided in $\clNEXPTIME$.

\begin{theorem}\label{thm:notContainsComplexity}
    $\posregsat(\emptyset, \regconstr, \emptyset, \diseqconstr)$ with
    $\diseqconstr = \bigwedge_{1 \le i \le K} P_i(x_{i, 1} \cdots x_{i, n_i}, y_{i, 1} \cdots y_{i, m_i})$
    for $P_i \in \{\neq$, $\notsuffix, \notprefix, \notcontains, \strat, \notstrat \}$
    such that $\langof{\aut_x}$ of any variable $x$ that occurs in a~$\notcontains$
    predicate is flat can be decided in $\clNEXPTIME$.
\end{theorem}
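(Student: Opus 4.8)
The plan is to reduce $\posregsat(\emptyset, \regconstr, \emptyset, \diseqconstr)$ to satisfiability of a single (possibly quantified) LIA formula, bound the size of that formula, and then argue that checking its satisfiability is in \clNEXPTIME. First I would invoke the construction of \cref{sec:combination}: for the given system $\diseqconstr = \bigwedge_{1 \le i \le K} P_i(\cdots)$ with each $P_i \in \{\neq, \notsuffix, \notprefix, \notcontains, \strat, \notstrat\}$, we build a single $\epsilon$-concatenation $\autcon$ of all the NFAs $\aut_x$ for variables occurring in $\diseqconstr$, then the tag automaton with $2K+1$ copies of $\autcon$, and finally the formula $\varphi^{\combination}$, which by the theorem of \cref{sec:combination} is equisatisfiable to $\regconstr \land \diseqconstr$ (here $\intconstr = \emptyset$, so the LIA part is just $\varphi^{\combination}$). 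The soundness hypothesis of that theorem—that every variable occurring inside a $\notcontains$ predicate is assigned a flat language—is exactly the hypothesis we are given, so the reduction applies.

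Next I would analyse the shape and size of $\varphi^{\combination}$. It is a conjunction of $\parikhformtagof{\autcon\text{-based TA}}$, the consistency subformulae $\fConsist$, $\fCopies$, and the per-predicate subformulae $\fPredSat{i}$. By the size bound in the theorem of \cref{sec:combination}, $\varphi^{\combination}$ is of size polynomial in $mn\cdot|\regconstr'|$, where $n = K$ and $m$ is the maximum side length; since here $\regconstr' = \regconstr$, the whole formula has size polynomial in the input. The crucial point is the quantifier structure: every $\fPredSat{i}$ for $P_i \ne \notcontains$ is quantifier-free, while for each $\notcontains$ predicate the subformula $\notcontainsLIA$ (from \cref{sec:notcontains}) contributes a $\forall\kappa\,\exists\num_2$ block with quantifier-free body. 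So $\varphi^{\combination}$ lies in the $\exists^*\forall^*\exists^*$ fragment of Presburger arithmetic (a single top-level existential block over the shared Parikh/choice variables $\num_1$ and the $m_{D,s}, c_i$ variables, then, independently for each $\notcontains$ conjunct, a $\forall\kappa\exists\num_2$ block). The number of quantifier alternations is bounded by a constant (two), independent of the input.

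The complexity argument then proceeds as follows. I would guess the outermost existential witnesses: a satisfying assignment to the shared variables $\num_1$ and the auxiliary choice variables. By the standard small-model property for Presburger formulae, if the whole formula is satisfiable there is such a witness of bit-size at most exponential in the size of $\varphi^{\combination}$, hence exponential in the input—this is where the ``\clNEXPTIME'' (nondeterministic guess of exponentially-sized data) enters. After substituting this guessed assignment, what remains for each $\notcontains$ conjunct is a closed $\forall\kappa\exists\num_2$ Presburger sentence with numeric parameters plugged in; and the remaining purely-quantifier-free conjuncts are checked directly against the guess in polynomial time in the size of the (now exponentially large) numbers. Each residual $\forall\exists$ Presburger sentence can be decided in time exponential in its length and polynomial in the bit-length of its parameters (e.g.\ via quantifier elimination or the standard automata/geometric decision procedure for Presburger arithmetic), which again is within \clNEXPTIME overall. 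Composing: nondeterministically guess exponential-size witnesses, then run deterministic exponential-time checks, yielding a \clNEXPTIME procedure. I would also double-check that the flatness hypothesis is needed only for soundness of the $\notcontains$ reduction and plays no further role in the complexity bound, and that when $K=0$ or no $\notcontains$ occurs the formula collapses to quantifier-free LIA, consistent with the \clNP result of \cref{thm:npcomplete}.

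The main obstacle I anticipate is making the residual-sentence complexity estimate precise: I must be careful that after the nondeterministic guess, the $\forall\kappa\exists\num_2$ subformula is a genuine fixed-alternation Presburger sentence whose parameters are the guessed exponential-size integers, and then cite (or re-derive) the correct bound—namely that deciding a Presburger sentence of length $L$ with parameters of bit-size $b$ is doable in time $2^{\mathrm{poly}(L)}\cdot\mathrm{poly}(b)$—so that substituting $L = \mathrm{poly}(\text{input})$ and $b = 2^{\mathrm{poly}(\text{input})}$ still lands in \clNEXPTIME. An alternative, cleaner route that avoids separating the guess from the decision is to observe that the full $\exists^*\forall^*\exists^*$ Presburger sentence $\varphi^{\combination}$ has \emph{constantly many} quantifier alternations and size polynomial in the input, and then invoke the classical result that truth of a Presburger sentence with a bounded number of alternations and formula length $L$ is decidable in time $2^{2^{O(L)}}$ in general but drops to $2^{O(L)}$ nondeterministic time when the alternation depth is fixed; I would state the precise citation and plug in $L = \mathrm{poly}(\text{input})$ to conclude membership in \clNEXPTIME.
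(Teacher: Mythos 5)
Your ``alternative, cleaner route'' is essentially the paper's proof. The paper's argument is exactly: by the theorem of \cref{sec:combination}, $\varphi^{\combination}$ is equisatisfiable to $\regconstr \land \diseqconstr$ and of polynomial size; after prenexing (the $\forall\offset\,\exists\num_2$ blocks contributed by the individual $\notcontains$ conjuncts merge into a single $\forall$ block followed by a single $\exists$ block, since they occur positively in a conjunction), the formula lies in the $\exists\forall\exists$ fragment of LIA, and membership in $\clNEXPTIME$ is then obtained by citing the fixed-alternation complexity result of Haase~\cite{Haase14} ($\Sigma^{\clEXP}_1 = \clNEXPTIME$). Your remarks that flatness is needed only for soundness of the $\notcontains$ encoding and that the $\notcontains$-free case collapses to the quantifier-free/\clNP{} situation also match the paper. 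So, modulo stating that citation precisely, your second route coincides with the paper's argument.

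The gap is in your primary guess-and-check route. The step claiming that a residual $\forall\offset\,\exists\num_2$ Presburger sentence of length $L$ can be decided \emph{deterministically} in time $2^{\mathrm{poly}(L)}$ (times a polynomial in the bit-size of the plugged-in parameters) is unsupported: the $\forall\exists$ fragment of LIA is complete for the complement of $\clNEXPTIME$ (this follows from the same results of Haase you would cite), so a deterministic single-exponential decision procedure for it is not known and would entail a collapse. Brute-forcing the universal quantifier does not rescue the argument either: after guessing values of $\num_1$ with exponentially many bits, $\lenDiff{\num_1}$ can be doubly exponential in magnitude, so enumerating the offsets $\offset$ is also out of reach in exponential time, and eliminating the inner $\exists\num_2$ block by quantifier elimination incurs a further exponential blow-up. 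Hence the decomposition ``guess the outer block, then decide the residual $\forall\exists$ sentences'' does not obviously stay within $\clNEXPTIME$ without exploiting additional structure of those residual sentences. The safe write-up is the one the paper uses (and that you list as the alternative): keep the whole prenexed $\exists\forall\exists$ sentence intact, note its polynomial size, and invoke the fixed-alternation bound of~\cite{Haase14} as a black box.
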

\begin{proof}
    We can observe that, in the presence of $\notcontains$ predicates,
    the resulting formula $\psi$ constructed as described in \cref{sec:combination}
    falls into the $\exists \forall \exists$-fragment of LIA (after transforming into the prenex normal form).
    As the number of quantifier alternations is $2$, we obtain that
    $\psi$ is decidable in $\Sigma^{\clEXP}_1 = \clNEXPTIME$~\cite{Haase14},
    where $\Sigma^{\clEXP}_1$ is the first level of the weak exponential hierarchy.
\end{proof}

Contrary to deciding a~single disequality (which is in \clP), deciding
a~single $\notcontains$ is already \clNP{}-hard, as stated by the following theorem
(proven in \cref{app:npHardess}).

\begin{theorem}\label{thm:notContainsHardness}
    $\posregsat(\emptyset, \regconstr, \emptyset, \diseqconstr)$ with
    $\diseqconstr = \notcontains(x_1 \ldots x_n, y_1 \ldots y_m)$
    is $\clNP$-hard.
\end{theorem}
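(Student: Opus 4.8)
The plan is to establish \clNP-hardness by a polynomial reduction from 3-SAT, folding the conjunction of disequalities used in \cref{lemma:nphard} into a single $\notcontains$ atom. I would work over the alphabet $\{0,1,d\}$, where $d$ is a delimiter. Given a 3-CNF formula $\varphi$ over propositional variables $p_1,\dots,p_k$ with clauses $C_1,\dots,C_\ell$ (w.l.o.g.\ each clause over three distinct variables), I would introduce for every $p_i$ a string variable $z_i$ with $\autass(z_i)$ a two-state DFA for $\{0,1\}$, so that the value of $z_i$ records a truth value of $p_i$, together with fresh variables $x_0,x_1,x_d$ whose languages are the singletons $\{0\},\{1\},\{d\}$ (these merely stand for string literals, cf.\ \cref{sec:preliminaries}). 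The left argument of the predicate is $u \defequal x_d\, z_1 z_2 \cdots z_k\, x_d$. For a clause $C_j$, letting $a<b<c$ be its three variables and $\beta_a,\beta_b,\beta_c\in\{0,1\}$ the unique values of $p_a,p_b,p_c$ that falsify $C_j$, I would build the block $B_j \defequal x_d\, z_1 \cdots z_{a-1}\, x_{\beta_a}\, z_{a+1}\cdots z_{b-1}\, x_{\beta_b}\, z_{b+1}\cdots z_{c-1}\, x_{\beta_c}\, z_{c+1}\cdots z_k\, x_d$, i.e.\ a copy of $u$ with the three clause positions pinned to the falsifying constants, and take the right argument to be $v \defequal B_1 B_2 \cdots B_\ell$. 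In the resulting instance $\eqconstr$ and $\intconstr$ are empty, $\diseqconstr$ is the single atom $\notcontains(u,v)$, and the construction is clearly polynomial in $|\varphi|$.

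For correctness, I would show that for every assignment $\sigma$ of the $z_i$ (which is forced to satisfy $\sigma(z_i)\in\{0,1\}$, hence encodes a Boolean assignment, still called $\sigma$) we have $\sigma \models \notcontains(u,v)$ iff $\sigma \models \varphi$. One direction of the bookkeeping is immediate: $\sigma(B_j)=\sigma(u)$ precisely when $\sigma$ gives $p_a,p_b,p_c$ their falsifying values, i.e.\ precisely when $\sigma$ falsifies $C_j$. The part needing care --- and the main obstacle --- is to show that the only factors of $\sigma(v)$ equal to $\sigma(u)$ are the blocks $\sigma(B_j)$, so that no spurious shifted match can occur. This is a position-counting argument on the delimiters: $\sigma(v)\in (d\,\{0,1\}^k\,d)^\ell$, so the symbol $d$ occurs in $\sigma(v)$ exactly at positions congruent to $0$ or $k+1$ modulo $k+2$; since $\sigma(u)$ has length $k+2$, begins and ends with $d$, and has its $k$ interior symbols in $\{0,1\}$, any occurrence of $\sigma(u)$ inside $\sigma(v)$ must begin at a position $\equiv 0 \pmod{k+2}$ --- a start $\equiv k+1$ would force the symbol immediately following the initial $d$ to be another $d$, or the last symbol not to be $d$ --- and such a start aligns $\sigma(u)$ with exactly one block $\sigma(B_j)$. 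Hence $\sigma(u)$ is a factor of $\sigma(v)$ iff $\sigma$ falsifies some clause, i.e.\ $\sigma \not\models \notcontains(u,v)$ iff $\sigma \not\models \varphi$.

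Putting the two parts together, the constructed $\posregsat$ instance is satisfiable iff there is a Boolean assignment satisfying $\varphi$, so 3-SAT reduces to it in polynomial time and the problem is \clNP-hard. The only delicate point is the delimiter position-counting lemma that excludes shifted occurrences of $\sigma(u)$ inside $\sigma(v)$; everything else is routine. I would also note in passing that all languages used above are finite, hence flat, so the hardness already holds within the flat fragment, although the statement does not require it. The full argument is deferred to \cref{app:npHardess}.
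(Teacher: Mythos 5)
Your proof is correct, and it follows the same high-level route as the paper---a polynomial reduction from 3-SAT to a single $\notcontains$ over string variables constrained to $\{0,1\}$, with a fresh delimiter symbol controlling alignments---but the gadget is genuinely different. The paper keeps the first (contained) argument a \emph{constant} word $0000011$ and pushes all Boolean structure into the second argument: it uses two string variables $s_x,\overline{s_x}$ per propositional variable, one chunk $\mathrm{Enc}(C_j)0011$ per clause whose match with the pattern means the clause is falsified, and an extra consistency gadget forcing $s_x\neq\overline{s_x}$; spurious alignments are excluded for free because every separator-free chunk of the second argument has exactly the length of the constant pattern. You instead place the Boolean variables in the first argument $u = d\,z_1\cdots z_k\,d$ and build one block per clause in the second argument by pinning that clause's three positions to its unique falsifying constants, which eliminates both the duplicated literal variables and the consistency gadget; the price is that excluding shifted occurrences is no longer automatic and requires your modular position-counting argument on the delimiter, which does go through: a start position $\equiv k+1 \pmod{k+2}$ would put a second delimiter immediately after the initial one while $u[1]\in\{0,1\}$, and any start not divisible by $k+2$ misaligns the initial delimiter, so matches can only occur block-aligned, where equality holds exactly when the clause is falsified. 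Both reductions are polynomial and yield the claimed \clNP-hardness, and your closing remark that all languages used are finite, hence flat, applies equally to the paper's construction, so in both cases hardness already holds in the flat fragment used for \cref{thm:notContainsComplexity}.
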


Finally, we obtain the decidability of the whole fragment considered in the
paper for chain-free word equations.

\begin{theorem}\label{thm:decidability}
    $\posregsat(\eqconstr, \regconstr, \intconstr, \diseqconstr)$ with
    $\eqconstr$ being chain-free~\cite{ChainFree},
    $\diseqconstr = \bigwedge_{1 \le i \le K} P_i(x_{i, 1} \cdots x_{i, n_i}, $ $ y_{i, 1} \cdots y_{i, m_i})$
    for $P_i \in \{\neq, \notsuffix, \notprefix, \notcontains, \strat, \notstrat \}$
    and
    $\regconstr = \bigwedge_{x \in \vars} x \in \langof{\aut_x}$
    such that $\langof{\aut_x}$ of any variable~$x$ that occurs in a~$\notcontains$
    predicate is flat
    is decidable. 
\end{theorem}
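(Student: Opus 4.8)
The plan is to follow the pipeline sketched in the overview: first eliminate the word equations via the stabilization-based monadic decomposition of~\cite{ChenCHHLS23}, then solve each resulting Monadic-Position instance with the LIA reduction of~\cref{sec:combination}, and finally invoke the decidability of the weak exponential hierarchy~\cite{Haase14}. Concretely, I would run the stabilization procedure of~\cite{ChenCHHLS23} on $\eqconstr \land \regconstr \land \intconstr$. Because $\eqconstr$ is chain-free~\cite{ChainFree}, the procedure terminates and yields a finite disjunction $\bigvee_k (\regconstr'_k \land \intconstr'_k)$, each disjunct carrying a substitution $\subst_k$ that maps the original string variables to concatenations of fresh variables occurring in $\regconstr'_k$, such that (i)~$\eqconstr \land \regconstr \land \intconstr$ is satisfiable iff some disjunct is, and (ii)~each disjunct is a monadic decomposition, i.e., every assignment satisfying $\regconstr'_k$ induces through $\subst_k$ a solution of $\eqconstr \land \regconstr$. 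Applying $\subst_k$ to $\diseqconstr$ turns each disjunct into a \problem instance $\regconstr'_k \land \intconstr'_k \land \subst_k(\diseqconstr)$, whose position part is again a conjunction of predicates from $\{\neq, \notsuffix, \notprefix, \notcontains, \strat, \notstrat\}$, now over the fresh variables; the original formula is satisfiable iff one of these finitely many \problem instances is.

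For each \problem instance I would apply the construction of~\cref{sec:combination}, obtaining an equisatisfiable, polynomial-size LIA formula $\intconstr'_k \land \varphi^{\combination}_k$. When $\notcontains$ predicates are present, $\varphi^{\combination}_k$ in prenex form lies in the $\exists\forall\exists$ fragment of LIA --- the single quantifier alternation comes solely from the universal offset quantifier of~\cref{sec:notcontains}, and prepending $\intconstr'_k$ adds only leading existentials. Such formulae are decidable, being in $\Sigma^{\clEXP}_1 = \clNEXPTIME$~\cite{Haase14}. Since there are finitely many disjuncts and each satisfiability query is decidable, the whole problem is decidable.

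The step I expect to be the main obstacle is verifying that the soundness side-condition of~\cref{sec:combination} survives the monadic-decomposition phase: the fresh variables that $\subst_k$ places inside $\notcontains$ predicates must be constrained by $\regconstr'_k$ to \emph{flat} languages. I would derive this from two closure properties of flat automata. First, a sub-automaton obtained by restricting the initial and final states and deleting states is flat, since its runs form a subset of the original's. Second, the synchronized product $\aut_1 \product \aut_2$ of two flat automata is flat: if two product runs $\run, \run'$ satisfy $\parikhofrun{\run} = \parikhofrun{\run'}$ then, projecting to each component and summing transition multiplicities, the induced transition multisets in $\aut_1$ and in $\aut_2$ coincide, so flatness forces the projections to agree componentwise, and a product run is uniquely determined by its two projections. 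Every fresh-variable language created when noodlifying an equation is the language of a sub-automaton of a product of (concatenations of) the automata of the variables appearing in that equation, so flatness propagates to all fresh variables descended from a $\notcontains$-variable as long as that variable is only ever refined against flat languages. This is guaranteed by the hypothesis of the theorem together with the structure of the chain-free stabilization --- possibly after a harmless preprocessing step that renames each $\notcontains$-variable apart so that it is refined only against a copy of itself --- and pinning down this flatness-preservation claim precisely is the delicate part, whereas the remaining steps are just an assembly of results already established in the paper.
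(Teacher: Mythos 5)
Your proposal follows essentially the same route as the paper's own proof: run the chain-free stabilization of~\cite{ChenCHHLS23} to obtain a monadic decomposition with a substitution map, push the substitution into the position constraints, apply the LIA reduction of \cref{sec:combination}, and conclude via decidability of the resulting (quantified) LIA formulae. The only difference is that you sketch an argument for why noodlification preserves flatness (sub-automata and synchronized products of flat automata are flat), a point the paper merely asserts in passing, so your treatment of that side-condition is, if anything, more detailed than the paper's.
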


\begin{proof}
    As $\eqconstr$ is chain-free, we start by solving only $\eqconstr \land
    \regconstr \land \intconstr$ using the approach described
    in~\cite{ChenCHHLS23}, obtaining a new set of variables $\vars'$ along with
    a length constraint $\intconstr'$ (extension of $\intconstr$ with equalities 
    relating lengths of the original variables and the variables from $\vars'$), 
    a~monadic decomposition $\regconstr' = \bigwedge_{x' \in \vars'} x' \in
    \langof{\aut_{x'}}$ and a~length constraint (we note that the \emph{noodlification} procedure
    in~\cite{ChenCHHLS23} preserves flatness of languages), and a substitution
    map $\sigma\colon \vars \rightarrow (\vars')^*$ mapping original variables
    to (potentially concatenations of) new variables.
    Applying $\sigma$ to $\diseqconstr$, we obtain a~conjunction $\diseqconstr'$
    of new position predicates. We are left to solve a new system
    $\regconstr' \land \intconstr' \land \diseqconstr'$, for which we can construct an
    equisatisfiable LIA formula using the techiniques presented in this work.
    Therefore, we obtain an equisatisfiable formula in a decidable theory, concluding the proof.
\end{proof}

\vspace{-0.0mm}
\section{Experimental Evaluation}\label{sec:label}
\vspace{-0.0mm}

\newcommand{\figScatter}[0]{
\begin{figure}[t!]%
\centering
\begin{subfigure}{0.3\textwidth}
  \centering
  \includegraphics[width=\linewidth]{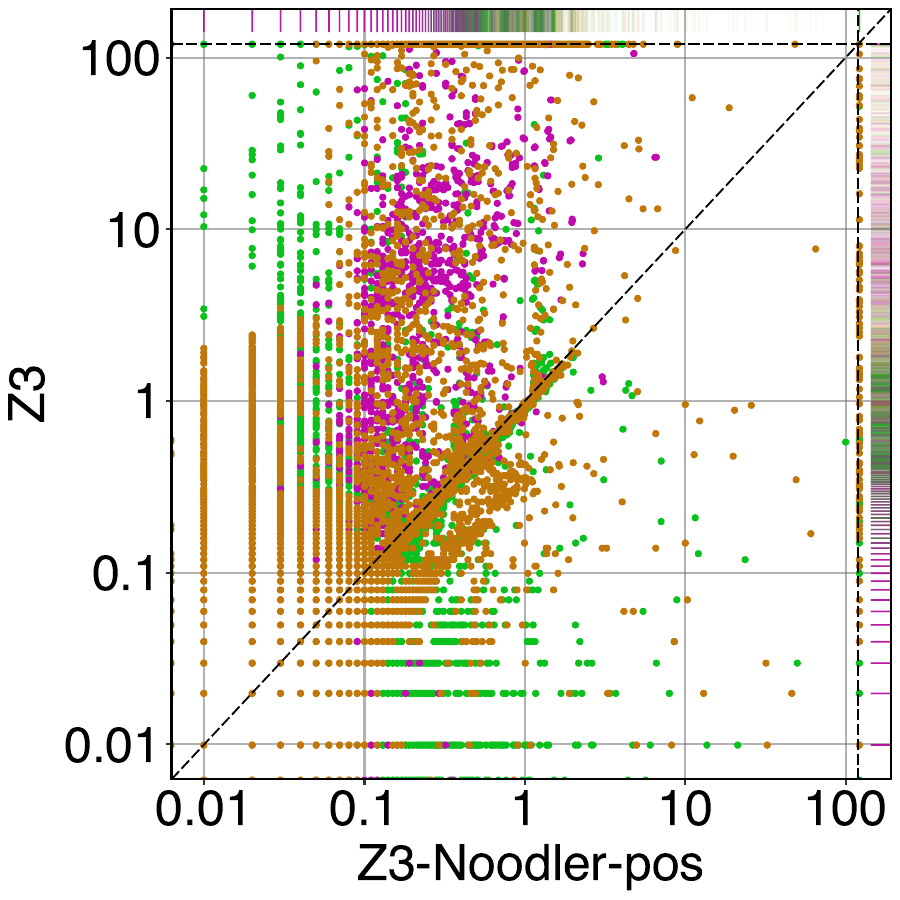}
  \vspace{-6mm}
  \caption{\footnotesize \ziiinoodlerpos vs. \ziii}
  \label{fig:z3}
\end{subfigure}
~
\begin{subfigure}{0.3\textwidth}
    \centering
    \includegraphics[width=\linewidth]{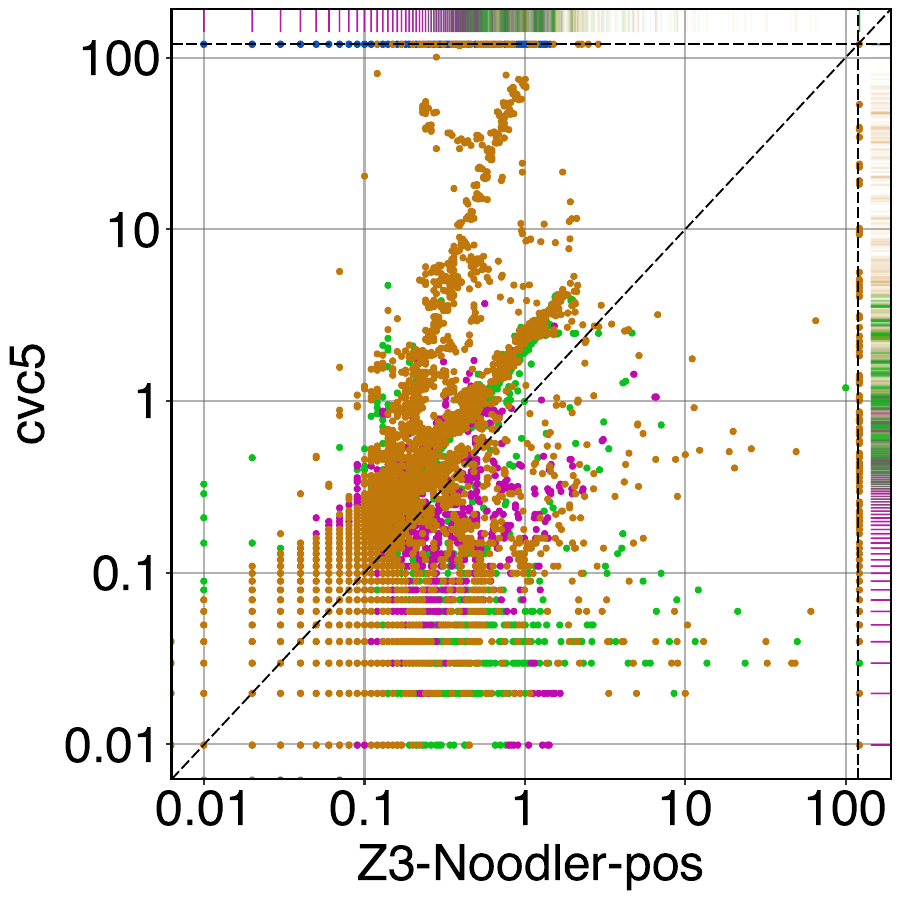}
    \vspace{-6mm}
    \caption{\footnotesize \ziiinoodlerpos vs. \cvcv}
    \label{fig:cvc5}
\end{subfigure}
~
\begin{subfigure}{0.3\textwidth}
    \centering
    \includegraphics[width=\linewidth]{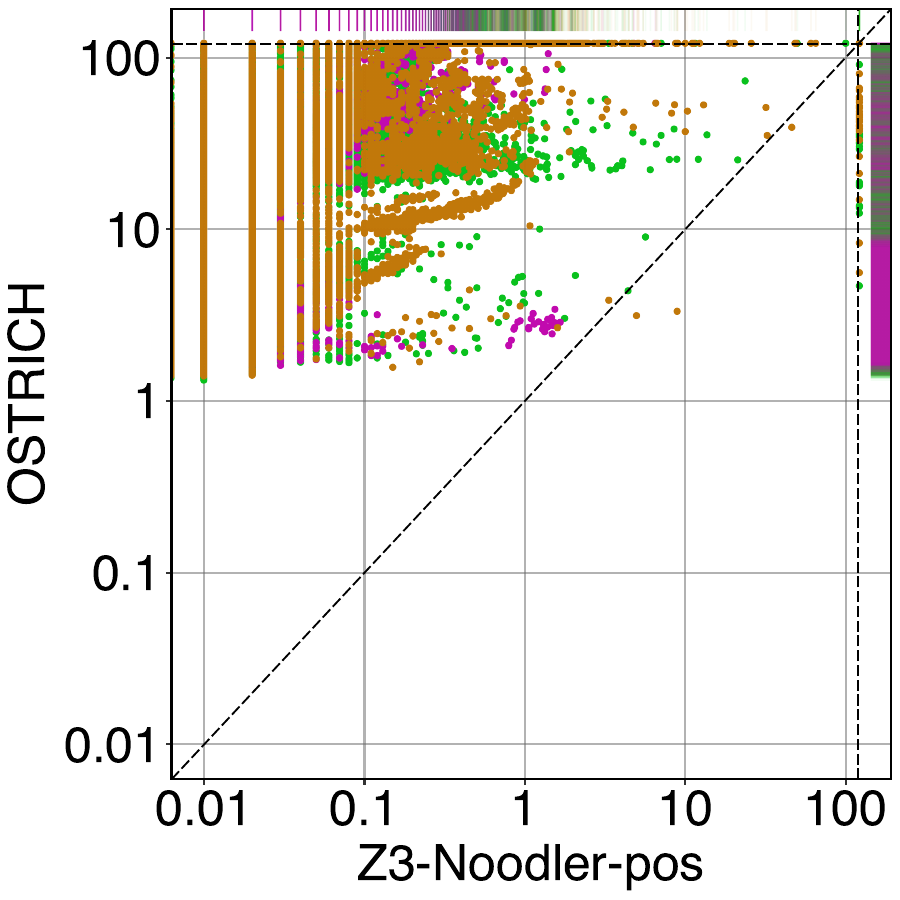}
    \vspace{-6mm}
    \caption{\footnotesize \ziiinoodlerpos\,vs.\,\ostrich}
    \label{fig:ostrich}
\end{subfigure}

\vspace*{-3mm}
\caption{
    Comparison of \ziiinoodlerpos with \ziiinoodler, \cvcv, \ziii, and \ostrich.
    Times are in seconds, axes are logarithmic.
    Dashed lines represent timeouts (120\,s).
    Colours distinguish benchmarks:
    \RGBcircle{192,106,29}\,\biopython,
    \RGBcircle{15,195,35}\,\django, 
    \RGBcircle{195,16,176}\,\thefuck, and 
    \RGBcircle{14,77,171}\,\poshard.
}
\label{fig:scatter}
\end{figure}
}

\newcommand{\figCactus}[0]{
\begin{figure}[t!]%
\centering
\includegraphics[width=\linewidth]{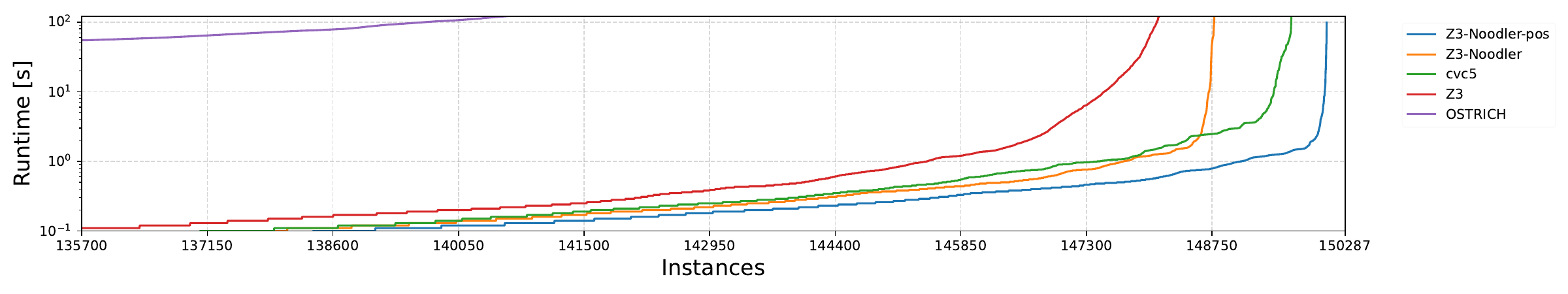}
\vspace*{-8mm}
\caption{
    Cactus plot comparing sorted runtimes of \ziiinoodlerpos with other tools.
    The $y$-axis denotes the time in seconds (the axis is logarithmic), the
    $x$-axis denotes the number of solved formulae ordered by their runtime (we
    show only the $\sim$14,500 hardest formulae for each solver).
}
\label{fig:cactus}
\end{figure}
}

We implemented the proposed decision procedure in the \ziiinoodler solver version 1.3~\cite{ChenCHHLS24}.
We call the modified version \ziiinoodlerpos. 
Since we need a~monadic decomposition for dealing with position constraints, the
proposed decision procedure was integrated to the stabilization-based
procedure~\cite{ChenCHHLS23}, which computes the monadic decomposition.
For an input formula, we separate the position constraints, 
apply the stabilization-based procedure on the remaining constraints, and for
each of the possible obtained monadic decompositions (there might be more
depending on the case-splits within the stabilization), we add the LIA formula
describing satisfiability of those position constraints to the LIA formula
provided by the stabilization-based procedure (this LIA formula may contain
additional subformulae speaking, e.g., about lengths of the solution or
string-integer conversions~\cite{noodler-conv}).
For satisfiability checking of quantifier-free LIA formulae, \ziiinoodler uses
\ziii's internal LIA solver based on the Simplex method extended with
a~branch-and-cut strategy to obtain integer solutions~\cite{ziii}.
For universally quantified formulae (those obtained by the reduction of
$\notcontains$), we use an additional \ziii's internal solver based on the
\emph{model-based quantifier instantiation}~\cite{mbqi} approach.

For representing the tag automaton structure, we use the \mata
library~\cite{ChocholatyFHHHLS24}.
A tag automaton is represented 
as a nondeterministic finite automaton with additional mapping of \mata's integer symbols to sets of tags.
The LIA formula is generated in \ziiinoodler's internal format, which is then converted 
to \ziii's formula representation. Except of the proposed procedure, \ziiinoodlerpos implements 
heuristics for simple cases of the $\notcontains$ predicate. In particular, if $|u| < |v|$, then $\notcontains(u,v)$ is 
satisfied. Therefore, if we get a $\notcontains$ with not-flat languages, we apply this 
underapproximation. For the case when the language of $v$ is finite, we enumerate words from the 
language and check them separately instead of generating complex quantified formulae.


\newcommand{
\begin{table}[t]
  \caption{
  Results of experiments on all benchmarks. For each benchmark we give the number of 
  cases the tool runs out of resources (column ``\oob'', the number of timeouts and memory outs), the number of 
  unknowns (column ``\unknown''), and total time in seconds on finished instances (column ``\timet'')
  \begin{changebar}
  and on all instances (i.e., when taking the time 120\,s for \oob/\unknown instances; column ``\timeall'').
  The best \timeall results are \textbf{bold}.
  \end{changebar}
  }
  \vspace*{-4mm}
  \label{tab:solved}
  \resizebox{\linewidth}{!}{%
  \begin{changebar}
  \begin{booktabs}{colspec={lQ[r,gray!30]Q[r,gray!30]Q[r,gray!30]Q[r,gray!30]rrrrQ[r,gray!30]Q[r,gray!30]Q[r,gray!30]Q[r,gray!30]rrrrQ[r,gray!30]Q[r,gray!30]Q[r,gray!30]Q[r,gray!30]},rowsep=0pt,row{4}={yellow},cell{1,2}{2,6,10,14,18}={c=4}{c}}
    \toprule
    &  \biopython & & & & \django & & & & \thefuck & & & & \poshard & && & All & & & \\
    &  (77,222)     && & & (52,643)   & & & & (19,872)    & && & (550)    & && & (150,287) & & & \\
    \cmidrule[lr]{2-5}\cmidrule[lr]{6-9}\cmidrule[lr]{10-13}\cmidrule[lr]{14-17}\cmidrule[lr]{18-21}
    & \oob & \unknown & \timet & \timeall & \oob & \unknown & \timet & \timeall& \oob & \unknown & \timet & \timeall & \oob & \unknown & \timet & \timeall & \oob & \unknown & \timet & \timeall\\
    \midrule
    \ziiinoodlerpos & 171    & 0     & 3,490     &    24,010 & 39     & 0      & 3,325      &     8,005 & 0     & 0       & 665        &  \bf   665 & 0     & 0       & 124    & \bf   124 & 210    & 0       & 7,604     & \bf   32,804 \\
    \ziiinoodler    & 171    & 336   & 3,545     &    64,385 & 37     & 108    & 3,473      &    20,873 & 1     & 375     & 637        &  45,757 & 234   & 246     & 1,912   & 59,512 & 443    & 1,065   & 9,567     &   190,527 \\
    \cvcv           & 69     & 0     & 12,834    &  \bf 21,114 & 0      & 0      & 4,515      &   \bf  4,515 & 0     & 0       & 690        &     690 & 550   & 0       & --      & 66,000 & 619    & 0       & 18,039    &    92,319 \\
    \ziii           & 1,047  & 0     & 15,661    &   141,301 & 502    & 0      & 7,501      &    67,741 & 47    & 0       & 9,457      &  15,097 & 550   & 0       & --      & 66,000 & 2,146  & 0       & 32,619    &   290,139 \\
    \ostrich        & 2,986  & 0     & 749,986   & 1,108,306 & 4,404  & 0      & 979,326    & 1,507,806 & 967   & 0       & 120,152    & 236,192 & 550   & 0       & --      & 66,000 & 8,907  & 0       & 1,849,464 & 2,918,304 \\
    \bottomrule
    \end{booktabs}
  \end{changebar}
  }
\end{table}
}[0]{
\begin{table}[t]
  \caption{
  Results of experiments on all benchmarks. For each benchmark we give the number of 
  cases the tool runs out of resources (column ``\oob'', the number of timeouts and memory outs), the number of 
  unknowns (column ``\unknown''), and total time in seconds on finished instances (column ``\timet'')
  \begin{changebar}
  and on all instances (i.e., when taking the time 120\,s for \oob/\unknown instances; column ``\timeall'').
  The best \timeall results are \textbf{bold}.
  \end{changebar}
  }
  \vspace*{-4mm}
  \label{tab:solved}
  \resizebox{\linewidth}{!}{%
  \begin{changebar}
  \begin{booktabs}{colspec={lQ[r,gray!30]Q[r,gray!30]Q[r,gray!30]Q[r,gray!30]rrrrQ[r,gray!30]Q[r,gray!30]Q[r,gray!30]Q[r,gray!30]rrrrQ[r,gray!30]Q[r,gray!30]Q[r,gray!30]Q[r,gray!30]},rowsep=0pt,row{4}={yellow},cell{1,2}{2,6,10,14,18}={c=4}{c}}
    \toprule
    &  \biopython & & & & \django & & & & \thefuck & & & & \poshard & && & All & & & \\
    &  (77,222)     && & & (52,643)   & & & & (19,872)    & && & (550)    & && & (150,287) & & & \\
    \cmidrule[lr]{2-5}\cmidrule[lr]{6-9}\cmidrule[lr]{10-13}\cmidrule[lr]{14-17}\cmidrule[lr]{18-21}
    & \oob & \unknown & \timet & \timeall & \oob & \unknown & \timet & \timeall& \oob & \unknown & \timet & \timeall & \oob & \unknown & \timet & \timeall & \oob & \unknown & \timet & \timeall\\
    \midrule
    \ziiinoodlerpos & 171    & 0     & 3,490     &    24,010 & 39     & 0      & 3,325      &     8,005 & 0     & 0       & 665        &  \bf   665 & 0     & 0       & 124    & \bf   124 & 210    & 0       & 7,604     & \bf   32,804 \\
    \ziiinoodler    & 171    & 336   & 3,545     &    64,385 & 37     & 108    & 3,473      &    20,873 & 1     & 375     & 637        &  45,757 & 234   & 246     & 1,912   & 59,512 & 443    & 1,065   & 9,567     &   190,527 \\
    \cvcv           & 69     & 0     & 12,834    &  \bf 21,114 & 0      & 0      & 4,515      &   \bf  4,515 & 0     & 0       & 690        &     690 & 550   & 0       & --      & 66,000 & 619    & 0       & 18,039    &    92,319 \\
    \ziii           & 1,047  & 0     & 15,661    &   141,301 & 502    & 0      & 7,501      &    67,741 & 47    & 0       & 9,457      &  15,097 & 550   & 0       & --      & 66,000 & 2,146  & 0       & 32,619    &   290,139 \\
    \ostrich        & 2,986  & 0     & 749,986   & 1,108,306 & 4,404  & 0      & 979,326    & 1,507,806 & 967   & 0       & 120,152    & 236,192 & 550   & 0       & --      & 66,000 & 8,907  & 0       & 1,849,464 & 2,918,304 \\
    \bottomrule
    \end{booktabs}
  \end{changebar}
  }
\end{table}
}

\begin{table}[t]
  \caption{
  Results of experiments on all benchmarks. For each benchmark we give the number of 
  cases the tool runs out of resources (column ``\oob'', the number of timeouts and memory outs), the number of 
  unknowns (column ``\unknown''), and total time in seconds on finished instances (column ``\timet'')
  \begin{changebar}
  and on all instances (i.e., when taking the time 120\,s for \oob/\unknown instances; column ``\timeall'').
  The best \timeall results are \textbf{bold}.
  \end{changebar}
  }
  \vspace*{-4mm}
  \label{tab:solved}
  \resizebox{\linewidth}{!}{%
  \begin{changebar}
  \end{changebar}
  }
\end{table}

\vspace{-0.0mm}
\subsection{Experimental Settings}
\vspace{-0.0mm}
%
We evaluated \ziiinoodlerpos on benchmarks containing heavy position constraints. 
We collected 4 benchmark sets (the number of formulae is in parentheses): 
\begin{inparaenum}[(i)]
  \item \biopython (77,222) obtained by a~symbolic execution of Python tools for bioinformatics~\cite{Notsubstring},
  \item \django (52,643) from a~symbolic execution of the Django web application framework~\cite{Notsubstring}, 
    \cbstart
  \item \thefuck (19,872) from a~symbolic execution of a tool correcting
    command mistakes~\cite{Notsubstring}\footnote{The three benchmark
    sets \biopython, \django, and \thefuck are from~\cite{Notsubstring}, where
    they were obtained by running the symbolic executor PyCT~\cite{PyCT} on the
    respective projects and keeping formulae that contained at least one
    position string constraint or a~constraint that is naturally translated to
    a~position constraint, such as \texttt{indexof}.}, and 
  \item \poshard (550) containing difficult hand-crafted formulae with
    disequalities and $\notcontains$ predicates.\footnote{These are simple
    formulae inspired by the problem of testing primitiveness of a word. They
    contain one $\notcontains$ or $\neq$ predicate over concatenations of
    string variables (with possible repetitions, e.g., $x y z \neq x x y$)
    constrained by simple regular languages (e.g., $a^*$ or $(abc)^*$). Despite
    their apparent simplicity, a solution cannot be easily found by
    systematically trying different assignments, which seems to be the reason
    why these formulae are unsolvable by state-of-the-art solvers.}
    \cbend
\end{inparaenum}
Altogether, we collected 150,287 formulae for evaluation.
\cbstart
We note that these formulae (in particular the first three sets) are \emph{from
the wild}, i.e., they are not of the form $\regconstr' \land \intconstr \land
\diseqconstr$, which we consider in \cref{sec:diseq,sec:other_pos}.
Instead, they have a~more general Boolean structure with word equations
and other constraints (not even necessarily chain-free), which are (in our
case) handled and transformed into the monadic decomposition and our input form
by the stabilization-based procedure of \ziiinoodler.
Moreover, a~single input formula might cause multiple calls to the
string solver with formulae from different fragments.
Since the main goal of this evaluation is to show the improvement that our
contribution brings to automata-based techniques being able to handle position
constraints,
we excluded benchmarks from \smtlib~\cite{SMTLIB}, where
the number of position-heavy constraints is small.
\cbend

We compared \ziiinoodlerpos with 
\ziiinoodler~(version~1.3)~\cite{ChenCHHLS24}, \cvcv~(version~1.2.0)~\cite{cvc5}, \ziii~(version~4.13.3)~\cite{ziii}, and 
\ostrich~(version~1.4)~\cite{ChenHLRW19}. We excluded \ziiitrau~\cite{Trau} as it gives incorrect results on some benchmarks and \ziiialpha~\cite{ijcai2024p211} as it 
fails with an error on a~large fraction of the benchmark.
The experiments were executed on a server with an AMD EPYC 9124 64-Core
Processor (16~cores were used by our experiment) with 125\,GiB of RAM running
Ubuntu~22.04.5.
\cbstart
The timeout was set to 120\,s (from our experience, higher limit has only
a~negligible effect on the number of solved instances) and the memory limit was
\cbend
set to 8\,GiB (except for \ostrich, where we set the limit to 16\,GiB, since
\ostrich refuses to run with less than 8\,GiB of memory).

\vspace{-0.0mm}
\subsection{Results}\label{sec:label}
\vspace{-0.0mm}

\figScatter

\cbstart
The overall results comparing \ziiinoodlerpos with other tools are shown in \cref{tab:solved}. From the table 
you can see that \ziiinoodlerpos has the smallest number (210) of \oob{}s (i.e., time/memory-outs) followed by \ziiinoodler (443; however, it answers \unknown for 1,065 instances) and \cvcv~(619). 
\ziii and \ostrich have a bit more \oob{}s than these tools (2,146 and 8,907 respectively).
For the \biopython and \django benchmark sets, \cvcv is the best solver, having slightly less \oob{}s than 
\ziiinoodlerpos (171 vs.\ 69 on \biopython and 39 vs.\ 0 on \django, which is
$\sim$0.1\,\% of the two benchmark sets).
On the \thefuck set, the overall performance of \ziiinoodlerpos and \cvcv is roughly the same
(\ziiinoodlerpos is faster by 25\,s on the whole benchmark set, which is
negligible), though the performance on individual formulae in the set can
differ by a~lot (cf.\ \cref{fig:cvc5}).
On the \poshard benchmark, \ziiinoodlerpos can solve all formulae while no
other solver except \ziiinoodler can solve any of them (and \ziiinoodler can solve only 70).
Note that concerning unsolved instances, \ziiinoodlerpos is quite orthogonal to \cvcv (only 10 formulae can be solved neither by \ziiinoodlerpos nor \cvcv).
Regarding the overall 
running time, \ziiinoodlerpos has the smallest time of all other tools.
\cbend

From a comparison of \ziiinoodlerpos and \ziiinoodler, 
it is evident that the proposed decision procedure significantly helps in solving instances 
that were unknown for the original \ziiinoodler without any performance regression. 
The \oob{}s of \ziiinoodlerpos on benchmarks obtained from symbolic execution are caused 
mainly by non-chain-freeness of the input constraint where the stabilization-based procedure was not able to get a stable solution before the time limit.
In \cref{fig:scatter} we show scatter plots comparing the performance of \ziiinoodlerpos
with \ziii, \cvcv, and \ostrich. It can be seen from the figures that \ziiinoodlerpos 
can significantly outperform other state-of-the-art solvers on many instances. 
\cbdelete
In \cref{fig:cactus}
we give a cactus plot comparing sorted running times on all benchmarks, showing 
the superior performance of \ziiinoodlerpos.

\figCactus

\vspace{-0.0mm}
\section{Related Work}\label{sec:related}
\vspace{-0.0mm}

Approaches and tools for string solving are numerous and diverse, with a variety of constraint representations, algorithms, and input types. Many approaches use automata, e.g., \stranger~\cite{Stranger,fmsd14,yu2011}, 
\ziiinoodler~\cite{ChenCHHLS24,chen2023solving,BlahoudekCCHHLS23},
\norn~\cite{AutomataSplitting,Norn},
\ostrich~\cite{AnthonyTowards2016,AnthonyReplaceAll2018,ChenHHHLRW20,ChenFHHHKLRW22,ChenHHHLRW20},
\trau~\cite{ChainFree,Trau,Flatten,Notsubstring},
\sloth~\cite{holik_string_2018},
\slog~\cite{fang-yu-circuits},
\ziiistriiire~\cite{Z3str3RE,BerzishDGKMMN23},
\retro~\cite{ondravojtastrings20,ChenHLT23}.
The most important tools focused on word equations include \cvc{}4/5~\cite{cvc4_string14,tinelli-fmsd16,tinelli-hotsos16,tinelli-frocos16,cvc417,cvc422,cvc420},
\ziii~\cite{BTV09,z3}.
Bit vectors are commonly used in tools like Z3Str/2/3/4~\cite{Z3-str,Z3Str3,MoraBKNG21,Z3-str15} and HAMPI~\cite{HAMPI},
while PASS~\cite{PASS} utilizes arrays, and G-strings~\cite{gstrings} and GECODE+S~\cite{gecode+s} use a SAT solver. \ziiialpha~\cite{ijcai2024p211}
synthesizes efficient strategies for \ziii in order to improve the performace.

The chain-free fragment~\cite{ChainFree}, which we extend in this paper, represents the largest fragment of string constraints for which any string solver offers formal completeness guarantees.
Quadratic equations, addressed by tools like \retro~\cite{ondravojtastrings20,ChenHLT23} and \kepler~\cite{LeH18}, are incomparable but have less practical relevance, though some tools, such as \ziiinoodler or \ostrich, implement Nielsen's algorithm~\cite{nielsen1917} to handle quadratic cases.
Most other solvers guarantee completeness on smaller fragments (e.g., \ostrich~\cite{AnthonyTowards2016}, \norn~\cite{AutomataSplitting,Norn}, and \ziiistriiire~\cite{Z3str3RE}), or use incomplete heuristics that work in practice by over-/under-approximating or by sacrificing termination guarantees.


When it comes to handling position constraints, existing tools generally employ a similar approach---reducing these constraints to equations and length constraints, which are then solved using exponential-space algorithms or incomplete techniques in modern string solvers. However, this approach cannot be even used for $\notcontains$ as it cannot be directly reduced to quantifier-free 
combination of equations and length constraints. \cvc-4/5 transforms $\notcontains$ to quantified string formula, which is then solved by quantifier instantiation~\cite{ReynoldsHighlevel19}. 
Probably the closest approach to ours is~\cite{Notsubstring} converting the $\notcontains$ into a LIA formula. The main differences are threefold:
\begin{inparaenum}[(i)]
  \item the approach of~\cite{Notsubstring} builds on the flattening underapproximation, while our approach is precise. 
  \item our framework is more general that we can reduce to LIA all combinations of position constraints and not just $\notcontains$.
\item
\cbstart
The approach in~\cite{Notsubstring} avoids
\cbend
considering repetitions of variables, which is a central part of our work, by an aggressive overapproximation based on replacing repeating variables by fresh ones.  
\end{inparaenum} 
%
%
The idea of using counting to determine positions in strings was also used in \cite{ChenHHHLRW20}, where cost enriched automata similar to our tag automata were used, though \cite{ChenHHHLRW20} aspires only to solve a substantially simpler problem of computing pre-images of basic constraints and does not consider position constraints.  
The inspiration for our use of tag automata was originally drawn  
from methods used in functional equivalence checking of streaming string transducers~\cite{AlurC11}. 



\cbstart
\vspace{-0.0mm}
\section*{Acknowledgements}\label{sec:label}
\vspace{-0.0mm}

We thank the anonymous reviewers for careful reading of the paper and their
suggestions that greatly improved its quality.
This work was supported by 
the Czech Ministry of Education, Youth and Sports ERC.CZ project LL1908,
the Czech Science Foundation project 25-18318S, and
the FIT BUT internal project FIT-S-23-8151.
\ackPhdTalent

\cbend

\bibliography{literature.bib}

\ifTR

\appendix
\clearpage

\crefalias{section}{appendix}

\section{Constructing a Parikh Formula for an NFA}\label{app:parikh}

The formula $\parikhformof{\aut}$---whose models represent (non-uniquely) runs
of the automaton~$\aut = (Q, \Delta, I, F)$---is constructed in the following
way (cf.~\cite{JankuT19}):
\begin{enumerate}
  \item  For every state~$q \in Q$, we introduce two integer variables $\initvarof
    q$ and $\finalvarof q$ that denote
    whether the state is the first and/or the last state in the run, respectively.
    If a state is the first, its $\initvarof {q}$ will have the value~$1$, otherwise
    it will attain the~value $0$. The values of $\finalvarof {q}$ are assigned in the same fashion.
    Only initial states can be first and only final states can be last.
    The conditions are represented by the following two formulae.
    \begin{align}
      \forminitstates \defiff {} &
      \bigg(
        \bigwedge_{q \in I} 0 \leq \initvarof q \leq 1
      \bigg)
      \land
      \bigg(
        \bigwedge_{q \notin I} \initvarof q = 0
      \bigg)
      \land
      \bigg(
        \sum_{q \in I} \initvarof q = 1
      \bigg)
      \\
      \formfinalstates \defiff {} &
      \bigg(
        \bigwedge_{q \in F} 0 \leq \finalvarof q \leq 1
      \bigg)
      \land
      \bigg(
        \bigwedge_{q \notin F} \finalvarof q = 0
      \bigg)
    \end{align}
    Note the last part of~$\forminitstates$, which denotes that there will be
    exactly one first state in the run (a~corresponding condition
    for~$\formfinalstates$ is not required---the condition will be induced later
    by~$\formkirchhoff$).

  \item  For each transition~$t \in \Delta$, we introduce the
    variable~$\transvarof t$, whose value will represent how many times the
    transition is taken in the run.

  \item  Then, for each state~$q \in Q$, we introduce a~\emph{Kirchhoff's laws} kind of
    formula, which ensures that on a~run, the number of times we enter the state
    equals the number of times we leave the state (plus/minus one when the state
    is the first or the last):
    \begin{equation}
      \formkirchhoff(q) \defiff
      \initvarof q +
      \sum_{t = \move \cdot \cdot q \in \Delta} \transvarof t
      =
      \finalvarof q +
      \sum_{t = \move q \cdot \cdot \in \Delta} \transvarof t
    \end{equation}

  \item  The next subformula makes sure that the run represented by the model of
    the formulae is consistent with the structure of~$\aut$, i.e., that
    the states on the run and the transitions are connected and start in
    a~proper state.
    For this, for each state~$q \in Q$, we add a~variable~$\spanningvarof q$,
    which will be assigned the length of the shortest path from the first state
    to~$q$ of the spanning tree of~$\aut$ w.r.t.~the run.
    For states $q$ that do not occur in the run, their $\spanningvarof{q}$ will have assigned
    some negative value, i.e., ~$\spanningvarof q \le -1$.
    \begin{align}
      \formspan(q) \defiff {} &
      \left(
        \spanningvarof q = 0 \liff \initvarof q = 1
      \right)
      \land {} \\
      &
      \Bigg(
        \spanningvarof q \le -1 \limpl \bigg(\initvarof q = 0 \land \bigwedge_{t =
        \move \cdot \cdot q \in \Delta} \transvarof t = 0 \bigg)
      \Bigg)
      \land {}\\
      &
      \bigg(
        \spanningvarof q > 0 \limpl
        \bigvee_{t = \move {q'} \cdot q \in \Delta} (\transvarof t > 0 \land
        \spanningvarof{q'} \geq 0 \land \spanningvarof q = \spanningvarof{q'} + 1)
      \bigg)
    \end{align}
\end{enumerate}

Then the resulting $\parikhformof{\aut}$ is constructed as:
\begin{equation}
  \parikhformof{\aut} \defiff
  \forminitstates
  \land
  \formfinalstates
  \land
  \bigg(
    \bigwedge_{q \in Q}
      \formkirchhoff(q)
      \land
      \formspan(q)
  \bigg)
\end{equation}

\section{Proof of a Single Disequality Being in \clP} \label{app:ptimeProof}

\newcommand{\blank}{\square}
\newcommand{\hasMismatch}{\mathit{HasMismatch}}
\newcommand{\differAtPos}{\mathit{HasMismatch}}

In this section, we prove that $\posregsat(\emptyset, \regconstr, \emptyset,
\diseqconstr)$ where $\diseqconstr \defiff x_1 \dots x_n \neq y_1 \dots y_n$
can be decided in $\clP$. We start by introducing a technical lemma that allows
us to not consider cases when $\diseqconstr$ is satisfied by its sides having
different lengths. Throughout this section we fix $\regconstr$ to be
$\regconstr = \bigwedge_{x \in \vars} x \in \langof{\aut_x}$ where $\aut_x$ is
an arbitrary NFA.

First, we briefly sketch the outline of the proof.
    Let $\psi \defiff x_1 \cdots x_n \neq y_1 \cdots y_m$ be the input
    position predicate. We construct a one-counter automaton~$\caThird$
    (obtained as the last one in a~sequence of constructions $\caFirst$,
    $\caSecond$, $\caThird$) with
    a~counter~$\caCounter$ with updates limited to $\{-1, 0, +1\}$ such that
    there is an accepting state reachable with
    $\caCounter = 0$ iff the input combination of regular constraints
    and~$\psi$ is satisfiable.
    Our construction proceeds in several steps:
    \begin{enumerate}
        \item We create a two-counter increment-only automaton
            $\caFirst$ with counters
            $\caCounterL$ and $\caCounterR$ counting the left-hand
            and right-hand side global mismatch positions respectively.
            $\caFirst$ is created as a~union of
            two-counter automata $\caFirst_{i, j}$ for $1 \le i \le n$
            and $1 \le j \le m$, such that $\caFirst_{i, j}$ has an
            accepting state reachable with $\caCounterL = \caCounterR$ iff $\psi$
            can be satisfied by a~mismatch located in~$x_i$ and~$y_j$.
            The structure of $\caFirst_{i, j}$ consists of $\bigo {|\alphabet|}$ copies
            of $\aut_\circ$, sampling mismatch symbols for the disequality
            side in a nondeterministic fashion. Contrary to the tag automaton presented
            in \cref{sec:single-gen-diseq}, the sampled mismatch symbol is stored
            within the automaton states. The second mismatch symbol can be sampled
            only if it is different than the one seen previously. Finally,
            we note that all counter updates $(\caCounterL, \caCounterR)
            \mathrel{+}= (k_\mathrm{L}, k_\mathrm{R})$ are bounded by
            $k_\mathrm{L} \le n$ and $k_\mathrm{R} \le m$ and the overall size
            of $\caFirst$ is $\bigo {nm \cdot |\alphabet| \cdot |\aut_\circ|}$, i.e.,
            polynomial to the size of the input.
        \item Next, we construct a one-counter automaton $\caSecond$ with a
            counter $\caCounter$ based on $\caFirst$ such that $\caCounter$
            tracks the difference $\caCounterL - \caCounterR$.
            Therefore, for any transition $\move {q} {k_\mathrm{L}, k_\mathrm{R}} {r}$
            of $\caFirst$, we add a~transition $\move {q} {k_\mathrm{L} - k_\mathrm{R}} {r}$
            to $\caSecond$. We observe that $|\caFirst| = |\caSecond|$ and the counter
            updates can be bounded by $|k_\mathrm{L} - k_\mathrm{R}| \le m + n$.
        \item We transform $\caSecond$ into $\caThird$ by replacing any
            transition $\move {q} {k} {r}$ for $k \neq 0$ with $|k|$ transitions $t_1, \dots, t_{|k|}$
            with corresponding new states such that all $t_1, \dots, t_{|k|}$ transitions
            are labeled with either $+1$ (for $k > 0$) or $-1$ (for $k < 0$).
            Hence, the size of $\caThird$ is in $\bigo {(m+n)\big(nm \cdot
            |\alphabet| \cdot |\aut_\circ|\big)}$, i.e., polynomial in the size of the input.
            Detailed construction of $\caThird$ as well as the correspoding
            size analysis can be found in \cref{app:ptimeProof}.
    \end{enumerate}
Finally, $0$-reachability of a state in a~one-counter automaton can be decided
in $\clP$~\cite[Lemma~11]{AlurC11}, which concludes the proof.

\begin{lemma}\label{lemma:equisatLength}
  For any disequality $D$ with regular constraints $\regconstr$
  there is an equisatisfiable $D' \defiff L' \neq R'$
  with regular constraints $\regconstr'$ and $|D'| = \bigo{|D|}$
  such that if $D$ is satisfiable, then $D'$ is satisfiable due to a mismatch,
  i.e., there exists a model $\sigma'$ of $D'$ and a~position $0
  \le i < min(|\sigma'(L')|, |\sigma'(R')|)$ such that $\sigma'(L')[i] \neq
  \sigma'(R')[i]$.
\end{lemma}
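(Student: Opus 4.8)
The plan is to reduce $D$ to a disequality $D'$ obtained by appending a single \emph{fresh end-marker} symbol to both of its sides. Intuitively, a disequality $L \neq R$ can be satisfied in two ways: by a genuine character mismatch at a position common to both sides, or because one side is a proper prefix of the other (in particular, because the sides have different lengths). Appending the same fresh symbol $\delim \notin \alphabet$ to both sides preserves satisfiability (the appended symbols always align with each other), but turns the second kind of witness into the first: if the value of $L$ is a proper prefix of the value of $R$, then the appended $\delim$ on the $L$-side lands at a position where the $R$-side still carries a symbol of $\alphabet$, which produces a mismatch there.

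Concretely, I would pick a fresh symbol $\delim \notin \alphabet$ and a fresh string variable $x_\delim$, put $L' \defequal L \concat x_\delim$, $R' \defequal R \concat x_\delim$, $D' \defiff L' \neq R'$, and let $\regconstr'$ be $\regconstr$ together with the regular constraint $x_\delim \in \{\delim\}$ (whose automaton has two states); since we only add one variable and one two-state automaton, $|D'| = \bigo{|D|}$. The first step is then to check equisatisfiability of $\regconstr \land D$ and $\regconstr' \land D'$: any model of $\regconstr'$ must assign $\delim$ to $x_\delim$, so $\sigma'(L') = \sigma'(L)\delim$ and $\sigma'(R') = \sigma'(R)\delim$, and since $\delim \notin \alphabet$ while the values of $L$ and $R$ lie in $\alphabet^*$, we have $\sigma'(L') \neq \sigma'(R')$ iff $\sigma'(L) \neq \sigma'(R)$; restricting any model of $\regconstr' \land D'$ to the original variables and, conversely, extending any model of $\regconstr \land D$ by $x_\delim \mapsto \delim$ establishes the equivalence (the constraints in $\regconstr$ do not mention $x_\delim$).

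The main step is the mismatch-witness property. Given a model $\sigma$ of $\regconstr \land D$, let $w_L = \sigma(L)$ and $w_R = \sigma(R)$ (so $w_L \neq w_R$), and extend $\sigma$ by $x_\delim \mapsto \delim$ to obtain $\sigma'$ with $\sigma'(L') = w_L \delim$ and $\sigma'(R') = w_R \delim$. If $|w_L| = |w_R|$, then $w_L \neq w_R$ already yields a position $i < |w_L| \le \min(|w_L\delim|, |w_R\delim|)$ with $w_L[i] \neq w_R[i]$. If $|w_L| \neq |w_R|$, say $|w_L| < |w_R|$ (the other case is symmetric), take $i = |w_L|$: then $(w_L\delim)[i] = \delim \notin \alphabet$ while $(w_R\delim)[i] = w_R[i] \in \alphabet$ since $i < |w_R|$, so the two differ, and $i < |w_L| + 1 = \min(|w_L\delim|, |w_R\delim|)$. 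In every case $\sigma'$ witnesses a mismatch, which is exactly the claimed property; in fact the same reasoning shows that \emph{every} model of $\regconstr' \land D'$ exhibits a mismatch, so $D'$ is never satisfied merely by a length difference. The only genuinely delicate point is that the end-marker must be a symbol \emph{outside} $\alphabet$ (otherwise it could spuriously match $w_R[i]$), and that attaching it to \emph{both} sides --- rather than, e.g., padding just one side with a variable-length block of markers --- is what keeps the reduction equisatisfiability-preserving; once that is pinned down, the rest is routine.
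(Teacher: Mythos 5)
Your proof is correct and takes essentially the same route as the paper: both append fresh padding to \emph{both} sides of the disequality so that satisfaction by a length difference is converted into a genuine character mismatch while equisatisfiability is preserved (the paper uses a fresh variable $p$ constrained to the Kleene star of a fresh blank symbol, you use a single end-marker variable $x_\delim$ constrained to $\{\delim\}$). Your single-marker variant is a minor simplification of the same idea, with the pleasant side effect that every model of $D'$ exhibits a mismatch, so no padding length needs to be chosen.
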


\begin{proof}
    Let $D \defiff L \neq R$ be a disequality with
    regular constraints $\regconstr$. Then $D' \defiff L p \neq
    R p$ with $\regconstr' = \regconstr \land p \in \{ \blank
    \}^*$ where $p$ is a fresh variable and $\blank$ is a fresh alphabet symbol
    satisfies the lemma.

    Clearly, if $D$ is unsatisfiable then so is $D'$. Now, let $\sigma$
    be a model of $D$. If there is a position $i$ such that $\sigma(L)[i] \neq \sigma(R)[i]$
    then $\sigma' = \sigma \triangleleft \{ p \mapsto \epsilon \}$ is~a model of $D'$.
    Otherwise, by symmetry, $\sigma(L)$ is a proper prefix of $\sigma(R)$.
    Then $\sigma' = \sigma \triangleleft \{ p \mapsto \blank^{|\sigma(R)| - |\sigma(R)|} \}$
    is a model of $D'$. Moreover, there is a~position $i$ such that
    $\sigma'(L')[i] \neq \sigma'(R')[i]$.

    For the opposite direction it suffices to observe that any conflict in $D'$
    inside $D$'s variables is preserved. Otherwise, there must be a conflict
    between $p$ and some of $D$'s variables. But then one of $D$'s sides
    must be longer than the other one, meaning that $D$ is satisfiable.
\end{proof}

Intuitively, Lemma \ref{lemma:equisatLength} allows us to focus solely on
mismatches when proving that decidability of $\posregsat(\emptyset, \regconstr,
\emptyset, \diseqconstr)$ is in $\clP$. We start with a crucial lemma
describing the construction of a two-counter increment-only automaton with
equal-counter reachability of its accepting states being closely related to the
existence of a mismatch between two particular variables of $\diseqconstr$.
In the following, we fix $|\regconstr|$ to be $|\regconstr| = \sum_{x \in
\vars} |\aut_x|$ where $|\aut_x| = |Q_x| + |\Delta_x|$.

\begin{lemma} \label{lemma:partialTwoC}
    Let $D \defiff x_1 \cdots x_n \neq y_1 \cdots y_m$ be a disequation. Then for
    any choice $(i, j) \in \{1, \dots, n\} \times \{1, \dots, m\}$ there is
    a~two-counter automaton $\caFirst_{i, j}$ of size $|\caFirst_{i, j}| =
    \bigo {|\alphabet| \cdot |\regconstr|}$ with counters $\caCounterL$ and
    $\caCounterR$ such that there is an accepting state $q \in F$ reachable with
    $\caCounterL = \caCounterR$ iff there is a model $\sigma$ of $D$ and a~position
    $k$ satisfying $\sigma(x_1 \cdots x_n)[k] \neq \sigma(y_1 \cdots y_m)[k]$
    with $\sigma(x_1 \cdots x_n)[k]$ being a letter of $x_i$ and $\sigma(y_1 \cdots y_m)[k]$
    being a letter of $y_j$.
\end{lemma}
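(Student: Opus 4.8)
The plan is to construct $\caFirst_{i,j}$ as a controlled product of (tagged copies of) the $\epsilon$-concatenation $\autcon$ of the NFAs $\aut_{x_1}, \dots$ for all variables, arranged in the fixed order $\preccurlyeq$, so that a run of $\caFirst_{i,j}$ simulates choosing one word $w_z \in \langof{\aut_z}$ per variable $z$ (with repeated variables forced to agree automatically, since the run passes through the same copy of $\aut_z$ each time). The counter $\caCounterL$ will count the global position on the left-hand side up to and including the guessed mismatch letter inside the $i$-th occurrence $x_i$, and $\caCounterR$ the analogous position on the right inside $y_j$. Concretely, $\caCounterL$ is incremented by $1$ on every transition that reads a letter of $x_1, \dots, x_{i-1}$ and on every transition inside $x_i$ up to the transition that is nondeterministically marked as ``the mismatch letter of $x_i$''; after that mark it stops. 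Symmetrically for $\caCounterR$ with $y_1, \dots, y_{j-1}, y_j$. To check that the guessed letters genuinely differ, I store the guessed left letter $a \in \alphabet$ in the control state (this is the source of the $|\alphabet|$ factor) and, when guessing the right mismatch letter $b$ inside $y_j$, require $b \neq a$; the automaton only has ``mismatch'' transitions labeled by pairs with distinct letters. An accepting state is reached with $\caCounterL = \caCounterR$ exactly when both sides have chosen a legal mismatch letter at the same global position, which by construction witnesses a model $\sigma$ of $D$ with $\sigma(x_1\cdots x_n)[k] \neq \sigma(y_1\cdots y_m)[k]$ for that common value $k$.

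The key steps, in order, are: (1) build $\autcon$ as the $\preccurlyeq$-ordered $\epsilon$-concatenation of the $\aut_z$, so that an accepting run encodes exactly one choice of $w_z$ per variable and hence respects repetitions; (2) take $O(|\alphabet|)$ annotated copies of $\autcon$ — one ``pristine'' copy for the phase before the left mismatch is guessed, one copy per letter $a \in \alphabet$ for the phase after the left letter has been committed to $a$, and a final phase after the right mismatch (different from $a$) has been committed — wired together by the mismatch-guess transitions; (3) attach the two counters $\caCounterL, \caCounterR$ with the increment pattern described above, noting that every transition bumps each counter by at most $1$ so $\caFirst_{i,j}$ is genuinely increment-only with updates in $\{0,1\}^2$; (4) designate as accepting those states that are in a final-phase copy, sit on a final state of every $\aut_z$ visited (so all words are accepted), and have completed both mismatch guesses. (5) Verify the size bound: we have $O(|\alphabet|)$ copies of $\autcon$, and $|\autcon| = O(|\regconstr|)$ since it is a disjoint-union-plus-$\epsilon$-edges of the $\aut_z$, giving $|\caFirst_{i,j}| = O(|\alphabet| \cdot |\regconstr|)$.

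The correctness argument is a two-way simulation. For the ``only if'' direction, an accepting run with $\caCounterL = \caCounterR$ projects (via the copy structure) onto a single accepting run of $\autcon$, i.e., a choice of words $w_z$ for all $z$; the marked transitions identify positions inside $x_i$ and $y_j$ whose letters were forced distinct, and the counter equality says these positions coincide as global positions of $w_{x_1}\cdots w_{x_n}$ and $w_{y_1}\cdots w_{y_m}$, giving the desired mismatch index $k$. For the ``if'' direction, given $\sigma$ and $k$ with the mismatch inside $x_i$ and $y_j$, one reads off which transitions to mark and which letter $a = \sigma(x_1\cdots x_n)[k]$ to remember, and the resulting run is accepting with the counters equal. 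Here one must be slightly careful about the subtlety already flagged in \cref{sec:single-gen-diseq}: if $x_i$ and $y_j$ are occurrences of the \emph{same} variable $z$, the two marked positions both lie in the single copy of $\aut_z$ and may be distinct positions of the same word $w_z$, so the copy structure must allow re-entering $\aut_z$'s copy in a later phase; this is the main obstacle, and it is handled exactly as the two-mismatch phenomenon in the tag-automaton construction (one guess ``on the way in'' for the side where $z$ occurs first, one ``on the way out'' for the other side). Everything else is bookkeeping, and the $\clP$ upper bound for the whole problem then follows in the later steps by unioning the $\caFirst_{i,j}$, collapsing the two counters to their difference, and invoking \cite[Lemma~11]{AlurC11}.
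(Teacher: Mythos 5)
There is a genuine gap, and it lies exactly at the point this lemma exists to solve: repeated variables. In the $\epsilon$-concatenation $\autcon$ each variable's automaton is traversed \emph{once} (there is no ``each time''), so if a variable $z$ has several occurrences among $x_1,\dots,x_{i-1}$, its length must contribute to the left global mismatch position once \emph{per occurrence}. With your update scheme --- ``$\caCounterL$ is incremented by $1$ on every transition that reads a letter of $x_1,\dots,x_{i-1}$'', and the explicit claim that all updates lie in $\{0,1\}^2$ --- the run adds $|w_z|$ only once, so the test $\caCounterL=\caCounterR$ no longer expresses equality of the global positions and the ``iff'' fails: for instance with left-hand side $z\,z\,x_i$ the left counter comes out short by $|w_z|$, so the automaton can equate counters for assignments whose mismatch positions actually differ, and can fail to equate them for genuine witnesses. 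If instead you intended one copy of the automaton per \emph{occurrence} (so that unit increments would be correct), then your other claim --- that repetitions are ``forced to agree automatically'' --- breaks, because separate copies do not force the same word; that consistency is precisely what is hard here.

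The paper's construction fixes this by weighting the counter updates rather than keeping them unit: a transition reading a letter of variable $x$ adds to $\caCounterL$ the number of occurrences of $x$ among $x_1,\dots,x_{i-1}$ (and one more, i.e.\ occurrences among $x_1,\dots,x_i$, on transitions taken before/at the sampled mismatch inside $x_i$), and symmetrically for $\caCounterR$ with $y_1,\dots,y_j$; these updates are bounded by $n$ and $m$, not by $1$. Your claim of $\{0,1\}^2$ updates must therefore be dropped --- in the paper the normalization to $\pm 1$ updates happens only later, on the one-counter \emph{difference} automaton, at the cost of an extra $(m+n)$ blow-up in size, and is neither needed nor true at the level of this lemma. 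The remaining ingredients of your proposal (the $\bigo{|\alphabet|}$ copies of $\autcon$ remembering the first sampled letter so that the second sampled letter can be forced distinct, the special treatment when $x_i$ and $y_j$ are occurrences of the same variable, and the size bound $\bigo{|\alphabet|\cdot|\regconstr|}$) do match the paper's proof.
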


\newcommand{\varin}{\mathcal{V}_{\le}}     
\newcommand{\varex}{\mathcal{V}_{<}}       
\newcommand{\cntmoveeps}[2]{#1 \xrightarrow[c_R \gets c_R]{c_L \gets c_L} #2}  

\newcommand{\cmove}[3]{#1 \xrightarrow{#2} #3}  

\newcommand{\cntmove}[4]{#1 \xrightarrow[c_R \gets c_R + #3]{c_L \gets c_L + #2} #4}  
\newcommand{\cntinc}[2]{#1 \gets #1 + #2}  
\newcommand{\stateHom}[0]{\kappa}

\begin{proof}
    Let $\autcon = (Q_\epsilon, \Delta_\epsilon, I_\epsilon, Q_\epsilon )$ be
    an $\epsilon$-concatenation of all $\aut_x = (Q_x, \Delta_x, I_x F_x)$
    respecting the fixed order $\preceq$, and let $\var\colon Q_\epsilon
    \rightarrow \vars$ be a function such that $\var(q) = x$ iff $q$ is a state
    of $\aut_x$. Furthermore, Let $\mathcal{V}^{S}_{\sim} \colon (\vars \times
    \nat) \rightarrow \nat$ for $S \in \{ \leftsymb, \rightsymb \}$ and ${\sim}
    \in \{<, \le\}$ be functions defined as follows.
    \begin{align*}
        \varex^L(x, l) &= \left|\left\{k \in \nat: 0 \le k < l \land y_k = x \right\}\right|
        &&\varex^R(x, l) = \left|\left\{k \in \nat: 0 \le k < l \land z_k = x \right\}\right| \\
        \varin^L(x, l) &= \left|\left\{k \in \nat: 0 \le k \le l \land y_k = x \right\}\right|
        &&\varin^R(x, l) = \left|\left\{k \in \nat: 0 \le k \le l \land z_k = x \right\}\right| \\
    \end{align*}

    We construct a counter automaton $\caFirst_{i, j} = (Q, \Delta, I, F)$
    with counters $\caCounterL$ and $\caCounterR$
    where $Q = Q_\epsilon \times \alphabet \times \{\leftsymb, \rightsymb\} \cup Q_\epsilon
    \times \{ \bot, \top \} \big)$, $I = Q_\epsilon \times \{ \bot \}$, $F =
    Q_\epsilon \times \{ \top \}$. For transitions, we write $\cmove q {(k_L, k_R)} r$
    to denote a transition from $q$ to $r$ that updates $c_L$ by $c_L \gets c_L + k_L$
    and $c_R$ by $c_R \gets c_R + k_R$. The set of transitions $\Delta$ is then
    $$\Delta = \bigg( \bigcup_{x \in \vars \setminus \{x_i, y_j\}} \Delta_{\mathit{NoMis}}(x) \bigg) \cup
       \Delta_{\mathit{Mis}}(x_i) \cup \Delta_{\mathit{Mis}}(y_j) \cup \Delta_{1 \rightarrow 2} \cup \Delta_{2 \rightarrow 3} \cup \Delta_\epsilon $$
    where the constituting sets are the following:
    \begin{enumerate}
        \item $\Delta_\mathit{NoMis}(x)$ are transitions in a variable $x$ that does not contain a conflict:
            \begin{align*}
                \Delta_{\mathit{NoMis}}(x) =
                  &\big\{ \cnmove {(q, \bot)}
                                    {\varex^{L}(x,i), \varex^{R}(x,i)}
                                  {(r, \bot) }
                           \mid \move {q} {b} {r} \in \Delta_x )
                   \big\} \cup{}\\
                  &\big\{ \cnmove {(q, a, S)}
                                    {\varex^{L}(x,i), \varex^{R}(x,i)}
                                  {(r, a, S)}
                            \mid \move {q} {b} {r} \in \Delta_x \land a \in \alphabet \land S \in \{\mathcal{L}, \mathcal{R}\}
                    \big\} \cup{} \\
                  &\big\{ \cnmove {(q, \top)}
                                    {\varex^{L}(x,i), \varex^{R}(x,i)}
                                    {(r, \top)}
                           \mid \move {q} {b} {s} \in \Delta_x
                   \big\},
            \end{align*}
        \item $\Delta_\mathit{Mis}(x)$ are transitions corresponding to the variable $x$ containing a conflict:
            \begin{align*}
              \Delta_{\mathit{Mis}}(x) =
                  &\big\{ \cnmove {q}
                                    {\varin^{L}(x, i), \varin^{R}(x, j)}
                                  {r}
                            \mid \move {q} {b} {r} \in \Delta_x
                    \big\} \cup{} \\
                  &\big\{ \cnmove {(q, a, \mathcal{L})}
                                    {\varex^{L}(v, i), \varin^{R}(v, j)}
                                  {(r, a, \mathcal{L})}
                          \mid \move {q} {b} {r} \in \Delta_x \land
                               a \in \alphabet
                   \big\} \cup{} \\
                  &\big\{ \cnmove {(q, a, \mathcal{R})}
                                   {\varin^{L}(v, i), \varex^{R}(v, j)}
                                    {(r, a, \mathcal{R})}
                          \mid \move {q} {b} {r} \in \Delta_x \land
                               a \in \alphabet
                   \big\} \cup{} \\
                  &\big\{ \cnmove {(q, \top)}
                                    {\varex^{L}(v, i), \varex^{R}(v, j)}
                                    {(r, \top)}
                           \mid \move {q} {b} {r} \in \Delta_x \big\},
            \end{align*}
        \item then we have $\Delta_{1 \rightarrow 2}$ containing transitions that sample a mismatch symbol and store it in the state target state:
        \begin{equation*}
          \Delta_{1 \rightarrow 2} =
          \begin{cases}
               \big\{ \cnmove {s} {\varin^{L}(x_i, i), \varin^{R}(x_i, j)} {(q, a, \mathcal{L})}
                   \mid \move {s} {a} {q} \in \Delta_{x_i} \big\} &\mbox{if } x_i \prec y_j, \\[2mm]
               \big\{ \cnmove {s} {\varin^{L}(y_j, i), \varin^{R}(y_j, j)} {(q, a, \mathcal{R})}
                       \mid \move {s} {a} {q} \in \Delta_{y_j} \big\} &\mbox{if } x_i \succ y_j, \\[2mm]
               \big\{ \cnmove {s} {\varin^{L}(x_i, i), \varin^{R}(x_i, j)} {(q, a, S)}
                       \mid \move {s} {a} {q} \in \Delta_{x_i} \land S \in \{\mathcal{L}, \mathcal{R}\} \big\} &\mbox{otherwise.} \\
          \end{cases}
        \end{equation*}
        \item next we have $\Delta_{2 \rightarrow 3}$ containing transitions that sample the second mismatch symbol if it is different from the one seen previously:
            \begin{equation*}
              \Delta_{2 \rightarrow 3} =
              \begin{cases}
                 \big\{ \cnmove {(s, a_1, \mathcal{L})} {\varex^{L}(y_j, i),\varin^{R}(y_j, j)} {(q, \top)}
                 \mid \move {s} {a_2} {q} \in \Delta_{y_j} \land a_1 \not = a_2 \big\} &\mbox{if } x_i \prec y_j, \\[2mm]
                 \big\{ \cnmove {(s, a_1, \mathcal{R})} {\varin^{L}(x_i, i), \varex^{R}(x_i, j)} {(q, \top)}
             \mid \move {s} {a_2} {q} \in \Delta_{x_i} \land a_1 \not = a_2 \big\} &\mbox{if } x_i \succ y_j, \\[2mm]
                 \begin{aligned}
                    & \big\{ \cnmove {(s, a_1, \mathcal{R})} {\varin^{L}(x_i, i), \varex^{R}(x_i, j)} {(q, \top)}
                        \mid \move {s} {a_2} {q} \in \Delta_{x_i} \land a_1 \not = a_2 \big\} \cup{} \\
                    & \big\{ \cnmove {(s, a_1, \mathcal{L})} {\varex^{L}(x_i, i), \varin^{R}(x_i, j)} {(q, \top)}
                        \mid \move {s} {a_2} {q} \in \Delta_{x_i} \land a_1 \not = a_2 \big\} \\
                 \end{aligned} &\mbox{otherwise}
              \end{cases}
            \end{equation*}
        \item Finally, we have $\Delta_\epsilon$ containing $\epsilon$-transitions copied from $\aut_\epsilon$:
        \begin{align*}
          \Delta_{\epsilon} =
              &\big\{ \cnmove {s} {0, 0} {q}
                        \mid \move {s} {\epsilon} {q} \in \Delta_\epsilon
               \big\} \cup{} \\
              &\big\{ \cnmove {(s, a, S)} {0, 0} {(q, a, S)}
                      \mid \move {s} {\epsilon} {q} \in \Delta_\epsilon \land
                           a \in \alphabet \land
                           S \in \{\mathcal{L}, \mathcal{R}\}
               \big\} \cup{} \\
              &\big\{ \cnmove {(s, \top)} {0, 0} {(q, \top)}
                      \mid \move {s} {\epsilon} {q} \in \Delta_\epsilon
               \big\}
        \end{align*}
    \end{enumerate}

\textbf{Size analysis.}
The total number of states is $|Q| = 2 |\alphabet|
|Q_\epsilon| + 2 |Q_\epsilon| = \bigo{|Q_\epsilon| |\alphabet| }$. As for the
transitions, we have
\begin{align*}
|\Delta| &=
   \underbrace{\bigg( \sum_{x \in \vars \setminus \{x_i, y_j\}} |\Delta_{\mathit{NoMis}}(x)| \bigg) +
   | \Delta_{\mathit{Mis}}(x_i)| +
   | \Delta_{\mathit{Mis}}(y_j)| +
   | \Delta_\epsilon| }_{{} \le 2|\Delta_\epsilon|(|\alphabet|+2)} {}+
   \underbrace{\Delta_{1 \rightarrow 2} +
   \Delta_{2 \rightarrow 3}}_{{}\le 2|\Delta_\epsilon||\alphabet|} \\
     &= \bigo{|\Delta_\epsilon||\alphabet|}.
\end{align*}
As $|\aut_\epsilon| = |\regconstr|$, we have $|\caFirst_{i, j}| = \bigo{ |\alphabet| \cdot |\regconstr|}$.
Finally, observe that, for a given choice $(i, j)$, $\caCounterL$ and $\caCounterR$ updates
are bounded by $i$ and $j$, respectively.

\textbf{Correctness.} We claim that $\caFirst_{i, j}$ has an accepting state
reachable with $\caCounterL = \caCounterR$ iff the input formula $x_1
\cdots x_n \neq y_1 \cdots y_m$ is satisfiable with a model $\sigma$ and a
position $k$ such that $\sigma(x_1 \cdots x_n)[i] \neq \sigma(y_1 \cdots
y_m)[i]$ with $\sigma(x_1 \cdots x_n)[i]$ and $\sigma(y_1 \cdots y_m)[i]$ being
a letter of $x_i$ and $y_j$, respectively. We assume that $x_i \prec y_j$ and
note that the remaining cases can be proven in the same fashion. First, we observe
that, by construction of $\caFirst_{i,j}$, any run $\rho$ of $\caFirst_{i, j}$ maps to (possibly multiple) assignments
$\sigma\colon \vars \rightarrow \alphabet^*$ such that $\sigma(x) \in \langof{\aut_x}$
for every $x$.

For the forward direction, it suffices to show that the mismatch is located in
$x_i$ and $y_j$ as the construction of $\caFirst_{i, j}$ guarantees the sampled
letters to be different. Let $\rho = q_0 \cnmove{} {l_1, r_1} {q_1} \cdots
\cnmove {q_{n-1}} {l_n, r_n} {q_{n}}$ be an accepting run, such that
$\caCounterL = \sum_{1 \le k \le n} l_k = \sum_{1 \le k \le n} l_k =
\caCounterR$. From the construction of $\caFirst_{i, j}$, we have that any transition
$\cnmove {q_{k-1}} {l_k, r_k} {q_k}$ that such that $\var(q_{k-1}) = \var(q_{k}) = x$
for some variable $x \neq x_i \neq y_j$, has $l_k$ and $r_k$ equal to the number of
variables that precedes $x_i$ and $y_j$, respectively. Let us focus on the sum $L$ of $\caCounterL$
updates along transitions $\cnmove
{q_{o-1}} {l_{o}, r_{o}} {q_{o}}$ with $\var(q_{o-1}) =
\var(q_o) = x_i$. From the construction of $\caFirst_{i, j}$, $L$
looks in the following way:
$$ L = \varin^L(x_i, i) + \cdots + \varin^L(x_i, i) + \varex^L(x_i, i) + \cdots \varex^L(x_i, i). $$
Observing that $\varin^L(x_i, i) = \varex^L(x_i, i) + 1$, we can rearrange $L$ into
\begin{equation*}
    L = P \cdot \varex^L(x_i, i) + N
\end{equation*}
where $P$ is the total number of summands, and $N < P$ is the number of $\varex^L(x_i, i)$
summands. Noting that $P$ is the same as the length of the word assigned to $x_i$ by $\rho$,
we conclude that $N$ is the mismatch position inside the variable $x_i$. The same chain
of thought can be done to conclude that the second mismatch is within $y_j$.

The opposite direction is done in similar fashion, but reversed. Given a model
$\sigma$ of $D$ with a~mismatch on position $k$ within $x_i$ and $y_j$, one
constructs a run $\rho$ of $\caFirst_{i, j}$ such that a word $w_x$ assigned to
any variable $x$ by $\rho$ satisfies $|w_x| = |\sigma(x)|$. Let $K = (\sum_{1
\le l < i} |\sigma(x_l)|)$. The constructed run $\rho$ takes the $|k - K|$-th
transition from $\Delta_\mathit{Mis}(x_i)$ along the symbol $\sigma(x_1 \cdots
x_n)[k]$. Taking a~transition from $\Delta_{\mathit{Mis}}(y_j)$ similarly and
finishing the run so that all assigned words are length-consistent, we are left
with a complete run $\rho$. It suffices to show that $\caCounterL =
\caCounterR$, which can can be done taking the mismatch position $k$, observing
how the words assigned to variables preceding $x_i$ ($y_j$) contribute to $k$
and then concluding that the run makes precisely the same contributions to
$\caCounterL$ ($\caCounterR$).
\end{proof}

Knowing how to construct $\caFirst_{i, j} = (Q_{i, j}, \Delta_{i, j}, I_{i, j},
F_{i, j})$, we can construct $\caFirst = (Q^{1}, \Delta^{1}, I^{1}, F^{1})$
where
$$
Q^{1} = \biguplus_{\substack {1 \le i \le n \\ 1 \le j \le m}} Q_{i, j},\qquad
\Delta^{1} = \biguplus_{\substack {1 \le i \le n \\ 1 \le j \le m}} \Delta_{i, j},\qquad
I^{1} = \biguplus_{\substack {1 \le i \le n \\ 1 \le j \le m}} I_{i, j}, \qquad
F^{1} = \biguplus_{\substack {1 \le i \le n \\ 1 \le j \le m}} F_{i, j}.
$$
By construction, $\caFirst$ has an accepting state reachable with $\caCounterL = \caCounterR$
iff the input disequality $D$ is satisfiable. As for the size, $|Q^{1}| =
\bigo{nm |Q_\epsilon||\alphabet|}$, $|\Delta^{q}| = \bigo{nm
|\Delta_\epsilon||\alphabet|}$, i.e., $|\caFirst| = \bigo{nm \cdot |\alphabet|
\cdot |\regconstr|}$. Every counter update $\caCounterL \gets \caCounterL + k_1$
and $\caCounterL \gets \caCounterL + k_2$ can be bounded by $n$ and $m$,
respectively.

Next, we create a single-counter automaton $\caSecond = (Q^{2}, \Delta^{2}, I^{2}, F^{2})$
with a counter $\caCounter$
where $Q^{2} = Q^{1}$, $I^{2} = I^{1}$, $F^{2} = F^{1}$ and $\Delta = \{ \cnmove q {k_1 - k_2 } r \mid \cnmove {q} {k_1, k_2 } {r} \in \Delta^{1} \}$.
Clearly, $\caSecond$ has the property that an~accepting state is reachable with $\caCounter = 0$ iff
the input disequality $D$ is satisfiable. Furthermore, $|Q^{2}| = |Q^{1}|$, $|\Delta^{2}| = |\Delta^{1}|$,
and every update $\caCounter \gets \caCounter + k$ can be bounded by $|k| \le (m + n)$.

Finally, we can create a single-counter automaton $\caThird$ based on
$\caSecond$ with counter $\caCounter'$ such that any counter updates are
limited to $\caCounter' \gets \caCounter' + k$ where $k \in \{-1, 0, 1\}$ by
replacing any transition $\cnmove {q} {l} {r} \in \Delta^2$ with $|l| > 1$ by a
series $t_1, \dots, t_{|l|}$ of intermediate transitions through fresh states
such that every $t_o$ for $1 \le o \le |l|$ updates $\caCounter'$ by $+1$ (if $l > 0$) or $-1$ (if $l
< 0$). As the absolute value of any counter update in $\caSecond$ is
bounded by $m+n$, such a transformation produces automaton with $|Q^{3}| \le
|Q^2| + (m+n)|\Delta^{2}|$ states. That is $|Q^{3}| =
\bigo{nm|\alphabet|(|Q_\epsilon| + (m+n)|\Delta_\epsilon|)} = \bigo{nm (m+n) |\alphabet| |\aut_\epsilon| }$.
In other words, $\caThird$ is of polynomial size in the size of the input. Since
0-reachability in a single-counter automaton can be decided in
$\clP$~\cite{AlurC11,melskireps1997}, satisfiablity of $D$ can be also
established in $\clP$.

\section{Remaining Definitions for Reducing Multiple Disequations to LIA}
\label{app:liareduction}

To complete the reduction of a system of disequations $\varphi \defiff
\bigwedge_{1 \le i \le n} D_{i}$ to an equisatisfiable LIA formula $\psi$ using
a tag automaton $T$ we are missing the definitions of $\fLen {k}$ and
$\fMismatch {k}$ where $1 \le k \le n$. Let the $k$-th disequality be
$D_k \;\defiff\; x_{1} \dots x_{N} \neq y_{1} \dots y_{M}$.

The formula $\fLen {k}$ expressing that $D_k$ is satisfied due to its sides
having different lengths can be expressed as follows:
\begin{equation}
	\fLen {k} \defiff
		\bigg( \sum_{0 \le i \le N} \numof {\taglenof{x_i}} \bigg) \neq
		\bigg( \sum_{0 \le j \le M} \numof {\taglenof{y_j}} \bigg).
\end{equation}

\newcommand{\fMisPos}[1]{\varphi^{#1}_{\mathit{Pos}}}
\newcommand{\fAlign}[1]{\varphi^{#1}_{\mathit{Align}}}
\newcommand{\fConflictPresent}[1]{\varphi^{#1}_{\mathit{\exists}}}

Next, we define $\fMismatch {k}$ that holds iff the run samples two distinct
mismatch symbols for $D_k$ that are located at the same (global) position.
\begin{align}
	\fMisPos {k}(s, v) \defiff
	 \bigwedge_{\substack{ 2 \le l \le 2n}}
        \Bigg(
			\Big(
				& \sum_{ \substack{ \texttt{a} \in \alphabet} }
                    \numof{\tagof{ \tagmis_l, v, D_k, s, \texttt{a}} } \Big)+
                    \numof{\tagof{ \tagcop_l, v, D_k, s }}
			 > 0
			\rightarrow
            p_{D_k, s} = \sum_{0 < k \le l} \numof{\tagof{\tagpos_k, v}}
        \Bigg) \land {} \notag \\
    & \sum_{ \substack{ \texttt{a} \in \alphabet}} \numof{\tagof{ \tagmis_1, v, D_k, s, \texttt{a}} } > 0 \rightarrow p_{D_k, s} = \numof{ \tagof{\tagpos_1, v}}
\end{align}
\begin{equation}
	\fAlign {k} (i, j) \defiff
		\bigg( \sum_{1 \le l < i} \numof{\taglenof{x_l}} + p_{D_k, \leftsymb} \bigg) =
		\bigg( \sum_{1 \le l < j} \numof{\taglenof{y_l}} + p_{D_k, \rightsymb} \bigg)
\end{equation}
\begin{equation}
	\fConflictPresent {k} (s, v) \defiff
		\bigg(
            \Big(
			\sum_{ \substack{ \texttt{a} \in \alphabet }}
                \numof{\tagof{ \tagmis_1, v, D_k, s, \texttt{a} }}
            \Big) +
            \Big(
			\sum_{ \substack{ 2 \le l \le 2n \\ \texttt{a} \in \alphabet }}
                \numof{\tagof{ \tagmis_l, v, D_k, s, \texttt{a} }} +
                \numof{\tagof{ \tagcop_l, v, D_k, s }}
            \Big)
		\bigg) > 0
\end{equation}
\begin{equation}
	\fMismatch {k} \defiff
		\bigvee_{\substack{1 \le i \le N \\ 1 \le j \le M}}
		\bigg(
			\fMisPos {k} (\leftsymb, x_i) \land
			\fMisPos {k} (\rightsymb, y_j) \land
			\fAlign {k} \land
			\fConflictPresent {k}(\leftsymb, x_i) \land
			\fConflictPresent {k}(\rightsymb, y_j) \land
			r_{D_k, \leftsymb} \neq r_{D_k, \rightsymb}
		\bigg)
\end{equation}
Intuitively, the $\fMisPos {k}(s, v)$ formula asserts that value of $p_{D_k, s}$
correctly counts the mismatch position inside the variable $v$ wrt. the order
in which the mismatch sampling transitions were taken. The $\fAlign {k}$ formula
ensures that the global mismatch position is the same for both sides of $D_k$.
Next, $\fConflictPresent {k} (s, v)$ holds if there was a mismatch sampled
for the side $s$ of $D_k$ in variable $v$. Finally, we have all necessary
formulae to define the resulting equisatisfiable formula $\varphi$ for $\psi$:
\begin{equation}
    \varphi \defiff \fFair \wedge \fConsist \wedge \fCopies \land \bigwedge_{1 \le i \le n} \big( \fLen {i} \lor \fMismatch {i} \big)
\end{equation}

\label{appendix:ca_sizes}

\vspace{-3mm}
\section{Proof of \clNP-hardness of the $\notcontains$ predicate} \label{app:npHardess}

\newcommand{\clause}[0]{\mathcal{C}}
\newcommand{\sep}[0]{\mathbf{\triangle}}
\newcommand{\enc}[0]{\mathrm{Enc}}
\newcommand{\placeHolder}[0]{\textcolor{black}{w}}
\newcommand{\placeHolderC}[0]{\textcolor{blue}{w}}

We show that the problem of deciding the satisiability of a single $\notcontains$ predicate is $\clNPhard$
by reduction from the $3$-SAT problem.
Let $X = \{x_1, \dots, x_N\}$ be a set of propositional variables, and let $\varphi \defequiv
\bigwedge_{1 \le i \le n} \clause_i$ be a finite conjunction of clauses where each
clause $\clause_i$ is a disjunction of three literals, i.e., $\clause_i \defequiv
l_{i, 1} \lor l_{i, 2} \lor l_{i, 3}$ with $l_{i, j} = x$ or $l_{i, j} = \neg x$ for some $x \in X$ and $1 \le j \le 3$.
For every propositional variable $x \in X$ we introduce two string variables $s_x$ and $\overline{s_x}$ with $\langof {s_x} =
\langof {\overline{s_x}} = \{0, 1\}$,
corresponding to $x$ and its negation $\neg x$. Let $\vars$ be the set of all such variables.

We encode clauses in a straightforward manner as a concatenation of
corresponding string variables. More formally, we encode any clause $\clause_i
= l_{1} \lor l_2 \lor l_3$ as a word $\enc(\clause_i) = s_1 s_2 s_3$ where $s_j
\in \vars$ is a string variable corresponding to the literal $l_{j}$ for any $1
\le j \le 3$. Let $\sep$ be a fresh alphabet symbol serving the role of a separator.
Given a 3-SAT formula $\varphi$, we construct a formula $\psi$ as in
\cref{eq:3sat}, using different colors to distinguish between different roles
parts of the created strings play in the reduction
\vspace{-2mm}
\begin{gather}
    w_{x, \overline{x}} \defequal
        \textcolor{blue}{000} \textcolor{red}{00s_x\overline{s_x}} \sep
        \textcolor{blue}{000} \textcolor{red}{s_x\overline{s_x}11}
    \\
    \psi \defequiv
        \notcontains(
            \underbrace{\textcolor{blue}{000} \textcolor{red}{0011}}_{U},
            \underbrace{
            \textcolor{blue}{\enc(\clause_1)} \textcolor{red}{0011} \sep \cdots
            \textcolor{blue}{\enc(\clause_n)} \textcolor{red}{0011}}_{W_{\mathrm{SAT}}} \sep
            \underbrace{
            w_{x_1, \overline{x_1}} \sep
            \cdots
            w_{x_N, \overline{x_N}}
            }_{W_{\mathrm{x \neq \overline{x}}}}
            )  \label{eq:3sat}
\end{gather}

To see that $\varphi$ and $\psi$ are equisatisfiable, we divine the right-hand side of
$\notcontains$ in $\psi$ into two words, $W_{\mathrm{SAT}}$ and $W_{\mathrm{x \neq \overline{x}}}$.
Aligning $U$ between two occurrences of the separator $\sep$ in $W_{\mathrm{SAT}}$, we see
that the \textcolor{red}{red} parts agree, and, therefore, in order for $U$ to not be contained
withing $W_{\mathrm{SAT}}$ we must find an assignment $\sigma$ such that none of the \textcolor{blue}{blue}
subwords in $W_{\mathrm{SAT}}$ are \textcolor{blue}{000}, i.e., we enforce that every clause
has at least one literal satisfied.

Second, we have the word $W_{\mathrm{x \neq \overline{x}}}$ consisting of a concatenation of
$w_{x, \overline{x}}$ for every propositional variable $x \in X$ separated by $\sep$.
The \textcolor{blue}{blue} subwords of any $w_{x, \overline{x}}$ agree with the
\textcolor{blue}{blue} subword of $U$, and, therefore, any model $\sigma$ must assign values
to $s_x$ and $\overline{s_x}$ such that $\textcolor{red}{0011} \neq \textcolor{red}{00s_x \overline{s_x}}$
and $\textcolor{red}{0011} \neq \textcolor{red}{s_x \overline{s_x} 11}$. Therefore, for $\sigma$ to be
a model of $\psi$, it must hold that $\sigma(s_x) \neq \sigma(\overline{s_x})$ for any pair of string variables
$s_x$ and $\overline{s_x}$ corresponding to the propositional variable $x$. Therefore, the word
$W_{x \neq \overline{x}}$ enforces that exactly one of $x$ and $\neg x$ holds. We obtain
the following theorem.

\begin{theorem}
    The satisfiability of a single $\notcontains$ predicate is \clNPhard.
\end{theorem}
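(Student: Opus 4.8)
The plan is to establish \clNPhard{}ness by a polynomial reduction from 3-SAT, reusing the gadget idea sketched above. Given a 3-SAT instance $\varphi \defequiv \bigwedge_{1 \le i \le n} \clause_i$ over propositional variables $x_1, \dots, x_N$, I would, for each variable $x$, introduce two string variables $s_x$ and $\overline{s_x}$ each constrained by a two-state DFA to the language $\{0,1\}$, so that a string assignment simply chooses a bit for every literal (reading the bit $1$ as ``true''). A clause $l_1 \lor l_2 \lor l_3$ is encoded as the concatenation $\enc(\clause) = s_1 s_2 s_3$ of the corresponding literal variables, and the output of the reduction is the single instance $\psi \defequiv \notcontains(U, W)$, where $U = \textcolor{blue}{000}\,\textcolor{red}{0011}$ is a constant word and $W = W_{\mathrm{SAT}}\,\sep\,W_{\mathrm{x \neq \overline{x}}}$ with $W_{\mathrm{SAT}}$ the $\sep$-separated concatenation of the blocks $\textcolor{blue}{\enc(\clause_i)}\,\textcolor{red}{0011}$ and $W_{\mathrm{x \neq \overline{x}}}$ the $\sep$-separated concatenation of the words $w_{x,\overline{x}} \defequal \textcolor{blue}{000}\,\textcolor{red}{00 s_x \overline{s_x}}\,\sep\,\textcolor{blue}{000}\,\textcolor{red}{s_x \overline{s_x} 11}$, where $\sep$ is a fresh alphabet symbol not occurring in $U$. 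The size of $\psi$ is clearly polynomial in $|\varphi|$.

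The correctness proof is an alignment analysis on the (universally quantified) offset $\offset$ of $U$ inside $W$. The key observation is that $U$ contains no $\sep$, so any alignment of $U$ that overlaps an occurrence of $\sep$, and likewise any alignment that overflows the right end of $W$, automatically contributes a mismatch and is thus harmless; it therefore suffices to analyse the alignments that place $U$ entirely inside one $\sep$-free block of $W$. For a block $\textcolor{blue}{\enc(\clause_i)}\,\textcolor{red}{0011}$ of $W_{\mathrm{SAT}}$, the fixed (red) suffix of $U$ matches the fixed suffix of the block, so $U$ fails to embed there exactly when the (blue) prefix $\textcolor{blue}{000}$ of $U$ differs from $\enc(\clause_i)$, i.e.\ when not all three literal variables of $\clause_i$ are assigned $0$, which is precisely ``$\clause_i$ is satisfied''. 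For a block $w_{x,\overline{x}}$ of $W_{\mathrm{x \neq \overline{x}}}$, the same padding trick forces, in the two relevant sub-alignments, $\textcolor{red}{0011} \neq \textcolor{red}{00 s_x \overline{s_x}}$ and $\textcolor{red}{0011} \neq \textcolor{red}{s_x \overline{s_x} 11}$, which together are equivalent to $s_x \neq \overline{s_x}$; hence a model of $\psi$ assigns $x$ and $\neg x$ complementary values. Combining the two, $\psi$ is satisfiable iff there is a Boolean assignment, consistent across negations, that satisfies every clause, i.e.\ iff $\varphi$ is satisfiable; and since satisfiability of $\psi$ lies in the fragment shown decidable in \clNEXPTIME{} by \cref{thm:notContainsComplexity}, the reduction also sharply separates a single $\notcontains$ from a single disequality, which is in \clP{} by \cref{thm:ptime}.

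The part requiring the most care, and the main obstacle, is making this alignment analysis exhaustive rather than merely illustrative. I would carefully enumerate every maximal $\sep$-free window of $W$ whose length is at least $|U|$, verify the boundary windows (the very first block of $W_{\mathrm{SAT}}$, the last block of $W_{\mathrm{x \neq \overline{x}}}$, and the windows where $U$ crosses the end of $W$), and check that inside a single $w_{x,\overline{x}}$ block there is no unintended alignment, other than the two ``designed'' ones, that could accidentally embed $U$ and thereby defeat a would-be model. The fixed padding ($\textcolor{red}{0011}$ to the right of $U$, the $\textcolor{red}{00\dots}$/$\textcolor{red}{\dots 11}$ scaffolding around $s_x \overline{s_x}$) is engineered precisely so that in every admissible window the red parts line up and only the blue content is free; once this bookkeeping is verified case by case, the equisatisfiability of $\varphi$ and $\psi$, and hence \clNPhard{}ness of a single $\notcontains$ predicate, follows.
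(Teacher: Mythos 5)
Your proposal is correct and follows essentially the same route as the paper: a polynomial reduction from 3-SAT using the very same gadget $\notcontains(\textcolor{blue}{000}\,\textcolor{red}{0011},\, W_{\mathrm{SAT}} \sep W_{x\neq\overline{x}})$, with clause blocks enforcing satisfaction and the $w_{x,\overline{x}}$ blocks enforcing $s_x \neq \overline{s_x}$, and the same alignment/case analysis (alignments touching $\sep$ or overflowing being trivially mismatched). The extra bookkeeping you flag is exactly the level of detail the paper's own argument leaves informal, so nothing essential is missing.
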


\fi

\end{document}